\newtheorem{theorem}{Theorem}[section]
\newtheorem{proposition}[theorem]{Proposition}
\newtheorem{lemma}[theorem]{Lemma}
\newtheorem{corollary}[theorem]{Corollary}
\theoremstyle{definition}
\newtheorem{definition}[theorem]{Definition}
\theoremstyle{remark}
\newtheorem{remark}[theorem]{Remark}
\DeclarePairedDelimiter\abs{\lvert}{\rvert}
\DeclarePairedDelimiterX\scalprod[2]{\langle}{\rangle}{#1,#2}
\DeclarePairedDelimiter\norm{\lVert}{\rVert}
\DeclareMathOperator{\tr}{tr} 
\DeclareMathOperator{\Tr}{Tr}	 
\DeclareMathOperator{\Aut}{Aut}
\newcommand\cdotaction{\mathbin{\cdot}}
\newcommand\dd{\text{\textup{d}}} 
\newcommand\regular{\text{reg}}
\newcommand\FourierDR{\mathcal{E}_{\exp}}
\newcommand{\bbbone}{{\text{\usefont{U}{dsss}{m}{n}\char49}}}
\newcommand{\vc}{\vcentcolon =} 
\newcommand{\Trw}{\Tr_\omega}	  
\newcommand\gR{{\mathbb R}}
\newcommand\hgR{\widehat{\gR}}
\newcommand\gC{{\mathbb C}}
\newcommand\gN{{\mathbb N}}
\newcommand\gZ{{\mathbb Z}}
\newcommand\gT{{\mathbb T}}
\newcommand\gS{{\mathbb S}}
\newcommand{\lieG}{\mathfrak{g}}
\newcommand{\frakM}{\mathfrak{M}}
\newcommand{\frakN}{\mathfrak{N}}
\newcommand\algA{\mathcal{A}}
\newcommand\caH{\mathcal{H}}
\newcommand\caK{\mathcal{K}}
\newcommand\caD{\mathcal{D}}
\newcommand\caF{\mathcal{F}}
\newcommand\caL{\mathcal{L}}
\newcommand\caS{\mathcal{S}}
\newcommand\caP{\mathcal{P}}
\newcommand{\ualpha}{\underline{\alpha}}
\newcommand{\ubeta}{\underline{\beta}}
\newcommand\fab{{\hat{f}}}
\newcommand\gab{{\hat{g}}}
\newcommand{\ualphaab}{\underline{\hat{\alpha}}}
\newcommand{\ubetaab}{\underline{\hat{\beta}}}
\newcommand{\bbboneab}{{\hat{\bbbone}}}
\newcommand\psiab{\widehat{\psi}}
\newcommand{\astab}{\mathbin{\hat{\ast}}}
\newcommand\fabeta{{\tilde{f}}}
\newcommand\gabeta{{\tilde{g}}}
\newcommand{\bbboneabeta}{{\tilde{\bbbone}}}
\newcommand\psiabeta{\widetilde{\psi}}
\newcommand{\astabeta}{\mathbin{\tilde{\ast}}}
\newcommand{\falpbeta}{{\check{f}}}
\newcommand{\galpbeta}{{\check{g}}}
\newcommand{\ualphaalpbeta}{\underline{\check{\alpha}}}
\newcommand{\ubetaalpbeta}{\underline{\check{\beta}}}
\newcommand{\bbbonealpbeta}{{\check{\bbbone}}}
\newcommand{\astalpbeta}{\mathbin{\check{\ast}}}
\newcommand\hphi{\hat{\phi}}
\newcommand\tildpsi{\widetilde{\psi}}
\newcommand\tildvarphi{\widetilde{\varphi}}
\newcommand\tildphi{\widetilde{\phi}}
\numberwithin{equation}{section}
\begin{document}

\title{$\kappa$-deformation, affine group and spectral triples}
\author{B. Iochum, T. Masson\\
Centre de Physique Th\'eorique\\
Aix-Marseille Univ., CNRS UMR 7332, Univ. Sud Toulon Var\\
Case 907 - Campus de Luminy\\
F-13288 Marseille Cedex 9\\[2ex]
A. Sitarz%
\thanks{supported by the grant from The John Templeton Foundation}\\
Institute of Physics, Jagiellonian University,\\
Reymonta 4, 30-059 Krak\'ow, Poland \\
and \\
Copernicus Center for Interdisciplinary Studies \\
ul. S\l{}awkowska 17, 31-016 Krak\'ow, Poland
}
\date{}
\maketitle

\begin{center}
Dedicated to S.L. Woronowicz on his 70th birthday
\end{center}

\bigskip
\begin{abstract}
A regular spectral triple is proposed for a two-dimensional $\kappa$-deformation. It is based on the naturally associated affine group $G$, a smooth subalgebra of $C^*(G)$, and an operator $\caD$ defined by two derivations on this  subalgebra. While $\caD$ has metric dimension two, the spectral dimension of the triple is one. This bypasses an obstruction described in \cite{IochMassSchu11a} on existence of finitely-summable spectral triples for a compactified $\kappa$-deformation.
\end{abstract}

\newpage

\section{Introduction}

In 1991, Lukierski, Ruegg, Nowicki and Tolstoi \cite{LukiRuegNowi91a,LukiNowiRueg92a} produced a Hopf algebraic deformation of the universal enveloping algebra of the Poincar\'e Lie algebra,  which falls into the general scheme of deformations of the Lorentz group studied and classified in \cite{PodlWoro90a, WoroZakr94a}. An interesting feature of this deformation was that the deformation parameter, called $\kappa$, is not dimensionless and in physical models could be related to a length or energy scale \cite{Maji88a}. This Hopf algebra found later a natural interpretation as a symmetry of noncommutative space, which was interpreted as the $\kappa$-deformation of Minkowski space \cite{MajiRueg94a,Zakr94a}. This model of a noncommutative space has been used in physics for different purposes, see for instance \cite{AmelGubiMarc09a, AmelMarcPran09a, DaszLukiWoro09a, FreiLivi05a, LukiRuegZakr95a}. On the more mathematical side, the $\kappa$-deformed symmetries were used to study bicovariant noncommutative differential calculi on the $\kappa$-Minkowski space \cite{Sita95a}. More recently, using some quantization maps the star product formulation of the $\kappa$-Minkowski algebra have been presented \cite{DabrPiac09a, DurhSita11a, GayrGracVari07a}.  

Since the $\kappa$-deformed Minkowski space is (as an algebra) an enveloping algebra of a solvable Lie algebra, there is a natural Lie group $G$, which appears behind the $\kappa$-Minkowski \cite{Agos07a, DabrPiac09a, DurhSita11a, IochMassSchu11a}. We shall recall later the construction of $G$, which appears to be the real affine group.

The main question considered here is whether a $\kappa$-deformed space is a noncommutative geometry in the sense of Connes \cite{Conn94a, ConnMarc08a}. So far, apart from some early attempts, \cite{DAnd06a,GayrGracVari07a}, this question was investigated in \cite{IochMassSchu11a} for a compactified version of the $\kappa$-deformation, yielding, through an incursion in number theory and dynamical systems, a kind of no-go result. The non-existence of finitely summable spectral triples for the compactified version of the $\kappa$-deformation, which was related to the group algebra of the Baumslag-Solitar groups was, in fact, a consequence of the no-go theorem of Voiculescu. The negative result was valid, however, only for representations quasi-equivalent to the left regular representation of the algebra, thus leaving a possibility for other constructions  \cite{IochMassSchu11a}.

Although the case of the discrete group (like Baumslag-Solitar group and its group $C^\ast$-algebra) has no direct bearings on the case considered here ($C^\ast$-algebra of a Lie group), we show that for the latter there is a possibility to bypass the potential obstruction and construct a candidate for a spectral triple with a smooth subalgebra of $C^\ast(G)$. The Dirac operator is associated to two derivations obtained from two one-parameter groups of automorphisms of $C^\ast(G)$. But, even if the construction looks like those of the noncommutative torus, there is here a drop of spectral dimension. Such a phenomenon has been already observed in Moyal harmonic deformations \cite{GayrWulk11a}. 

In the $\kappa$-deformation of a $n$-dimensional space, the space-time coordinates satisfy the following solvable Lie-algebraic relations:
\begin{align}
\label{commutationkappa}
[x^0,\,x^j]\vc \tfrac{i}{\kappa} \,x^j, \quad [x^j,\,x^k]=0, \quad j,k=1,\dots,n-1.
\end{align}
Here we assume $\kappa>0$.

Using the Baker--Campell--Hausdorff formula, one gets \cite[eq. (2.6)]{KosiMaslLuki98a}
\begin{equation*}
e^{i c_\mu x^\mu}=e^{i c_0 x^0}\, e^{i c'_j \,x^j} \text{ where } 
c'_j\vc \tfrac{\kappa}{c_0}\,(1-e^{-c_0/\kappa}) c_j.
\end{equation*}
Actually, if $[A,B]=sB$, we have the ``braiding identity''
\begin{equation}
\label{braiding}
 e^A \, e^B=e^{(\exp s)B}\,e^A.
\end{equation}
If we want to realize the $x^\mu$'s as selfadjoint (not necessarily bounded) operators on some Hilbert space, the natural way is to pass to the unitaries:
\begin{equation*}
U_\omega \vc e^{i\omega x^0} \text{ and }V_{\vec{k}}\vc e^{-i\sum_{j=1}^{n-1} k_jx^j}
\end{equation*}
with $\omega,k_j \in \gR$, which generate the $\kappa$-Minkowski group considered in \cite{Agos07a}.

If $W(\vec{k},\omega)\vc V_{\vec{k}}\,U_\omega$, one gets as in \cite[eq. (13)]{Agos07a}
\begin{equation}
\label{grouplaw}
 W(\vec{k},\omega) \, W( \vec{k'},\omega')=W(e^{-\omega/\kappa} \vec{k'}+\vec{k},\omega + \omega'),
\end{equation}
which is nothing else but a presentation of a group law, which, for $n=2$, describes the semidirect product of two abelian groups:
\begin{equation}
\label{groupkappa}
G_\kappa\vc \gR \ltimes_\alpha \gR,
\end{equation}
where $\alpha$ is the following group homomorphism, $\alpha: \gR \to \Aut(\gR)$:
\begin{equation*}
\alpha(\omega)k \vc e^{-\omega/\kappa}k, \text{ for any $\omega, k\in \gR$}.
\end{equation*}
$G_\kappa \simeq \gR^*_+ \ltimes \gR$ is the affine group on the real line. From now on we shall consider only the case $n=2$ (as all difficulties concentrate around this case) and, moreover, we can take $\kappa=1$, as one can freely rescale this parameter change after rescaling $x^0$. So we choose $G\vc G_1$.

The paper is organized as follows. In section~\ref{C*-algebra} we consider the $C^\ast$-algebra $C^\ast(G)$ of the affine group $G$. We describe elements in $C^\ast(G)$ as functions for various choices of variables, the original variables of the group $G$ and their Fourier transforms, and we exhibit a natural trace. In section~\ref{smooth algebra} we choose a dense subalgebra $\algA$ of $C^\ast(G)$ which is compatible with two derivations obtained from one-parameter groups of automorphisms of $C^\ast(G)$. Section~\ref{Representations} is devoted to the irreducible representations. In relation to Plancherel formula, we characterize the represented elements of $\algA$ which are Hilbert-Schmidt or trace-class operators on some Hilbert space $\caH \simeq L^2(\gR)$, showing also that the two derivations implement the operator of position and momentum of one-dimensional quantum mechanics. In section~\ref{Spectral triple} we produce explicitly a spectral triple which is regular for a chosen operator $\caD$ such that $\caD^2$ is essentially the Hamiltonian of a one-dimensional harmonic oscillator. 

The fact that $G$ is a not a liminal group plays an important role in our construction of a spectral triple of dimension 1: there are a lot of trace-class elements in the represented algebra, but there are also many others with non-zero and finite Dixmier traces. Our main result Theorem \ref{spectraldim} shows that these values of Dixmier traces are proportional to a (non-faithful) trace on $C^\ast(G)$.

\section{\texorpdfstring{The $C^\ast$-algebra}{The C*-algebra}}
\label{C*-algebra}

We consider the crossed product group $G = \gR \ltimes \gR$ with group law
\begin{equation*}
(a,b) \cdotaction (a',b') \vc  (a + a', b + e^{-a} b').
\end{equation*}
The unit element is $(0,0)$ and the inverse is $(a,b)^{-1}=(-a, - e^a b)$. The left Haar measure on $G$ is given by $\dd \mu(a,b) \vc e^a \,\dd a \dd b$, while the right Haar measure is $\dd \mu_R(a,b) \vc \dd a \dd b$. This group is not unimodular, and the modular function is $\Delta(a,b) \vc e^{a}$. 

The group $G$ is the affine $ax+b$ group. It is connected, simply connected and exponential. Since it is solvable, and thus amenable, one has $C^\ast_\text{red}(G) = C^\ast(G)$. 

In the following we will mention this algebra as $C^\ast(G)$. 

Notice that other versions of the affine group $ax+b$ over the real numbers are studied in the literature. They can slightly differ from the present one. For instance, the affine group studied in \cite{EymaTerp79a} is not connected, and it contains $G$ as the connected component to the unit element. The group $G$ or its companions have been widely studied \cite{Choi12a, El-H11a, Eyma64a, Fara12a, Khal74a, Rose76a, Schm10a, Zep75a, Diep96a, Sudo00a} and several uses  appeared in physics \cite{DabrPiac09a, DurhSita11a, GayrGracVari07a, Huyn82a,  Piec03a, Moua03a, Zeit10a}.

By construction, the convolution algebra is defined over the space of $L^1(G, \dd\mu)$-functions with the following product:
\begin{equation*}
(\fab \astab \gab) (a,b) \vc  \int_G \dd \mu(a',b')\, \fab(a',b')\, \gab \big( (a',b')^{-1} \cdotaction (a,b)\big), \text{ for any $\fab,\gab \in L^1(G, \dd\mu)$},
\end{equation*}
which, for the group considered, takes the following explicit expression
\begin{align}
(\fab \astab \gab) (a,b) & = \int_{\gR^2} \dd a' \dd b' e^{a'} \fab(a',b')\, \gab(a - a', e^{a'}(b-b')) \label{eq-prod-ab}\\
&=
\int_{\gR^2} \dd a' \dd b' \fab(a-a', b- e^{-(a-a')} b')\, \gab(a', b'). \nonumber
\end{align}
The involution is defined by $\fab^\ast(a,b) \vc  \Delta(a,b)^{-1} \overline{\fab((a,b)^{-1})}$, so
\begin{equation}
\label{eq-involution-ab}
\fab^\ast(a,b) = e^{-a} \,\overline{\fab(-a, -e^a b)}.
\end{equation}

The completion of the space $L^1(G, \dd\mu)$ with respect to the norm obtained from the left regular representation on $L^2(G,\dd\mu)$ gives us the reduced $C^\ast$-algebra, which coincides with the group $C^\ast$-algebra $C^\ast(G)$. The algebra $C^\ast(G)$ is generated by the dense involutive subalgebra $\caD(G)_\ast \simeq \caD(\gR^2)_\ast$ of compactly supported smooth functions on $G$. 

The usual notation $\caD(M)$ designates the space of compactly supported smooth functions on a smooth manifold $M$, while the subscript $\ast$ in $\caD(G)_\ast$ is used to specify the convolution product on $\caD(G)$ given in \eqref{eq-prod-ab}, in order to distinguish it from the pointwise product.

\medskip
Well known results on structure of $C^\ast$-algebras of semidirect product groups \cite{Will07a} show that $C^\ast(G) \simeq \gR \ltimes C^\ast(\gR)$ where the action of $\gR$ on $C^\ast(\gR)$ is induced by the action of $\gR$ on $\gR$, together with some correction factor, which appears when the Haar measure on $\gR$ (the second one) is not invariant under the action of $\gR$ (the first one). The construction of the $C^\ast$-algebra of a semidirect product group, as given in \cite[Prop.~3.11]{Will07a}, gives the same product as in \eqref{eq-prod-ab}, while the involution is
\begin{equation*}
\fab^\ast(a,b) = e^{a} \,\overline{\fab(-a, -e^a b)},
\end{equation*}
which is different from \eqref{eq-involution-ab}. Of course, both presentations are equivalent, as one can easily see on the level of compactly supported smooth functions: the $C^\ast$-algebra $\gR \ltimes C^\ast(\gR)$ is generated by the involutive subalgebra $\caD(\gR, \caD(\gR)_\ast) \simeq \caD(\gR^2)$ of compactly supported smooth functions in the first variable $a \in \gR$ with values in the space of compactly supported smooth functions on the second variable $b \in \gR$, and the map
\begin{align*}
\fab_\ltimes(a,b) = e^{a} \fab(a,b),
\end{align*}
establishes a natural isomorphism of involutive algebras between $\caD(G)_\ast \ni \fab$ and $\fab_\ltimes \in \caD(\gR, \caD(\gR)_\ast)$, which extends to an isomorphism on the $C^\ast$-algebras. 

In the following, we will denote by $f \in C^\ast(G)$ an element of this $C^\ast$-algebra and we will use some explicit presentations of $f$ as functions of different pairs of variables. The first pair of variables is $(a,b) \in \gR^2$ as before, and the corresponding function is denoted by $(a,b) \mapsto \fab(a,b)$. This convention will also be used for subalgebras 
of $C^\ast(G)$.

\medskip
By Fourier transform, the commutative $C^\ast$-algebra $C^\ast(\gR)$ is isomorphic to the $C^\ast$-algebra $C_0(\hgR)$ of continuous functions on $\hgR \simeq \gR$ (the dual group of $\gR$) vanishing at infinity. The $C^\ast$-algebra $\gR \ltimes C_0(\hgR)$ is generated by functions $\fabeta_\ltimes \in \caD(\gR, \FourierDR(\hgR))$, where $\FourierDR(\hgR)$ designates the algebra of functions on $\hgR$ for pointwise multiplication obtained as the Fourier transform of $\caD(\gR)_\ast$. We will use the variable $\beta \in \hgR$. The algebra $\FourierDR(\gR)$ can be characterized as follows \cite[Thm~7.2.2]{Stri94a}:

\begin{proposition}
\label{prop-fouriersmoothcompactfunctions}
A function $\phi$ is in $\FourierDR(\gR)$ if and only if $x \mapsto \phi(x)$ is an entire analytic function on $\gR$ rapidly decreasing at infinity and such that the analytic function $ z \in \gC \mapsto \phi(z)$ is of exponential type: $\exists a > 0$, $\exists c >0$, such that $\abs*{\phi(z)} \leq c \, e^{a\abs*{\Im(z)}},\,\forall z  \in \gC$.
\end{proposition}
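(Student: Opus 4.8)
\emph{Proof plan.} This is essentially the Paley--Wiener theorem, so I would establish the two implications separately, using the normalisation $\phi(z) = \int_\gR f(t)\, e^{-izt}\,\dd t$ for $f \in \caD(\gR)$ (with $z$ replacing the dual variable $\beta$). For the direct implication, suppose $f$ is supported in $[-a,a]$. Then $\int_{-a}^{a} f(t)\, e^{-izt}\,\dd t$ converges for every $z \in \gC$ and, since the integrand is holomorphic in $z$, jointly continuous, and integrated over a compact set, it defines an entire function of $z$ (Morera's theorem, or a uniform-on-compacta limit of Riemann sums). On the support one has $\abs{e^{-izt}} = e^{t\,\Im z} \leq e^{a\abs{\Im z}}$, whence $\abs{\phi(z)} \leq \norm{f}_1\, e^{a\abs{\Im z}}$, which is the exponential-type estimate; and integrating by parts $N$ times (no boundary terms, by compact support) gives $(ix)^N \phi(x) = \int f^{(N)}(t)\, e^{-ixt}\,\dd t$, so $\abs{x}^N\abs{\phi(x)} \leq \norm{f^{(N)}}_1$ for all $N$, i.e. $\phi$ is rapidly decreasing on $\gR$.

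For the converse, given $\phi$ entire of exponential type $a$ and rapidly decreasing on $\gR$, I would define $f$ by the inverse Fourier transform $f(t) = \tfrac{1}{2\pi}\int_\gR \phi(x)\, e^{ixt}\,\dd x$; rapid decrease makes this absolutely convergent, and differentiation under the integral sign (using that $\abs{x}^k\abs{\phi(x)}$ is integrable for every $k$) shows $f \in C^\infty(\gR)$. By Fourier inversion $\phi = \widehat{f}$, so everything reduces to proving that $f$ is supported in $[-a,a]$. The natural idea is to shift the contour $\gR \rightsquigarrow \gR + iy$ in the formula for $f(t)$, and to exploit the factor $\abs{e^{i(x+iy)t}} = e^{-yt}$ to force $f(t) = 0$ for $t > a$ by letting $y \to +\infty$ (and symmetrically for $t < -a$, letting $y \to -\infty$). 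The one genuine obstacle is that the hypotheses only control $\phi$ on $\gR$ itself, not on horizontal lines, so a priori neither the contour shift nor the ensuing estimate is legitimate.

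I would circumvent this by a Gaussian regularisation. Set $\phi_\epsilon(z) \vc \phi(z)\, e^{-\epsilon z^2}$, which is entire and satisfies $\abs{\phi_\epsilon(x)} \leq \abs{\phi(x)}$ on $\gR$, so the corresponding inverse transforms $f_\epsilon$ converge pointwise to $f$ as $\epsilon \to 0$ by dominated convergence. On the line $\Im z = y$ one has $\abs{\phi_\epsilon(x+iy)} \leq c\, e^{a\abs{y} + \epsilon y^2}\, e^{-\epsilon x^2}$, which is integrable in $x$ with integral $\leq C\,\epsilon^{-1/2}\, e^{a\abs{y} + \epsilon y^2}$; the Gaussian decay in $\Re z$ also makes the vertical segments at $\Re z = \pm R$ vanish as $R \to \infty$, so Cauchy's theorem gives $f_\epsilon(t) = \tfrac{e^{-yt}}{2\pi}\int_\gR \phi_\epsilon(x+iy)\, e^{ixt}\,\dd x$ and hence $\abs{f_\epsilon(t)} \leq C'\,\epsilon^{-1/2}\, e^{(a-t)y + \epsilon y^2}$ for every $y \geq 0$. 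Fixing $t > a$ and choosing $y = \epsilon^{-1/2}$, the right-hand side becomes $C'\,\epsilon^{-1/2}\, e^{(a-t)\epsilon^{-1/2} + 1} \to 0$ as $\epsilon \to 0^+$ (the exponential beats $\epsilon^{-1/2}$), so $f(t) = 0$; the case $t < -a$ is identical after shifting downwards. Thus $f$ is supported in $[-a,a]$, i.e. $f \in \caD(\gR)$ and $\phi \in \FourierDR(\gR)$. As an alternative to the Gaussian trick, one can first use a Phragm\'en--Lindel\"of argument on the functions $z^N\phi(z)\, e^{\pm iaz}$ in the upper and lower half-planes to deduce that $\phi$ is in fact rapidly decreasing on every horizontal line (with constants growing like $e^{a\abs{y}}$), after which the contour shift is routine; or one may simply invoke the Paley--Wiener--Schwartz theorem together with the standard dictionary between smoothness of $f$ and rapid decay of $\widehat f$, as in \cite{Stri94a}.
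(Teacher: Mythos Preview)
Your proof is correct; this is a standard and clean presentation of the Paley--Wiener theorem, and the Gaussian regularisation you use to justify the contour shift is a perfectly valid way to get around the lack of a priori decay off the real axis. The paper itself offers no proof at all: the proposition is stated with a bare citation to \cite[Thm~7.2.2]{Stri94a}, which is precisely the reference you also invoke at the end of your write-up as an alternative. So there is nothing to compare against --- your argument is strictly more than what the paper provides, and correctly so.
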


\noindent The rapidly decreasing property of $\phi$ at infinity corresponds to the smoothness of its Fourier transform, while the exponential type property corresponds to the compact support of its Fourier transform. In particular,  $\FourierDR(\gR) \subset \caS(\gR)$.

To $\fab_\ltimes \in \caD(\gR, \caD(\gR)_\ast) \subset \gR \ltimes C^\ast(\gR)$ corresponds a function $\fabeta \in \caD(\gR, \FourierDR(\hgR)) \subset \gR \ltimes C_0(\hgR)$ given by
\begin{equation*}
\fabeta(a,\beta) \vc \int_{\gR} \dd b\, \fab_\ltimes(a,b) \,e^{i b \beta},
\end{equation*}
so, for any $\fab \in \caD\big(\gR, \caD(\gR)_\ast\big) \subset C^\ast(G)$,
\begin{equation}
\label{eq-deffabetafromfab}
\fabeta(a,\beta) = e^a \int_{\gR} \dd b\, \fab(a,b) \,e^{i b \beta}
\end{equation}
with inverse transformation given by
\begin{equation*}
\fab(a,b) = \tfrac{1}{2\pi}\, e^{-a}\int_{\hgR} \dd \beta\, \fabeta(a,\beta) \,e^{-i b \beta}.
\end{equation*}

The induced product of $\fabeta, \gabeta \in \caD(\gR, \FourierDR(\hgR))$ is 
\begin{equation}
\label{eq-productabeta}
(\fabeta \astabeta \gabeta)(a,\beta) = \int_\gR \dd a'\; \fabeta(a',\beta)\, \gabeta(a-a', e^{-a'} \beta),
\end{equation}
with involution
\begin{equation}
\label{eq-inv-productabeta}
\fabeta^{\,\ast}(a,\beta) = \overline{\fabeta(-a, e^{-a} \beta)}.
\end{equation}
The factor $e^{a}$ in \eqref{eq-deffabetafromfab} is convenient to simplify this last relation.

The couple $(a, \beta) \in \gR \times \hgR$ is the second pair of variables used 
present an abstract element $f \in C^\ast(G)$ as a function denoted by $(a, \beta) \mapsto \fabeta(a, \beta)$.

\medskip
The induced action of $\gR$ on $C_0(\hgR)$ defining $\gR \ltimes C_0(\hgR)$ is given by $\rho_a(\phi)(\beta) = \phi(e^{-a} \beta)$ for any $\phi \in C_0(\hgR)$, $a \in \gR$ and $\beta \in \hgR$. Let us introduce two copies $C_0^\nu(\gR)$, labeled by $\nu \in \{-, +\}$, of the algebra of continuous functions on $\gR$ vanishing at infinity. We denote by $u \in \gR$ the variable for the functions in $C^\nu_0(\gR)$ and, for $\nu \in \{-, +\}$, we associate to a function $k_\nu \in C^\nu_0(\gR)$ the following function in $C_0(\hgR)$ of the variable $\beta$:
\begin{equation*}
\phi(\beta) \vc \begin{cases}
k_\nu(u) &\text{for $\beta = \nu e^{-u}$}, \\
0    &\text{otherwise.}
\end{cases}
\end{equation*}
Observe that necessarily $\phi(0) = 0$ for any $\nu$.

For each $\nu \in \{-, +\}$, the above map establishes an algebra morphism, so that $C^\nu_0(\gR) \subset C_0(\hgR)$ is an sub $\ast$-algebra, which, moreover, is preserved by the action $\rho$ of $\gR$. This action, expressed in the variable $u$, takes the explicit form $\rho_a(k_\nu)(u) = k_\nu(u+a)$, which is the regular representation of the abelian group $\gR$ on functions on $\gR$. The two crossed product subalgebras $\gR \ltimes_\regular C^\nu_0(\gR) \subset \gR \ltimes C_0(\hgR)$ are isomorphic to $\caK(L^2(\gR))$, the algebra of compact operators on $L^2(\gR)$ (see \cite{Will07a} for instance) and we denote them by $\caK_\nu \vc \gR \ltimes C^\nu_0(\gR)$. 

The direct sum $\caK_{-} \oplus \caK_{+}$ is an ideal in $\gR \ltimes C_0(\hgR)$ of functions in variables $a$ and $\beta$ which vanish at $\beta = 0$. The quotient of $\gR \ltimes C_0(\hgR)$ by $\caK_{-} \oplus \caK_{+}$ could be, on the other hand, identified with $C^\ast(\gR)$ for the variable $a \in \gR$, and the quotient map is $\fabeta \mapsto \fabeta_{| \beta = 0}$. This is summarized in the short exact sequence (see for instance \cite{Diep96a,Sudo00a})
\begin{equation}
\label{eq-secC*}
\xymatrix@1@C=15pt{{0}\, \ar[r] &\, {\caK_{-} \oplus \caK_{+}}\, \ar[r] & \,{C^\ast(G)}\, \ar[r] & \,{C^\ast(\gR)}\, \ar[r] &\, {0}}.
\end{equation}

\section{The smooth algebra}
\label{smooth algebra}

Using previous notations, we consider the following dense 
$\ast$-subalgebra of $C^\ast(G)$:
\begin{align*}
\algA &\vc \caD(G)_\ast, 
\\
&\phantom{\vcentcolon}= \caD(\gR, \caD(\gR)_\ast) \text{, functions $\fab$ presented in variables $(a,b)$},
\\
&\phantom{\vcentcolon}= \caD(\gR, \FourierDR(\hgR)) \text{, functions $\fabeta$ presented in variables $(a,\beta)$}.
\end{align*}

An abstract element $f \in \algA$ will be presented as a function $\fab$ or a function $\fabeta$, whenever it is more convenient to use one notation or another, bearing in mind that the transformation (\ref{eq-deffabetafromfab}) allows us to pass easily between both notations. In some computations in section \ref{subsec-metricdim}, we will use the following result, which relies on the definition $\algA \vc \caD(G)_\ast$: 

\begin{proposition}[{\cite[Th\'eor\`eme~3.1]{DixmMall78a}}]
\label{eclatement}
Any $f \in \algA$ can be presented as a finite sum of elements $\sum_{i=1}^N g_i \ast h_i$ for $g_i, h_i \in \algA$.
\end{proposition}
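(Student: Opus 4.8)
This is the Dixmier--Malliavin factorisation theorem for the Lie group $G$, and the plan is to follow its two reduction steps and then isolate the one-dimensional heart of the argument. Since convolution is compatible with left translation, $L_{g}(u\ast v)=(L_{g}u)\ast v$, one may first localise: choosing a finite smooth partition of unity $1=\sum_{j}\chi_{j}$ adapted to the support of $f$ and to a cover of $G$ by small charts, one writes $f=\sum_{j}\chi_{j}f$, and a factorisation of each left-translate $L_{g_{j}}(\chi_{j}f)$ supported near the unit gives back, after applying $L_{g_{j}^{-1}}$ to the first factors, a factorisation of $\chi_{j}f$. So it is enough to factor functions supported in an arbitrarily small neighbourhood of $(0,0)$, where $G$ is simply $\gR\times\gR$ (in fact globally so, $G$ being exponential) with the explicit law and convolution \eqref{eq-prod-ab} of Section~\ref{C*-algebra}: in the first variable this is an ordinary convolution on $\gR$, carrying along the smooth weight $e^{a'}$ and the dilation $b\mapsto e^{a'}b$ acting on the second variable.

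Next I would peel off the coordinates one at a time. Viewing $f$ as a smooth compactly supported function of $a$ with values in the Fréchet space $\caD_{K}(\gR)$ of functions of $b$ supported in a fixed compact $K$, I would apply the one-variable case to express it as a finite sum of convolutions in $a$, then repeat the procedure in the variable $b$; this is Dixmier--Malliavin's induction on $\dim G$, and it goes through unchanged here because the weight $e^{a'}$ and the dilation of the $b$-variable appearing in \eqref{eq-prod-ab} are smooth and invertible and hence harmless for the construction. Each step produces honest convolution products $g\ast h$ with $g,h\in\algA$, so the whole statement reduces to the case $G=\gR$, i.e.\ to factorisation in $\caD(\gR)$ with convolution, possibly with parameters in a Fréchet space.

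The one-variable statement is the substantive part and the main obstacle. Through the Fourier transform and the Paley--Wiener description recalled in Proposition~\ref{prop-fouriersmoothcompactfunctions}, factoring $\varphi\in\caD(\gR)$ becomes a division problem for entire functions of exponential type that are rapidly decreasing on the real axis: one fixes finitely many $g_{1},\dots,g_{m}\in\caD(\gR)$ whose Fourier transforms have no common zero in $\gC$ and satisfy a joint Hörmander-type lower bound $\sum_{i}\abs{\widehat{g_{i}}(z)}\geq c\,(1+\abs{z})^{-N}e^{-N\abs{\Im z}}$, and then solves $\widehat{\varphi}=\sum_{i}\widehat{g_{i}}\,\widehat{h_{i}}$ with every $\widehat{h_{i}}$ again entire of exponential type and rapidly decreasing, i.e.\ $h_{i}\in\caD(\gR)$. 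The $g_{i}$ themselves are manufactured as convergent infinite convolution products of bump functions with rapidly shrinking supports and rapidly decaying Fourier transforms, a Borel-type construction ensuring that the limit is still smooth and compactly supported; the delicate step is the division, keeping the quotients smooth and compactly supported (no growth, no loss of decay), and then making this quantitative control survive the Fréchet-valued, parameter-dependent versions needed for the dimensional induction of the previous paragraph.
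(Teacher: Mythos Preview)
The paper does not prove this proposition; it is a direct citation of the Dixmier--Malliavin factorisation theorem and is used throughout as a black box (in Lemma~\ref{lemma-tracetau}, Proposition~\ref{prop-traceclass}, Lemma~\ref{lem-normtracepiexpD2}, Lemma~\ref{commut}, and the proof of Theorem~\ref{spectraldim}). There is therefore nothing in the paper itself to compare your sketch against.

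Your outline does follow the architecture of the original Dixmier--Malliavin argument --- localisation near the identity, induction on $\dim G$ with Fr\'echet-valued coefficients, reduction to $G=\gR$, and the infinite-convolution construction of the auxiliary bump functions. However, the one-variable step as you phrase it contains a genuine obstruction: you ask for finitely many fixed $g_1,\dots,g_m\in\caD(\gR)$ whose Fourier transforms satisfy a H\"ormander-type lower bound $\sum_i\abs{\widehat{g_i}(z)}\geq c\,(1+\abs{z})^{-N}e^{-N\abs{\Im z}}$, but on the real axis each $\widehat{g_i}$ lies in $\caS(\gR)$ and hence decays faster than every inverse polynomial, so no such bound can hold. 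This is precisely why Dixmier and Malliavin do \emph{not} fix the divisors in advance. Their key lemma instead produces, for each given $f$, a single $\varphi\in\caD(\gR)$ together with an infinite-order operator $P=\sum_n c_n\partial^{2n}$ whose coefficients decay so fast that both $Pf\in\caD(\gR)$ and $P\varphi=\delta_0+\psi$ for some $\psi\in\caD(\gR)$; then $f=\varphi\ast(Pf)-\psi\ast f$ is already a two-term factorisation. On the Fourier side the divisor $\widehat\varphi$ is tailored to the decay of $\widehat f$, which is the compatibility a fixed finite family of $g_i$ cannot supply.
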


\subsection{\texorpdfstring{Relation to the $\kappa$-deformed space}{Relation to the kappa-deformation space}}

In order to relate this algebra $\algA$ to the $\kappa$-deformation space, we introduce a third pair of variables to present $f$ as a function $(\alpha, \beta) \mapsto \falpbeta(\alpha, \beta)$ for $(\alpha, \beta) \in \hgR^2$: starting from $\fabeta \in \caD(\gR, \FourierDR(\hgR))$, one can perform a Fourier transform along the variable $a$ and  define
\begin{equation}
\label{eq-fourierabetaversalpbeta}
\falpbeta(\alpha, \beta) \vc \int_\gR \dd a\, \fabeta(a,\beta) \,e^{i a \alpha}
= \int_{\gR^2} \dd a \dd b\; e^a \fab(a,b) \,e^{i a \alpha} \,e^{i b \beta}.
\end{equation}
The inverse relations are given by
\begin{equation}
\label{eq-changevariablesalphabeta}
\fabeta(a,\beta) = \tfrac{1}{2\pi} \int_{\hgR} \dd  \alpha\, \falpbeta(\alpha, \beta) \,e^{-i a \alpha}
\ \text{ and }\ 
\fab(a,b) = \tfrac{1}{(2\pi)^2}\,e^{-a} \int_{\hgR^2} \dd \alpha \dd \beta \,\falpbeta(\alpha, \beta) \,e^{-i a \alpha} \,
e^{-i b \beta}.
\end{equation}
A straightforward computation shows that the product of $\falpbeta$ and $\galpbeta$ is given by
\begin{equation}
\label{eq-productalphabetavariables}
(\falpbeta \astalpbeta \galpbeta)(\alpha, \beta) = \tfrac{1}{2\pi} \int_{\gR\times\hgR} \dd \omega \dd \alpha'\; \falpbeta(\alpha + \alpha', \beta) \, \galpbeta(\alpha, e^{-\omega} \beta) \, e^{-i \omega \alpha'},
\end{equation}
the involution is
\begin{equation*}
\falpbeta^\ast(\alpha, \beta) = \tfrac{1}{2\pi} \int_{\gR\times\hgR} \dd \omega \dd \alpha'\; 
\overline{\falpbeta(\alpha + \alpha', e^{-\omega}\beta)} \, e^{-i \omega \alpha'}.
\end{equation*}

Using Prop.~\ref{prop-fouriersmoothcompactfunctions}, the algebra $\algA$ is, in this pair of variables, given by
\begin{equation*}
\algA = \FourierDR(\hgR, \FourierDR(\hgR)).
\end{equation*}

\medskip
At $\beta=0$, the product \eqref{eq-productalphabetavariables} is just the pointwise product of functions:
\begin{equation}
\label{eq-betazeroinproduct}
(\falpbeta \astalpbeta \galpbeta)(\alpha, 0) = \falpbeta(\alpha, 0)\, \galpbeta(\alpha, 0).
\end{equation}

Using these new variables, let us define $\ualpha$ and $\ubeta$ as the functions $\ualphaalpbeta(\alpha, \beta) \vc  \alpha$ and $\ubetaalpbeta(\alpha, \beta) \vc  \beta$. Although they are not in the the original domain of the product (\ref{eq-productalphabetavariables}), one can see them as elements of the extended algebra \cite[Definition 3.1]{DurhSita11a}. Their left and right multiplication on elements in $\algA$, when expressed in the $(a,b)$ variables are:
\begin{align}
\label{eq-ualphaab-left-right}
(\ualphaab \astab \fab)(a,b) &= i (\partial_a \fab)(a,b) - i  (\partial_b\, b\fab)(a,b),
&
(\fab \astab \ualphaab)(a,b) &= i (\partial_a \fab)(a,b),
\\
\label{eq-ubetaab-left-right}
(\ubetaab \astab \fab)(a,b) &= i (\partial_b \fab)(a,b),
&
(\fab \astab \ubetaab)(a,b) &= i e^{-a} (\partial_b \fab)(a,b).
\end{align}
This means, that although they are not themselves in $\algA$, $\ualpha$ and $\ubeta$ are elements in the multiplier algebra $M(\algA)$ of $\algA$. According to \eqref{eq-ubetaab-left-right}, $\ubeta$ is in fact in the multiplier algebra of $C_c(\gR, \FourierDR(\hgR))$ (compactly supported continuous functions on the variable $a$). The derivative along the variable $a$ in \eqref{eq-ualphaab-left-right} shows that $\ualpha$ is only in the multiplier algebra of $\caD(\gR, \FourierDR(\hgR))$. This explains in turn our choice for the algebra $\algA$.

Using \eqref{eq-changevariablesalphabeta}, we can formally write $\ualpha$ and $\ubeta$ in the variables $(a,b)$ in terms of the Dirac distribution at $0$ and its derivative as
\begin{align*}
\ualphaab(a,b) &= i\, \delta_0'(a)\,\delta_0(b),\qquad\ubetaab(a,b) = i \, \delta_0(a)\, \delta_0'(b).
\end{align*}
These expressions have to be understood as distributions once inserted in the integral \eqref{eq-prod-ab} which defines the product on $\algA$. 

A computation in the multiplier algebra $M(\algA)$ (see \cite[Example 3.8]{DurhSita11a}) shows that
\begin{equation*}
[\ualpha, \ubeta] = i \ubeta
\end{equation*}
which is the relation defining \eqref{commutationkappa} for the $\kappa$-deformed space when $\kappa=1$. In other words, the variables $(\alpha, \beta)$ can be formally identified with the ``variables'' $(x^0, x^1)$ of the $\kappa$-deformation. In these variables, for any $\falpbeta \in \algA$, the preceding relations takes the form:
\begin{align*}
(\ualphaalpbeta \astalpbeta \falpbeta)(\alpha, \beta) &= \alpha \falpbeta(\alpha, \beta) + i (\beta \partial_\beta \falpbeta)(\alpha, \beta),
&
(\falpbeta \astalpbeta \ualphaalpbeta)(\alpha, \beta) &= \alpha \,\falpbeta(\alpha, \beta),
\\
(\ubetaalpbeta \astalpbeta \falpbeta)(\alpha, \beta) &= \beta\, \falpbeta(\alpha, \beta),
&
(\falpbeta \astalpbeta \ubetaalpbeta)(\alpha, \beta) &= \beta \,\sigma(\falpbeta)(\alpha, \beta),
\end{align*}
where we define
\begin{equation}
\sigma(\falpbeta)(\alpha, \beta) \vc  
\tfrac{1}{2\pi} \int_{\gR\times\hgR} \dd \omega \dd \alpha'\, \falpbeta(\alpha + \alpha', \beta) \, e^{- i \omega \alpha'}\, e^{-\omega},
\label{sigma}
\end{equation}
which takes also the forms
\begin{align*}
\sigma(\fab)(a,b) &= e^{-a} \fab(a,b),
&
\sigma(\fabeta)(a,\beta) &= e^{-a} \fabeta(a,\beta).
\end{align*}
The operator $\sigma$ appears as a twist of the algebra $\algA$ (compare \cite[Proposition 4.1]{DurhSita11a}). It can be extended to the functions $\ualpha$ and $\ubeta$, and one gets:
\begin{align*}
&[\ualphaalpbeta, \falpbeta] = i \beta \partial_\beta \falpbeta\,,
&
&\ubetaalpbeta \astalpbeta \falpbeta = \sigma^{-1}(\falpbeta) \astalpbeta \ubetaalpbeta \,,
\\
&\sigma(\ualphaalpbeta) = \ualphaalpbeta + i \,,
&
&\sigma(\ubetaalpbeta) = \ubetaalpbeta\,.
\end{align*}

While $\algA$ is not a unital algebra, its multiplier algebra $M(\algA)$ is, and its unit $\bbbone$ takes the following form (as a distribution) in the different pairs of variables:
\begin{align*}
\bbboneab(a,b) &= \delta_0(a)',\delta_0(b), 
&
\bbboneabeta(a,\beta) &= \delta_0(a),
&
\bbbonealpbeta(\alpha, \beta) &=  1.
\end{align*}

\subsection{A trace}
\label{subsec-trace}

The Fourier transform on $\gR$ induces the natural isomorphism of $C^\ast$-algebras $\caF : C^\ast(\gR) \overset{\simeq\,\,}{\to} C_0(\hgR)$ \cite[Prop.~3.1]{Will07a}. As before, we denote by $\alpha \in \hgR$ the variable for functions in $C_0(\hgR)$. The map $\tau_\gR : C_0(\hgR)_+ \rightarrow [0,\infty]$, $\tau_\gR(\phi) \vc \int_{\hgR} \phi(\alpha) \dd \alpha$ is a lower semicontinuous trace \cite[II.6.7.2(v), II.6.8.3(i)]{Blac06a}. If $\rho : C^\ast(G) \to C^\ast(\gR)$ is the quotient map in \eqref{eq-secC*}, we define 
\begin{equation}
\label{eq-deftau}
\tau : C^\ast(G)_+ \to [0,\infty],
\quad\quad
\tau(f) \vc \tfrac{1}{2\pi}\,  \tau_\gR \circ \caF \circ \rho(f).
\end{equation}
Denote by $\frakM_\tau$ the linear span in $C^\ast(G)$ of $\{ f \in C^\ast(G)_+\, \mid \, \tau(f)<\infty \}$.

\begin{lemma}
\label{lemma-tracetau}
$\tau$ is a lower semicontinuous trace on $C^\ast(G)$ such that $\frakM_\tau$ contains $\algA$.

\noindent For any $f \in \algA$, one has
\begin{equation}
\label{eq-expressiontau}
\tau(f) =   \int_\gR\dd b\, \fab(0,b)= \fabeta(0,0) =\tfrac{1}{2\pi} \int_{\hgR} \dd \alpha \,\falpbeta(\alpha, 0).
\end{equation}
In particular,
\begin{equation}
\label{eq-tau(f*f)}
\tau(f^\ast\ast f) = \int_\gR \dd a\, \abs*{\fabeta(a,0)}^2 = \tfrac{1}{2\pi} \int_{\hgR} \dd \alpha \,\abs*{\falpbeta(\alpha, 0)}^2.
\end{equation}
\end{lemma}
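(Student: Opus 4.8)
The plan is to establish each claim by reducing it to properties of the well-understood building blocks: the trace $\tau_\gR$ on $C_0(\hgR)$, the quotient map $\rho$, and the Fourier isomorphism $\caF$. First I would verify that $\tau$ is a lower semicontinuous trace. Since $\tau_\gR$ is a lower semicontinuous trace on $C_0(\hgR)_+$ (this is the cited fact from \cite{Blac06a}) and $\caF$ is a $\ast$-isomorphism, the composite $\tau_\gR \circ \caF$ is a lower semicontinuous trace on $C^\ast(\gR)_+$; precomposing with the $\ast$-homomorphism $\rho$ preserves the weight property, positivity, and traciality (because $\rho(f^\ast g) $ and $\rho(g f^\ast)$ have the same image-trace), and lower semicontinuity is preserved since $\rho$ is continuous. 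The scaling by $\tfrac{1}{2\pi}$ is harmless. The only genuine point is that $\rho$, being surjective with a quotient $C^\ast$-algebra as target, does not destroy lower semicontinuity — this follows because $\tau = \tau_\gR\circ\caF\circ\rho$ is a pointwise supremum of the continuous functionals obtained by composing $\rho$ with the bounded functionals approximating $\tau_\gR\circ\caF$ from below.

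Next I would compute $\tau(f)$ for $f \in \algA$ and thereby show $\algA \subset \frakM_\tau$. The key is to identify $\caF\circ\rho(f)$ explicitly. By the short exact sequence \eqref{eq-secC*}, $\rho(f)$ is the element of $C^\ast(\gR)$ corresponding to $\fabeta_{|\beta=0}$, i.e. the function $a \mapsto \fabeta(a,0)$ on the group variable $a$; applying the Fourier transform $\caF$ along $a$ yields the function $\alpha \mapsto \int_\gR \dd a\,\fabeta(a,0)\,e^{ia\alpha} = \falpbeta(\alpha,0)$, using \eqref{eq-fourierabetaversalpbeta}. Therefore
\begin{equation*}
\tau(f) = \tfrac{1}{2\pi}\int_{\hgR}\dd\alpha\,\falpbeta(\alpha,0),
\end{equation*}
and since $\falpbeta \in \FourierDR(\hgR,\FourierDR(\hgR))$, the function $\alpha\mapsto\falpbeta(\alpha,0)$ lies in $\FourierDR(\hgR)\subset\caS(\hgR)$, so the integral converges absolutely; writing $f = \sum_i g_i^\ast\ast h_i$ (or more simply using that $\algA$ is spanned by positive elements, or invoking Proposition~\ref{eclatement}) shows $\tau$ is finite on a spanning set of $\algA$, hence $\algA\subset\frakM_\tau$. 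The remaining two expressions in \eqref{eq-expressiontau} follow by the inverse Fourier relations \eqref{eq-changevariablesalphabeta}: $\tfrac{1}{2\pi}\int\falpbeta(\alpha,0)\dd\alpha = \fabeta(0,0)$, and $\fabeta(0,0) = \int_\gR\dd b\,\fab(0,b)$ by \eqref{eq-deffabetafromfab} evaluated at $a=0$.

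Finally, for \eqref{eq-tau(f*f)} I would apply \eqref{eq-expressiontau} to the element $f^\ast\ast f$. Using the involution formula \eqref{eq-inv-productabeta} and the product formula \eqref{eq-productabeta} in the $(a,\beta)$ variables, one computes $(f^\ast\ast f)\widetilde{\ }(a,\beta) = \int_\gR\dd a'\,\overline{\fabeta(-a',e^{-a'}\beta)}\,\fabeta(a-a',e^{-a'}\beta)$ at... wait, more carefully: $(\fabeta^{\,\ast}\astabeta\fabeta)(a,\beta) = \int_\gR\dd a'\,\fabeta^{\,\ast}(a',\beta)\,\fabeta(a-a',e^{-a'}\beta) = \int_\gR\dd a'\,\overline{\fabeta(-a',e^{-a'}\beta)}\,\fabeta(a-a',e^{-a'}\beta)$. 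Setting $a=0$ and $\beta=0$ gives $\int_\gR\dd a'\,\overline{\fabeta(-a',0)}\,\fabeta(-a',0) = \int_\gR\dd a'\,\abs{\fabeta(a',0)}^2$ after substituting $a'\mapsto -a'$; then \eqref{eq-expressiontau} (first equality, via $\fabeta(0,0)$) yields $\tau(f^\ast\ast f) = \int_\gR\dd a\,\abs{\fabeta(a,0)}^2$, and the $\alpha$-version follows from Plancherel's theorem applied to the Fourier pair $a\leftrightarrow\alpha$. The main obstacle is the first paragraph: carefully justifying that lower semicontinuity survives composition with the quotient map $\rho$ rather than just the algebraic trace properties; everything else is a bookkeeping exercise in changing variables among the three presentations already set up in Sections~\ref{C*-algebra} and \ref{smooth algebra}.
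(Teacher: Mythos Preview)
Your proposal is correct and follows essentially the same route as the paper: lower semicontinuity via composition with the continuous maps $\rho$ and $\caF$, identification of $\caF\circ\rho(f)$ with $\alpha\mapsto\falpbeta(\alpha,0)$, and the use of Proposition~\ref{eclatement} plus polarization to get $\algA\subset\frakM_\tau$. Two minor remarks: you are overcomplicating the lower semicontinuity (an lsc function composed with a continuous map is lsc, full stop---no need for the supremum argument), and for \eqref{eq-tau(f*f)} the paper instead invokes \eqref{eq-betazeroinproduct}, which says the product at $\beta=0$ is just pointwise multiplication in $(\alpha,\beta)$ variables, giving $(\falpbeta^\ast\astalpbeta\falpbeta)(\alpha,0)=\abs{\falpbeta(\alpha,0)}^2$ in one line.
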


\begin{proof}
$\tau$ is lower semicontinuous because $\tau_\gR$ is a lower semicontinuous trace on $C_0(\hgR)_+$ and the maps $\rho$ and $\caF$ are continuous as morphisms of $C^\ast$-algebras. The trace property of $\tau$ is inherited from the trace property of $\tau_\gR$: for any $f \in C^\ast(G)$,
\begin{align*}
\tau(f^\ast \ast f) 
&= \tfrac{1}{2\pi}\,  \tau_\gR \circ \caF \circ \rho(f^\ast \ast f)
= \tfrac{1}{2\pi}\,  \tau_\gR \left( \overline{(\caF \circ \rho)(f)} (\caF \circ \rho)(f) \right)
= \tfrac{1}{2\pi}\,  \tau_\gR \left( (\caF \circ \rho)(f) \overline{(\caF \circ \rho)(f)} \right)\\
&= \tau(f \ast f^\ast).
\end{align*}

For any $f \in \algA$, a computation in the variables $(\alpha,\beta)$ gives $\tau(f^\ast \ast f) = \tfrac{1}{2\pi} \int_{\hgR} \dd \alpha \,(\falpbeta^\ast \astalpbeta \falpbeta)(\alpha, 0) < \infty$. Thanks to Proposition~\ref{eclatement} and the polarization relation
\begin{equation}
\label{eq-polarization}
4 g\ast h^\ast = (g + h) \ast (g + h)^\ast  - (g - h) \ast (g - h)^\ast + i (g + ih) \ast (g + ih)^\ast - i (g - ih) \ast (g - ih)^\ast, \,\,\,
\end{equation}
one has $\algA \subset \frakM_\tau$, and $\tau$ takes the claimed value of \eqref{eq-expressiontau} in $\falpbeta$ and using \eqref{eq-betazeroinproduct}, one gets \eqref{eq-tau(f*f)}. The others are easily deduced.
\end{proof}

Using representations of $G$, other traces on subspaces of $C^\ast(G)$ will be defined in \ref{subsec-tracefromrepresentations}.

\subsection{Derivations}
\label{subsec-derivations}

For any $f \in C^\ast(G)$ and $t \in \gR$, we define 
$\sigma_{t}(f) \in C^\ast(G)$ by its expression in variables $(a,b)$:
\begin{equation*}
\sigma_{t}(\fab)(a,b) \vc \Delta^{it}(a,b) \fab(a,b) = e^{ita} \fab(a,b) .
\end{equation*}
Note that the twist $\sigma$ defined in \eqref{sigma} corresponds to $\sigma_{i}$.

\begin{lemma}
\label{lem-sigmat}
$t \mapsto \sigma_{t}$ is a one-parameter group of automorphisms of $C^\ast(G)$.
\end{lemma}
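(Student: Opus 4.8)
The plan is to prove the algebraic content first on the dense $\ast$-subalgebra $\algA = \caD(G)_\ast$ (indeed on all of $L^1(G,\dd\mu)$), and then extend to $C^\ast(G)$ by continuity. The key remark is that $\sigma_t$ is nothing but pointwise multiplication by $\Delta^{it}$, and that $g \mapsto \Delta^{it}(g)$ is a \emph{continuous character} of $G$ with values in $\gT$: since $\log\Delta(a,b) = a$ defines a continuous homomorphism $G \to \gR$, the map $\Delta^{it}$ is a continuous homomorphism $G \to \gT$. Multiplying by such a character is the standard way to produce an automorphism of a group convolution algebra.

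First I would check directly, from \eqref{eq-prod-ab} and \eqref{eq-involution-ab}, that $\sigma_t$ is a $\ast$-homomorphism. For the product, inserting the identity $e^{ita} = e^{ita'}\,e^{it(a-a')}$ under the integral sign in \eqref{eq-prod-ab} turns $e^{ita}(\fab\astab\gab)(a,b)$ into $(\sigma_t\fab \astab \sigma_t\gab)(a,b)$; this is exactly the multiplicativity of $\Delta^{it}$ along the group law. For the involution, \eqref{eq-involution-ab} gives $\sigma_t(\fab^\ast)(a,b) = e^{ita}e^{-a}\,\overline{\fab(-a,-e^{a}b)}$, while $(\sigma_t\fab)^\ast(a,b) = e^{-a}\,\overline{e^{-ita}\fab(-a,-e^{a}b)} = e^{ita}e^{-a}\,\overline{\fab(-a,-e^{a}b)}$, where we used $\overline{e^{-ita}} = e^{ita}$, i.e.\ that $\Delta^{it}$ is unitary-valued. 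Since $\sigma_t$ visibly maps $\algA$ onto $\algA$, and $\sigma_0 = \mathrm{id}$, $\sigma_s \circ \sigma_t = \sigma_{s+t}$ because $e^{isa}e^{ita} = e^{i(s+t)a}$, each $\sigma_t$ is already a $\ast$-automorphism of $\algA$ (and of $L^1(G,\dd\mu)$), with inverse $\sigma_{-t}$.

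The only point requiring some care is the passage to the $C^\ast$-completion, i.e.\ the norm-boundedness of $\sigma_t$. The cleanest route is representation-theoretic: if $U$ is a (strongly continuous) unitary representation of $G$ on a Hilbert space, then $g \mapsto U^t_g \vc \Delta^{it}(g)\,U_g$ is again such a representation, since the $\gT$-valued factor preserves both unitarity and multiplicativity, and its integrated form satisfies $U^t(f) = U(\sigma_t(f))$ for $f \in L^1(G,\dd\mu)$. As $U \mapsto U^t$ is a bijection on the class of unitary representations of $G$, with inverse $U \mapsto U^{-t}$, passing to suprema of operator norms gives $\norm{\sigma_t(f)}_{C^\ast(G)} = \norm{f}_{C^\ast(G)}$. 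Hence $\sigma_t$ extends uniquely to an isometric $\ast$-endomorphism of $C^\ast(G)$, with $\sigma_{-t}$ extending to its two-sided inverse, so $\sigma_t \in \Aut(C^\ast(G))$; the relations $\sigma_0 = \mathrm{id}$ and $\sigma_s\circ\sigma_t = \sigma_{s+t}$ then persist on the completion by density.

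Finally, if one also wants strong continuity of $t \mapsto \sigma_t$ (a $C^\ast$-dynamical system), I would note that for $f \in L^1(G,\dd\mu)$ one has $\norm{\sigma_t(f) - \sigma_s(f)}_{C^\ast(G)} \le \norm{(\Delta^{it}-\Delta^{is})f}_1 \to 0$ as $t \to s$, by dominated convergence, and then extend this to every $f \in C^\ast(G)$ by a standard $\varepsilon/3$ argument using that the $\sigma_t$ are uniformly bounded (isometric). I do not expect any genuine obstacle in these steps; the whole lemma is essentially the observation that $\sigma_t$ is multiplication by the unitary character $\Delta^{it}$.
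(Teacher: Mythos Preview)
Your proof is correct and follows the same route as the paper, which simply notes that the statement follows directly from the product formula \eqref{eq-prod-ab} and the definition of $\sigma_t$. You have filled in considerably more detail than the paper provides, in particular the clean representation-theoretic argument for isometry on $C^\ast(G)$ and the strong-continuity check, neither of which the paper spells out.
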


$\sigma_{t}$ is the natural modular automorphism of $C^\ast(G)$ defined by $\Delta$. 

\begin{proof}
This follows directly from \eqref{eq-prod-ab} and the definition of $\sigma_t$, compare also with the proof of \cite[Proposition 4.1]{DurhSita11a}.
\end{proof}

As in the case of $\sigma$, $\sigma_t$ extends to some of the elements in the multiplier:
\begin{equation*}
\sigma_{t}(\ualphaalpbeta) = \ualphaalpbeta + t, 
\qquad 
\sigma_{t}(\ubetaalpbeta) = \ubetaalpbeta\,,
\end{equation*}
so that one can interpret this one-parameter group of automorphisms of $C^\ast(G)$ as the translation in the time-direction in $\kappa$-deformation. 

\noindent This one-parameter group of automorphisms defines a derivation 
\begin{equation}
\delta_1(f) \vc  \tfrac{d \sigma_{t}(f)}{dt}_{|t=0} \label{delta1}
\end{equation}
on the algebra $\algA$, given in all variables by:
\begin{align*}
&\delta_1(\fab)(a,b) = ia \fab(a,b),
\\
&\delta_1(\fabeta)(a,\beta) = ia \fabeta(a,\beta),
\\
&\delta_1(\falpbeta)(\alpha,\beta) = (\partial_\alpha \falpbeta)(\alpha,\beta).
\end{align*}

We saw that the algebra $\gR \ltimes C_0(\hgR)$ is defined using the action $\rho_a(\phi)(\beta) = \phi(e^{-a} \beta)$ of $a \in \gR$ on any $\phi \in C_0(\hgR)$. It is straightforward to check that the corresponding one-parameter group of automorphisms $\gR \ni u \mapsto \rho_{-u}$ of the algebra $C_0(\hgR)$ can be extended into a one-parameter group of automorphisms of the crossed product $C^\ast(G) \simeq \gR \ltimes C_0(\hgR)$ (because $\gR$ is an abelian group and the action of $u\in\gR$ is the same as the action of $a \in \gR$ defining the crossed product). 
\begin{lemma}
The map $u \in \gR \mapsto \eta_u \in \Aut\big(C^\ast(G)\big)$ given explicitly by
\begin{equation*}
\eta_u(\fabeta)(a,\beta) = \fabeta(a, e^{u}\beta), \qquad \eta_u(\fab)(a,b) = e^{-u} \fab(a, e^{-u} b),
\end{equation*}
is a one-parameter group of automorphisms of $C^\ast(G)$. 
\end{lemma}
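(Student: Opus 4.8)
\emph{Approach.} The plan is to obtain $\eta_u$ from the description $C^\ast(G) \simeq \gR \ltimes_\rho C_0(\hgR)$ recalled just before the statement, by functoriality of the crossed product in the coefficient algebra. For each fixed $u \in \gR$ the map $\rho_{-u}(\phi)(\beta) = \phi(e^{u}\beta)$ is the composition of $\phi \in C_0(\hgR)$ with the homeomorphism $\beta \mapsto e^{u}\beta$ of $\hgR$, hence an isometric $\ast$-automorphism of $C_0(\hgR)$, and $u \mapsto \rho_{-u}$ is a one-parameter group. Since the action defining the crossed product is $\rho_a(\phi)(\beta) = \phi(e^{-a}\beta)$ and dilations of $\hgR$ by positive scalars commute, one checks $\rho_{-u} \circ \rho_a = \rho_a \circ \rho_{-u}$ for all $a,u \in \gR$; thus $\rho_{-u}$ is an equivariant $\ast$-automorphism of the coefficient algebra relative to the trivial action on the acting copy of $\gR$.

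\emph{Extension and group law.} First I would invoke functoriality: an equivariant $\ast$-automorphism $\theta$ of $C_0(\hgR)$ commuting with $\rho$ induces a $\ast$-automorphism $\theta \ltimes \mathrm{id}$ of $\gR \ltimes_\rho C_0(\hgR)$, acting on the dense subalgebra $C_c(\gR, C_0(\hgR))$ pointwise by $(\theta \ltimes \mathrm{id})(F)(a) = \theta(F(a))$. Setting $\eta_u \vc \rho_{-u} \ltimes \mathrm{id} \in \Aut(C^\ast(G))$ gives $\eta_u \circ \eta_{u'} = (\rho_{-u} \circ \rho_{-u'}) \ltimes \mathrm{id} = \rho_{-(u+u')} \ltimes \mathrm{id} = \eta_{u+u'}$ and $\eta_0 = \mathrm{id}$, so $u \mapsto \eta_u$ is a one-parameter group of automorphisms of $C^\ast(G)$.

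\emph{Explicit formulas.} For $\fabeta \in \caD(\gR, \FourierDR(\hgR))$ one reads off $(\eta_u \fabeta)(a,\beta) = \rho_{-u}\big(\fabeta(a,\cdot)\big)(\beta) = \fabeta(a, e^{u}\beta)$, the first stated formula. (If a direct check that this respects \eqref{eq-productabeta} and \eqref{eq-inv-productabeta} is preferred, it is the one-line substitution $(\eta_u \fabeta \astabeta \eta_u \gabeta)(a,\beta) = \int_\gR \dd a'\, \fabeta(a', e^{u}\beta)\, \gabeta(a-a', e^{-a'}e^{u}\beta) = (\fabeta \astabeta \gabeta)(a, e^{u}\beta)$, and likewise for the involution.) To pass to the $(a,b)$ variables I would use \eqref{eq-deffabetafromfab}: writing $\fabeta(a, e^{u}\beta) = e^{a}\int_\gR \dd b\, \fab(a,b)\, e^{i b e^{u}\beta}$ and changing variable $b = e^{-u}b'$ yields $e^{a}\int_\gR \dd b'\, \big(e^{-u}\fab(a, e^{-u}b')\big)\, e^{i b'\beta}$, so $\eta_u(\fab)(a,b) = e^{-u}\fab(a, e^{-u}b)$, the second stated formula.

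\emph{Main obstacle.} The only point that is not a formal substitution is the extension of $\eta_u$ from the dense $\ast$-subalgebra to all of $C^\ast(G)$, i.e. its boundedness; this is exactly what crossed-product functoriality delivers. Alternatively one can bypass the crossed product and argue directly in the $(a,b)$ picture: $\fab \mapsto e^{-u}\fab(a, e^{-u}b)$ respects the convolution \eqref{eq-prod-ab} and the involution \eqref{eq-involution-ab} and is isometric for the $L^1(G,\dd\mu)$ norm (the Jacobian of the dilation $b \mapsto e^{-u}b$ cancels the prefactor $e^{-u}$ against the weight $e^{a}\,\dd a\,\dd b$), hence is an isometric $\ast$-automorphism of $L^1(G,\dd\mu)$ and extends to $C^\ast(G)$ by its universal property; the one-parameter group property is then already visible on $L^1(G,\dd\mu)$.
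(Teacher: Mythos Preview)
Your proposal is correct and follows essentially the same approach as the paper: the paper's justification (the paragraph immediately preceding the lemma) is precisely that $u \mapsto \rho_{-u}$ is a one-parameter group of automorphisms of $C_0(\hgR)$ commuting with the crossed-product action $\rho_a$ (since $\gR$ is abelian), hence extends to automorphisms of $C^\ast(G) \simeq \gR \ltimes C_0(\hgR)$. You spell out the functoriality and the passage between the $(a,\beta)$ and $(a,b)$ pictures in more detail than the paper, and your alternative $L^1$-isometry argument is a legitimate extra route, but the core idea is the same.
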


\noindent Therefore we have a second derivation 
\begin{equation}
\delta_2(f) \vc  \tfrac{d \eta_{u}(f)}{du}_{|u=0}  \label{delta2}
\end{equation}
on $\algA$, which is:
\begin{align*}
&\delta_2(\fab)(a,b) = - \fab(a,b) - b (\partial_b \fab)(a,b),
\\
&\delta_2(\fabeta)(a,\beta) = \beta (\partial_\beta \fabeta)(a,\beta),
\\
&\delta_2(\falpbeta)(\alpha,\beta) = \beta (\partial_\beta \falpbeta)(\alpha,\beta).
\end{align*}
Observe that $\delta_2(\fabeta)$ vanishes at $\beta = 0$, so that $\delta_2(f) \in \caK_{-} \oplus \caK_{+}$ for any $f \in \algA$.  

\begin{lemma}
The derivations $\delta_1$, $\delta_2$ are real, i.e. $\delta_k (f^\ast) = (\delta_k f)^\ast$ for any $f \in \algA$ and $k=1,2$, and they commute:
\begin{equation*}
[\delta_1, \delta_2] = 0.
\end{equation*}
Moreover, 
\begin{equation*}
\tau\big(\delta_k(f)\big) = 0 \text{ for $k=1,2$}.
\end{equation*}
\end{lemma}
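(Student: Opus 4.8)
The plan is to verify each of the three assertions by passing to whichever of the three pairs of variables makes the computation most transparent, using the explicit formulas for $\delta_1$, $\delta_2$ and for the involution already recorded in the excerpt. For the reality of $\delta_1$: working in the $(a,\beta)$ variables, where $\delta_1(\fabeta)(a,\beta) = ia\,\fabeta(a,\beta)$ and, by \eqref{eq-inv-productabeta}, $\fabeta^{\,\ast}(a,\beta) = \overline{\fabeta(-a, e^{-a}\beta)}$, one computes $\delta_1(\fabeta^{\,\ast})(a,\beta) = ia\,\overline{\fabeta(-a,e^{-a}\beta)}$ and compares with $(\delta_1\fabeta)^{\ast}(a,\beta) = \overline{(\delta_1\fabeta)(-a,e^{-a}\beta)} = \overline{i(-a)\fabeta(-a,e^{-a}\beta)} = ia\,\overline{\fabeta(-a,e^{-a}\beta)}$; these agree. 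For $\delta_2$ I would argue the same way, either directly from $\delta_2(\fabeta)(a,\beta) = \beta\,(\partial_\beta\fabeta)(a,\beta)$ in the $(a,\beta)$ variables, or more conceptually from the fact that $\eta_u$ is a one-parameter group of $\ast$-automorphisms (each $\eta_u$ preserves the involution), so differentiating the identity $\eta_u(f^\ast) = \eta_u(f)^\ast$ at $u=0$ gives $\delta_2(f^\ast) = (\delta_2 f)^\ast$; the same remark applies to $\delta_1$ via $\sigma_t$, which by Lemma~\ref{lem-sigmat} is a group of automorphisms (one checks it is $\ast$-preserving from $|\Delta^{it}|=1$). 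Either route is routine.

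For commutativity, the cleanest check is in the $(\alpha,\beta)$ variables, where $\delta_1(\falpbeta) = \partial_\alpha\falpbeta$ and $\delta_2(\falpbeta) = \beta\,\partial_\beta\falpbeta$; since $\partial_\alpha$ and $\beta\,\partial_\beta$ are differential operators in independent variables they obviously commute, so $[\delta_1,\delta_2]\falpbeta = \partial_\alpha(\beta\,\partial_\beta\falpbeta) - \beta\,\partial_\beta(\partial_\alpha\falpbeta) = 0$. (One may cross-check in the $(a,b)$ variables: $\delta_1 = $ multiplication by $ia$ commutes with $\delta_2 = -1 - b\,\partial_b$, which touches only the $b$ variable.)

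For the trace identities, I would use the formula $\tau(f) = \fabeta(0,0)$ from \eqref{eq-expressiontau}. Then $\tau(\delta_1 f) = \delta_1(\fabeta)(0,0) = i\cdot 0\cdot\fabeta(0,0) = 0$, and $\tau(\delta_2 f) = \delta_2(\fabeta)(0,0) = 0\cdot(\partial_\beta\fabeta)(0,0) = 0$; indeed the excerpt already notes $\delta_2(\fabeta)$ vanishes identically at $\beta=0$, so $\tau$ kills $\delta_2(f)$ a fortiori. Alternatively, $\tau(\delta_1 f) = \int_\gR db\,\delta_1(\fab)(0,b) = \int_\gR db\,(i\cdot 0)\fab(0,b) = 0$ and $\tau(\delta_2 f) = \int_\gR db\,\delta_2(\fab)(0,b) = -\int_\gR db\,\big(\fab(0,b) + b\,\partial_b\fab(0,b)\big) = -\int_\gR db\,\partial_b\big(b\,\fab(0,b)\big) = 0$ by compact support — this second form is worth recording because it exhibits $\tau\circ\delta_2$ as an exact-form integral. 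I do not anticipate a genuine obstacle here; the only mild subtlety is making sure the manipulations are legitimate on all of $\algA$, which is immediate since $\algA = \caD(G)_\ast$ consists of compactly supported smooth functions (in the $(a,b)$ and $(a,\beta)$ pictures), so all integrals converge absolutely and differentiation under the integral sign and integration by parts are unproblematic.
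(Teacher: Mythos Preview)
Your proposal is correct and takes exactly the approach the paper indicates: the paper's entire proof is the single sentence ``The proof is by direct computations,'' and you have supplied those computations. One minor quibble: in the $(a,\beta)$ picture the functions are compactly supported only in $a$ (they lie in $\FourierDR(\hgR)$ in $\beta$), but this is harmless since your trace computation there is a pointwise evaluation at $(0,0)$, and your alternative $(a,b)$-variable argument uses the genuine compact support correctly.
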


The proof is by direct computations. While the derivation $\delta_1$ is the ordinary derivative along the variable $\alpha$, the derivative along $\beta$ is not. Nevertheless it is a twisted derivation on $\algA$ (see also \cite[Theorem 4.2]{DurhSita11a}):
\begin{equation*}
\partial_\beta (\fabeta \astabeta \gabeta) = (\partial_\beta \fabeta) \astabeta \gabeta + \sigma(\fabeta) \astabeta (\partial_\beta \gabeta).
\end{equation*}

\section{Representations}
\label{Representations}

\subsection{Irreducible representations}
\label{subsec-irrep}

The irreducible unitary representations of $G = \gR \ltimes \gR$ are well known \cite{GelfNaim47a, Khal74a}. For each $\nu \in \{+,- \}$, one has an irreducible infinite dimensional unitary representation $\pi_\nu$ of $G$ on $\caH_{\nu} \vc  L^2(\gR, \dd s)$ given by
\begin{equation*}
(\pi_{\pm}(a,b)\, \phi)(s) \vc  e^{\pm i b\, e^{-s}} \phi(s+a).
\end{equation*}

\noindent These two representations naturally induce representations of $C^\ast(G)$, defined for any $f \in L^1(G, \dd \mu)$ by
\begin{align*}
(\pi_{\pm}(f) \,\phi)(s) 
&\vc  \int_G \dd a \dd b\; e^a \fab(a,b) \big(\pi_{\pm}(a,b) \,\phi \big)(s).
\end{align*}
Thus
\begin{numcases}{(\pi_{\nu}(f) \,\phi)(s) = }
\int_{\gR^2} \dd u \dd b\; e^{u-s} \fab(u-s,b)\, e^{\nu i b e^{-s}} \phi(u) & \text{ in variables $(a,b)$}, 
\label{eq-repvariablesabnu}
\\
\int_\gR \dd u\; \fabeta(u-s, \nu e^{-s})\, \phi(u) & \text{ in variables $(a,\beta)$}, 
\label{eq-representationinvariablesabeta}
\\
\tfrac{1}{2\pi} \int_{\gR\times\hgR} \dd u \dd v\; \falpbeta(v, \nu e^{-s}) \,e^{-i v (u-s)}\, \phi(u) & \text{ in variables $(\alpha,\beta)$}.
\label{eq-representationalphabeta}
\end{numcases}
As we will see in a while, these expressions make sense also for $\nu=0$, however, the corresponding representation is reducible. 

The representation $\pi_\pm$ can be extended to $\ualpha$ and $\ubeta$ as elements of the multiplier algebra, and they are represented as unbounded operators
\begin{align*}
(\pi_\pm(\ualpha)\,\phi)(s) &= -i (\partial_s \phi)(s),
&
(\pi_\pm(\ubeta)\,\phi)(s) &= \pm e^{-s} \phi(s).
\end{align*}

The Schwartz kernel of $\pi_{\pm}(f)$ is  
\begin{numcases}{K_{\pi_{\pm}(f)}(s,u) =}
\int_\gR \dd b\; e^{u-s} \,\fab(u-s,b) \,e^{\pm i b e^{-s}} & \text{ in variables $(a,b)$},
\nonumber
\\
\fabeta(u-s, \pm e^{-s}) & \text{ in variables $(a,\beta)$},
\nonumber
\\
\tfrac{1}{2\pi} \int_{\hgR} \dd v\; \falpbeta(v, \pm e^{-s}) \,e^{-i v (u-s)} & \text{ in variables $(\alpha,\beta)$}.
\label{eq-Schwartzkernelalpbeta}
\end{numcases}

\medskip
Apart from the infinite-dimensional representations, there exists also a family $\{\pi_0^p\}_{p \in \gR}$ of one-dimensional irreducible unitary representations of $G$, defined by
\begin{equation*}
\pi_0^p(a,b) \vc  e^{i a p}.
\end{equation*}
This induces the family of one-dimensional representations of $C^\ast(G)$ given in variables $(\alpha, \beta)$: 
\begin{equation*}
\pi_0^p(f) = \falpbeta(p,0).
\end{equation*}

Using the direct integral of the one-dimensional Hilbert spaces for the standard Lebesgue measure on $\gR$, which could be identified with $L^2(\gR, \dd p)$, and the direct integral of the representation,
\begin{equation}
\label{eq-defpi0}
\pi_0 \vc  \int_\gR^\oplus \dd p\; \pi_0^p\,,
\end{equation}
we obtain a representation of $G$ and $C^\ast(G)$ on $L^2(\gR, \dd p)$:
\begin{align*}
(\pi_0(a,b)\hphi)(p) &= e^{i a p}\, \hphi(p),
&
(\pi_0(f) \hphi)(p) &= \falpbeta(p,0) \hphi(p),
\end{align*}
for any $\hphi \in L^2(\gR, \dd p)$.

This representation can also be described on the Hilbert space $\caH_0 \vc  L^2(\gR, \dd s)$, which we take as the image of $L^2(\gR, \dd p)$ under the standard Fourier transform defined on $L^1(\gR, \dd p) \cap L^2(\gR, \dd p)$ by: $\phi(s) \vc \frac{1}{2 \pi} \int_{\gR} \dd p\; \hphi(p) e^{i p s}$. On $L^2(\gR, \dd s)$ we have

\begin{equation*}
\big(\pi_0(a,b)\phi\big)(s) = \phi(s+a)
\end{equation*}
and for any $f \in L^1(G, \dd \mu)$, 
\begin{equation*}
\big(\pi_{0}(f) \,\phi \big)(s) = \int_{\gR^2} \dd u \dd b\; e^{u-s} \fab(u-s,b)\, \phi(u).
\end{equation*}
This relation is exactly \eqref{eq-repvariablesabnu} for $\nu=0$, so that, from now on, $\pi_0$ will be considered as an element of the family of representations $\{ \pi_\nu \}_{\nu \in \{-,0,+\}}$.

It is easy to see that the image of $C^\ast(G)$ by $\pi_0$ is abelian, which is also a consequence of \eqref{eq-betazeroinproduct} combined with \eqref{eq-representationalphabeta} for $\nu=0$.

\medskip
It is known that the space $\{ \pi_-, \pi_+ \}$ is dense in $\widehat{G}$ for the Fell topology of $\widehat{G}$, while $\{ \pi_0^p \mid p\in I\}$ is closed in $\widehat{G}$ (\cite{Fell62a}, \cite{Khal74a}) if and only if $I$ is closed in $\gR$. As we will see later, these two irreducible representations are sufficient to give all non-trivial contributions to the computations of spectral dimension for our proposed spectral triple. 

The representation $\pi_- \oplus \pi_+$ is faithful. This can be shown directly using the expression of the representation in the variables $(a,\beta)$. This is also a consequence of the decomposition of the left regular representation presented in next subsection along copies of this representation $\pi_- \oplus \pi_+$, see \ref{subsec-leftregrep}.

It is well known that $G$ is not liminal \cite{Khal74a} and in fact:
\begin{theorem}
\label{thm-postliminal}
The real affine group $G$ is postliminal and $C^\ast(G)$ is of type I.
\end{theorem}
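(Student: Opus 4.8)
The plan is to use the structure already exposed in Section~\ref{C*-algebra}, in particular the short exact sequence \eqref{eq-secC*}, to build an explicit composition series for $C^\ast(G)$ whose successive quotients have continuous trace (or at least are CCR), which is exactly what is needed to conclude that $G$ is postliminal and that $C^\ast(G)$ is of type~I. Recall that a $C^\ast$-algebra is postliminal (GCR) if and only if it admits a composition series $(I_\lambda)$ with each $I_{\lambda+1}/I_\lambda$ CCR; and, by the Glimm--Sakai dichotomy, postliminal is equivalent to type~I for separable $C^\ast$-algebras. Since $C^\ast(G)$ is separable, it suffices to produce such a series.

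First I would take the length-two composition series $0 \subset \caK_{-} \oplus \caK_{+} \subset C^\ast(G)$ coming from \eqref{eq-secC*}. The ideal $\caK_{-} \oplus \caK_{+}$ is a direct sum of two copies of $\caK(L^2(\gR))$, as noted after \eqref{eq-secC*}; the algebra of compact operators is CCR (every irreducible representation is the identity, whose image is $\caK$, contained in the compacts), hence so is the finite direct sum. The quotient $C^\ast(G)/(\caK_{-}\oplus\caK_{+}) \simeq C^\ast(\gR) \simeq C_0(\hgR)$ is commutative, hence trivially CCR. Thus the two subquotients of the series are CCR, which by the standard characterization (see e.g.\ Dixmier) gives at once that $C^\ast(G)$ is postliminal, i.e.\ GCR.

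Then I would invoke the Glimm--Sakai theorem: for a separable $C^\ast$-algebra, GCR (postliminal) is equivalent to being of type~I (and also equivalent to the dual $\widehat{G}$ being $T_0$, and to the non-existence of a $\ast$-subquotient isomorphic to the CAR algebra). Since $G$ is a second countable locally compact group, $C^\ast(G)$ is separable, so the equivalence applies and $C^\ast(G)$ is of type~I; equivalently $G$ is a type~I group.

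The main point to get right is simply the identification of the subquotients of \eqref{eq-secC*} with CCR algebras — and that is already done in the text (the ideal is $\caK(L^2(\gR))^{\oplus 2}$, the quotient is $C_0(\hgR)$), so there is no real obstacle; the proof is essentially a citation of \eqref{eq-secC*} together with the Glimm--Sakai structure theorem. One could alternatively note that the full irreducible dual was described in Section~\ref{subsec-irrep} — the family $\{\pi_0^p\}_{p\in\gR}$ together with $\pi_-,\pi_+$ — and check directly that $\pi_\pm(C^\ast(G)) \supseteq \caK(L^2(\gR))$ while $\pi_0^p$ is one-dimensional, so every irreducible representation contains the compacts of its Hilbert space; combined with the fact (from \cite{Fell62a,Khal74a}, cited above) that $\widehat{G}$ is $T_0$, this re-proves type~I. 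I would present the composition-series argument as the main line, since it is the cleanest and uses only \eqref{eq-secC*}.
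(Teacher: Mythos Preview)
Your proposal is correct. Your main line --- the length-two composition series built from \eqref{eq-secC*} with CCR subquotients --- is a genuinely different route from the paper's. The paper argues directly at the level of representations: it observes that every irreducible representation of $C^\ast(G)$ contains the compact operators on its Hilbert space (citing \cite{Khal74a} for $\pi_\pm$, and noting this is trivial for the one-dimensional $\pi_0^p$), which is precisely the GCR condition; it then cites \cite{Blac06a} for the equivalence GCR $\Leftrightarrow$ postliminal $\Leftrightarrow$ type~I. Your ``alternative'' line at the end is essentially this argument. The composition-series approach has the merit of using only the structural decomposition \eqref{eq-secC*} already established in Section~\ref{C*-algebra}, without needing the full unitary dual or the fact that $\pi_\pm(C^\ast(G))\supseteq\caK$; the paper's approach is shorter once the dual is in hand (as it is by Section~\ref{subsec-irrep}) and avoids invoking the composition-series characterization of postliminality. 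Either way the proof is a citation plus a one-line observation.
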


\begin{proof}
Since any irreducible representation of $C^\ast(G)$ contains the compact operators (\cite[p.~164]{Khal74a} for $\pi_\pm$, obvious for the $\pi_0^p$'s), $C^\ast(G)$ is GCR. By \cite[Theorem~IV.1.5.7, IV.1.5.8]{Blac06a}, this is equivalent to $C^\ast(G)$ of type I and postliminal.
\end{proof}

\subsection{Derivations and representations}
\label{subsec-derivationsrepresentations}

For any $\nu \in \{-,0,+\}$, define on $\caH_{\nu} \vc  L^2(\gR, \dd s)$ 
the following unbounded hermitean operators,
\begin{align}
\label{eq-defpartialk}
(\partial_1 \phi)(s) \vc  s\, \phi(s),
\qquad
(\partial_2 \phi)(s) \vc  -i (\partial_s \phi)(s),
\end{align}
which implement on $\caH_\nu$ the action of derivations \eqref{delta1} and \eqref{delta2}:
\begin{lemma}
\label{lem-commutationderivations}
For any $f \in \algA$ and $\nu \in \{-,0,+\}$, one has
\begin{equation*}
[\partial_k, \pi_{\nu}(f) ] = \pi_{\nu}(i \delta_k f). 
\end{equation*}
\end{lemma}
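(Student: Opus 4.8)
The plan is to verify the commutation relation $[\partial_k, \pi_\nu(f)] = \pi_\nu(i\delta_k f)$ directly at the level of Schwartz kernels, using the explicit formulas already assembled in the excerpt. Recall that $\pi_\nu(f)$ acts on $\phi \in L^2(\gR,\dd s)$ by an integral kernel $K_{\pi_\nu(f)}(s,u)$, which in the variables $(a,\beta)$ is simply $K_{\pi_\nu(f)}(s,u) = \fabeta(u-s, \nu e^{-s})$ (this works uniformly for $\nu \in \{-,0,+\}$). The operators $\partial_1$ and $\partial_2$ are, respectively, multiplication by $s$ and $-i\partial_s$, so conjugating a kernel by them is elementary: $[\partial_1, \cdot]$ turns the kernel $K(s,u)$ into $(s-u)K(s,u)$, and $[\partial_2, \cdot]$ turns it into $-i(\partial_s + \partial_u)K(s,u)$. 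The strategy is therefore: (i) compute the kernel of $\pi_\nu(i\delta_k f)$ from the formulas for $\delta_k$ in the $(a,\beta)$ variables; (ii) compute $(s-u)K_{\pi_\nu(f)}(s,u)$ and $-i(\partial_s+\partial_u)K_{\pi_\nu(f)}(s,u)$; (iii) check the two match.

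For $k=1$: we have $\delta_1(\fabeta)(a,\beta) = ia\,\fabeta(a,\beta)$, so the kernel of $\pi_\nu(i\delta_1 f)$ is $i\cdot i(u-s)\,\fabeta(u-s,\nu e^{-s}) = -i(u-s)K_{\pi_\nu(f)}(s,u) = i(s-u)K_{\pi_\nu(f)}(s,u)$, which is exactly the kernel of $[\partial_1,\pi_\nu(f)]$ since $\partial_1$ is multiplication by $s$. For $k=2$: we have $\delta_2(\fabeta)(a,\beta) = \beta(\partial_\beta\fabeta)(a,\beta)$, so the kernel of $\pi_\nu(i\delta_2 f)$ is $i\,\nu e^{-s}(\partial_\beta\fabeta)(u-s,\beta)\big|_{\beta=\nu e^{-s}}$. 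On the other side, applying $-i(\partial_s+\partial_u)$ to $\fabeta(u-s,\nu e^{-s})$: the $\partial_s+\partial_u$ kills the $u-s$ argument (the chain-rule contributions from the first slot cancel), leaving only the derivative of $\nu e^{-s}$ in the second slot, namely $(\partial_s + \partial_u)\fabeta(u-s,\nu e^{-s}) = -\nu e^{-s}(\partial_\beta\fabeta)(u-s,\nu e^{-s})$; multiplying by $-i$ gives $i\nu e^{-s}(\partial_\beta\fabeta)(u-s,\nu e^{-s})$, matching. For $\nu=0$ both sides vanish identically, consistent with $\pi_0(C^\ast(G))$ being abelian and $\delta_2(f)$ lying in $\caK_-\oplus\caK_+$.

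The only real point requiring care — and the main (mild) obstacle — is the domain/boundary issue: $\partial_1$ and $\partial_2$ are unbounded, so the identity must be understood on a suitable common core. Since $f \in \algA = \caD(\gR,\FourierDR(\hgR))$, the kernel $\fabeta(u-s,\nu e^{-s})$ is smooth, compactly supported in $u-s$, and (because $\FourierDR(\hgR)\subset\caS(\hgR)$) Schwartz in the second argument; hence $\pi_\nu(f)$ maps $\caS(\gR)$ into itself and into the domain of both $\partial_k$, so one may take $\caS(\gR)$ (or $\caD(\gR)$) as the core and the manipulations above are legitimate. One should also note that the integration by parts implicit in identifying the kernel of $-i(\partial_s+\partial_u)$ acting through the integral defining $\pi_\nu(f)$ produces no boundary terms, again by compact support in $u-s$. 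With these remarks the computation is complete; the verification in the $(a,b)$ or $(\alpha,\beta)$ variables would be equivalent but messier, so the $(a,\beta)$ presentation is the natural one to use.
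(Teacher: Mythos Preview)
Your argument is correct and follows essentially the same route as the paper: a direct computation in the $(a,\beta)$ variables, with the only cosmetic difference that you phrase it at the level of Schwartz kernels while the paper carries a test vector $\phi$ through the integrals. One small arithmetic slip: in the $k=1$ step you write $i\cdot i(u-s)\,\fabeta(\cdots) = -i(u-s)K_{\pi_\nu(f)}$, but $i\cdot i=-1$, so the kernel of $\pi_\nu(i\delta_1 f)$ is $-(u-s)K_{\pi_\nu(f)}=(s-u)K_{\pi_\nu(f)}$, which is indeed the kernel of $[\partial_1,\pi_\nu(f)]$; the stray factor of $i$ is just a typo and does not affect the argument.
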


\begin{proof}
In variables $(a,\beta)$, one has for $k=1$ and $\phi$ in the domain of $\partial_1$:
\begin{align*}
\big(\partial_1 \pi_\nu(f) \phi\big)(s) &= \int_{\gR} \dd u \, s \, \fabeta(u-s, \nu e^{-s}) \, \phi(u),
&
\big(\pi_\nu(f) \partial_1 \phi)(s) &= \int_{\gR} \dd u \, u \, \fabeta(u-s, \nu e^{-s})\, \phi(u),
\end{align*}
so that 
\begin{equation*}
\big([\partial_1, \pi_{\nu}(f) ]\phi \big)(s) = \int_{\gR} \dd u \, (s-u) \, \fabeta(u-s, \nu e^{-s}) \, \phi(u) =  \big(\pi_{\nu}(i \delta_1 f)\phi \big)(s).
\end{equation*}
For $k=2$ and $\phi$ in the domain of $\partial_2$:
\begin{align*}
\big(\partial_2 \pi_\nu(f) \phi \big)(s) &= -i \int_{\gR} \dd u \left[ -(\partial_a \fabeta)(u-s,\nu e^{-s}) - \nu e^{-s}\, (\partial_\beta \fabeta)(u-s, \nu e^{-s})  \right] \phi(u),\\
\big(\pi_\nu(f) \partial_2 \phi \big)(s) &= - i \int_{\gR} \dd u\, \fabeta(u-s,\nu e^{-s}) \,(\partial_u \phi)(u)
=  i \int_{\gR} \dd u\, (\partial_a \fabeta)(u-s,\nu e^{-s}) \,\phi(u)
\end{align*}
which leads to
\begin{equation*}
\big([\partial_2,\, \pi_{\nu}(f) ]\,\phi \big)(s) = i \int_{\gR} \dd u \,  (\beta\partial_\beta \fabeta)(u-s, \nu e^{-s}) \phi(u) = \big(\pi_{\nu}(i \delta_2 f)\,\phi\big)(s).
\end{equation*}
\end{proof}

Therefore the derivations $\delta_1, \delta_2$ of $\algA$ can be represented as commutators on each of the three representations $\pi_-$, $\pi_+$, and $\pi_0$. Having in mind, for example, the construction of spectral triples for the noncommutative torus, we shall make this a starting point on a search of Dirac operator.

To end this section, let us observe that although the derivations $\delta_1,\, \delta_2$ commute on $\algA$, the operators implementing them do not, as on $L^2(\gR, \dd s)$ one get $[\partial_1, \partial_2] = -i$. Furthermore, one can directly check that 
\begin{equation*}
\pi_0(\delta_2 f) = 0 \text{  for any }f \in \algA.
\end{equation*}

\subsection{The left regular representation}
\label{subsec-leftregrep}

The algebra $C^\ast(G)$ is completely determined by the left regular representation of $G$ on $L^2(G, \dd\mu)$. For  $f \in L^1(G, \dd \mu)$, this representation is given on $\psiab \in L^2(G, e^a \,\dd a \dd b)$ by
\begin{equation*}
(\pi_\regular(f) \psiab)(a,b) = \int_{\gR^2} \dd a' \dd b' e^{a'} \fab(a',b')\, \psiab(a - a', e^{a'}(b-b')).
\end{equation*}
Choosing the variables $(a,\beta)$ on $G$, this could be rewritten as
\begin{equation*}
(\pi_\regular(f) \psiabeta)(a,\beta) = \int_\gR \dd a'\; \fabeta(a',\beta)\, \psiabeta(a-a', e^{-a'} \beta)
\end{equation*}
for any $\fabeta \in \caD(\gR, \FourierDR(\hgR))$ and $\psiabeta \in L^2(G, \tfrac{1}{2\pi}e^{-a} \,\dd a \dd \beta)$, where $\psiabeta$ is defined from $\psiab$ as in \eqref{eq-deffabetafromfab}.

It is known (see for instance \cite{Khal74a}) that the left-regular representation decomposes into irreducible representations involving only $\pi_+$ and $\pi_-$. The explicit decomposition is done as follows.

First, this representation decomposes into two pieces. Take $\beta \in \hgR$ and introduce a pair $(\nu, s)$ with $\nu \in \{-,+\}$ and $s \in \gR$ such that $\beta = \nu e^{-s}$. Let us define two Hilbert spaces
\begin{equation*}
\caH_\pm = L^2(\gR^2, e^{-(a+s)}\dd a \dd s),
\end{equation*}
and two maps
\begin{equation*}
\zeta_\pm: L^2(G, \tfrac{1}{2\pi}e^{-a} \,\dd a \dd \beta) \ni 
   \psiabeta \mapsto \psiabeta^\pm \in \caH_\nu,
\end{equation*}
by the simple change of variables
\begin{equation*}
\psiabeta^\pm(a,s) \vc  \psiabeta(a, \pm e^{-s}).
\end{equation*}
   
\begin{lemma}
The operator $\zeta \vc \zeta_- \oplus \zeta_+ : L^2(G, \tfrac{1}{2\pi}e^{-a} \,\dd a \dd \beta) 
\to \caH_- \oplus \caH_+$ is unitary.
\end{lemma}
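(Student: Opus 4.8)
The plan is to show $\zeta = \zeta_- \oplus \zeta_+$ is unitary by verifying that it is a bijective isometry, working out everything by the explicit change of variables $\beta = \nu e^{-s}$ on each of the two half-lines $\beta > 0$ and $\beta < 0$. The key observation is that $\hgR \setminus \{0\}$ is the disjoint union of $\gR^*_+$ and $\gR^*_-$, and on each component the substitution $\beta = \nu e^{-s}$ is a smooth diffeomorphism onto $\gR$ in the variable $s$; since $\{0\}$ is a Lebesgue-null set, the map $\psiabeta \mapsto (\psiabeta^-, \psiabeta^+)$ loses no information.

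First I would carry out the change of variables in the norm. On the source side the measure is $\tfrac{1}{2\pi} e^{-a}\, \dd a\, \dd\beta$ on $\gR \times \hgR$. Splitting the $\beta$-integral over $\gR^*_+ \sqcup \gR^*_-$ and substituting $\beta = \nu e^{-s}$, hence $\dd\beta = -\nu e^{-s}\, \dd s$, so that $\abs{\dd\beta} = e^{-s}\, \dd s$ and $s$ ranges over all of $\gR$, one gets
\begin{equation*}
\norm{\psiabeta}^2_{L^2(G,\, \tfrac{1}{2\pi}e^{-a}\dd a \dd\beta)}
= \sum_{\nu \in \{-,+\}} \tfrac{1}{2\pi}\int_{\gR^2} \abs*{\psiabeta(a,\nu e^{-s})}^2\, e^{-a}\, e^{-s}\, \dd a\, \dd s
= \tfrac{1}{2\pi}\sum_{\nu} \norm{\psiabeta^\nu}^2_{\caH_\nu},
\end{equation*}
using the definition $\caH_\pm = L^2(\gR^2, e^{-(a+s)}\dd a\, \dd s)$ and $\psiabeta^\nu(a,s) = \psiabeta(a,\nu e^{-s})$. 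This shows $\zeta$ is isometric up to the harmless constant $\tfrac{1}{2\pi}$ built into the source measure — that is, $\zeta$ is exactly isometric with the measures as written, since the $\tfrac{1}{2\pi}$ is already incorporated on the left.

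Then I would establish surjectivity by exhibiting the inverse: given $(\chi_-, \chi_+) \in \caH_- \oplus \caH_+$, define $\psiabeta$ on $\gR \times \hgR$ by $\psiabeta(a,\beta) \vc \chi_\nu(a, -\log(\nu\beta))$ for $\beta = \nu e^{-s}$, i.e. $s = -\log(\nu\beta)$, and $\psiabeta(a,0) \vc 0$. The inverse substitution $s \mapsto \beta = \nu e^{-s}$ and the same Jacobian computation show $\psiabeta$ lies in $L^2(G, \tfrac{1}{2\pi}e^{-a}\dd a\, \dd\beta)$ with $\zeta(\psiabeta) = (\chi_-,\chi_+)$; injectivity is immediate from the isometry property. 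Hence $\zeta$ is a surjective isometry, i.e. unitary.

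The only genuinely delicate point — and the one I would be careful to state rather than gloss over — is that the fibre $\beta = 0$, which is precisely the kernel of the quotient map in the short exact sequence \eqref{eq-secC*}, is a null set for $e^{-a}\dd a\, \dd\beta$, so that $L^2$-classes are unaffected by ignoring it; everything else is a routine one-dimensional change of variables on each half-line, done separately for $\nu = +$ and $\nu = -$. I expect no real obstacle here: the main bookkeeping is simply tracking the weight $e^{-(a+s)}$ against $e^{-a}\abs{\dd\beta} = e^{-a}e^{-s}\dd s$ and checking they match.
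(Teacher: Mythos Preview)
Your proof is correct and follows exactly the same route as the paper: verify the isometry by the change of variables $\beta = \nu e^{-s}$ with $\abs{\dd\beta} = e^{-s}\,\dd s$ on each half-line, and note that surjectivity is immediate. You supply more detail than the paper (the explicit inverse and the remark that $\{\beta=0\}$ is null), but the argument is the same; the apparent $\tfrac{1}{2\pi}$ mismatch you flag is already present in the paper's own formula \eqref{eq-scalarproductLtwoG} and is a harmless normalization convention on $\caH_-\oplus\caH_+$, not a defect of your proof.
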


\begin{proof}
The surjectivity of $\zeta$ is straightforward, the fact that the map preserves the inner product follows directly from computations:
\begin{equation}
\label{eq-scalarproductLtwoG}
\scalprod{\tildpsi_1}{\tildpsi_2} = 
\tfrac{1}{2\pi} \sum_{\nu = \pm} \int_{\gR^2} \dd a \dd s\; e^{-(a+s)}\, \overline{\tildpsi_1^\nu(a, s)}  \,\,\tildpsi_2^\nu(a, s).
\end{equation}
\end{proof}

Therefore, instead of considering $\pi_\regular$, we may study the representation $\pi_\regular^\zeta \vc \zeta \, \pi_\regular \, \zeta^*$ on $\caH_- \oplus \caH_+$ which actually restricts to $\caH_-$ and $\caH_+$, and \begin{equation}
\label{eq-regularrepresentation}
\left( \pi_{\regular}^\zeta(f)\,\psiabeta^\pm \right)(a,s) 
= \int_\gR \dd a' \fabeta(a', \nu e^{-s})\, \psiabeta^\pm(a-a', s+a').
\end{equation}

\begin{lemma}
For $\varphi \in L^\infty(\gR, \dd x)$, define $\tildvarphi(a,s) \vc  \varphi(a+s)$. Then, for any $\psiabeta \in \caH_+ \oplus \caH_-$ and $\fabeta \in \caD(\gR, \FourierDR(\hgR))$, 
\begin{equation}
\label{eq-functionofucommuteswithrepresentation}
\pi_{\regular}^\zeta(f)\,(\tildvarphi \,\psiabeta^\pm) = \tildvarphi\, \pi_{\regular}^\zeta(f)\,\psiabeta^\pm.
\end{equation}
\end{lemma}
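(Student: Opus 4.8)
The plan is to verify the commutation relation \eqref{eq-functionofucommuteswithrepresentation} by a direct computation using the explicit formula \eqref{eq-regularrepresentation} for $\pi_\regular^\zeta$, exploiting the key feature that the argument of $\psiabeta^\pm$ shifts $a$ and $s$ in opposite directions, so that the ``total'' variable $a+s$ is preserved by the integral kernel.

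Concretely, fix $\nu \in \{-,+\}$, $\fabeta \in \caD(\gR, \FourierDR(\hgR))$, and $\psiabeta^\pm \in \caH_\nu$. Applying \eqref{eq-regularrepresentation} to the vector $\tildvarphi\, \psiabeta^\pm$, whose value at $(a-a', s+a')$ is $\varphi\big((a-a') + (s+a')\big)\, \psiabeta^\pm(a-a', s+a') = \varphi(a+s)\, \psiabeta^\pm(a-a', s+a')$, I would pull the factor $\varphi(a+s)$ — which does not depend on the integration variable $a'$ — outside the integral over $a'$. What remains inside the integral is exactly $\int_\gR \dd a'\, \fabeta(a', \nu e^{-s})\, \psiabeta^\pm(a-a', s+a') = \big(\pi_\regular^\zeta(f)\, \psiabeta^\pm\big)(a,s)$, and $\varphi(a+s) = \tildvarphi(a,s)$, giving the claimed identity. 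The computation is essentially a one-line manipulation once the formulas are in place.

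There is, however, one point that deserves a sentence of care: the statement quantifies over $\varphi \in L^\infty(\gR, \dd x)$, so $\tildvarphi$ is a bounded multiplication operator on $\caH_\nu$ (since $e^{-(a+s)}\dd a\,\dd s$ is a $\sigma$-finite measure and $|\tildvarphi(a,s)| = |\varphi(a+s)| \leq \norm{\varphi}_\infty$), hence $\tildvarphi\,\psiabeta^\pm \in \caH_\nu$ and all expressions make sense; and since $\fabeta$ has compact support in $a'$ and $\FourierDR(\hgR) \subset \caS(\hgR)$, the integral in \eqref{eq-regularrepresentation} converges absolutely for a.e.\ $(a,s)$, so Fubini/dominated convergence justifies moving $\varphi(a+s)$ through the integral. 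I would phrase the argument directly in terms of the kernel, noting that the Schwartz kernel of $\pi_\regular^\zeta(f)$ on $\caH_\nu$ is supported on the set where the total variable is preserved.

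I do not anticipate a genuine obstacle here: the only thing to get right is the bookkeeping of which combination of variables is invariant, namely that the substitution $(a,s) \mapsto (a-a', s+a')$ appearing in \eqref{eq-regularrepresentation} leaves $a+s$ fixed, which is precisely what makes multiplication by any function of $a+s$ commute with the representation. This observation is also the structural reason behind the decomposition of $\pi_\regular$ into copies of $\pi_-\oplus\pi_+$ used in \ref{subsec-leftregrep}, and I would flag that connection in a remark if space permits.
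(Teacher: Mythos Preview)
Your proof is correct and follows exactly the same approach as the paper, which simply states that the result ``follows directly from \eqref{eq-regularrepresentation}''; you have merely unpacked that one line by making explicit the invariance of $a+s$ under $(a,s)\mapsto(a-a',s+a')$ and adding the routine analytic justifications.
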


\begin{proof}
This follows directly from (\ref{eq-regularrepresentation}).
\end{proof}

This results means that the (commutative) algebra of functions $\varphi \in L^\infty(\gR, \dd x)$ is contained in the commutant of the representation $\pi_{\regular}^\zeta$. Thus, $\pi_{\regular}^\zeta$ can be decomposed as a direct Hilbert integral along $\gR$. Introducing a further change of variables
\begin{equation*}
\tildphi^\pm(v,s) \vc  e^{-v/2}\,\psiabeta^\pm(v-s,s),
\end{equation*}
we see that $\tildphi^\nu \in L^2(\gR^2, \dd v \dd s)$, and the representation mapped to that Hilbert space becomes
\begin{align*}
\left( \pi_{\regular}^\zeta(f)\,\tildphi^\pm \right)(v,s) 
&\vc  e^{-v/2}\left( \pi_{\regular}(f)\,\psiabeta^\pm \right)(v-s,s)
= e^{-v/2}\int_\gR \dd a' \fabeta(a', \pm e^{-s})\, \psiabeta^\pm(v-(s+a'), s+a') \\
&= e^{-v/2}\int_\gR \dd a' \fabeta(a', \pm e^{-s})\, e^{v/2}\tildphi^\pm(v, s+a')
= \int_\gR \dd u\, \fabeta(u-s, \pm e^{-s})\, \tildphi^\pm(v, u).
\end{align*}
Comparing this last expression with \eqref{eq-representationinvariablesabeta}, we get the direct Hilbert integral decomposition of $\pi_{\regular}^\zeta$ along $v \in \gR$:
\begin{equation}
\pi_{\regular}^\zeta \simeq \int_\gR^{\oplus} \dd v \, (\pi_+ \oplus \pi_-) = \pi_{\regular}^+ \oplus \pi_{\regular}^-\,,
\end{equation}
where we define
\begin{align}
\label{eq-defpiplusminusregular}
\pi_{\regular}^\pm \vc \int_\gR^{\oplus} \dd v \, \pi_\pm\,.
\end{align}
In this decomposition of the left regular representation into irreducible representations, the one-dimensional representations, $\pi^p_0$, do not appear, so only the set $\{ \pi_-, \pi_+ \}$ is the principal series of $G$.

\subsection{\texorpdfstring{The representation associated to $\tau$}{The representation associated to tau}}

We associate to $\tau$ a representation which generalizes the GNS construction \cite[II.6.7.3]{Blac06a}:
\begin{proposition}
The representation $\pi_\tau$ associated to the trace $\tau$ defined by \eqref{eq-deftau} is unitarily equivalent to $\pi_0$.
\end{proposition}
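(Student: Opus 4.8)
The strategy is to unwind the generalized GNS construction associated to the trace $\tau$ and recognize the resulting Hilbert space and action explicitly. First I would recall that, for a lower semicontinuous trace on a $C^\ast$-algebra, the GNS-type construction of \cite[II.6.7.3]{Blac06a} produces a Hilbert space completion of $\frakN_\tau \vc \{f \in C^\ast(G) \mid \tau(f^\ast \ast f) < \infty\}$ with respect to the semi-inner product $\scalprod{f}{g}_\tau \vc \tau(f^\ast \ast g)$, after quotienting by the null space $\{f \mid \tau(f^\ast \ast f) = 0\}$, with $C^\ast(G)$ acting by left multiplication. Since $\algA \subset \frakM_\tau$ by Lemma~\ref{lemma-tracetau} and $\algA$ is dense in $C^\ast(G)$, it suffices to compute this construction on $\algA$ and identify the completion.

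The key computation is \eqref{eq-tau(f*f)}: for $f \in \algA$,
\begin{equation*}
\tau(f^\ast \ast f) = \int_\gR \dd a\, \abs*{\fabeta(a,0)}^2.
\end{equation*}
This shows two things at once. The null space is exactly $\{f \in \algA \mid \fabeta(a,0) = 0 \text{ for all } a\}$, i.e. (by density and \eqref{eq-secC*}) the kernel of the quotient map $\rho$ onto $C^\ast(\gR)$, which is also $\ker \pi_0$; and the map $f \mapsto \fabeta(\cdot, 0)$ identifies $\frakN_\tau \cap \algA$ modulo the null space isometrically (up to the factor $\tfrac{1}{2\pi}$ absorbed in the definition of $\tau$, or equivalently passing to the $(\alpha,\beta)$-variables where \eqref{eq-tau(f*f)} gives $\tfrac{1}{2\pi}\int \abs*{\falpbeta(\alpha,0)}^2 \dd\alpha$) with a dense subspace of $L^2(\hgR, \tfrac{1}{2\pi}\dd\alpha) \simeq L^2(\gR, \dd s)$, the latter identification via the Fourier transform $\falpbeta(\cdot,0) \mapsto \phi$ used in defining $\caH_0$ in \ref{subsec-irrep}. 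So the GNS Hilbert space of $\tau$ is unitarily $\caH_0$.

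Finally I would check that left multiplication by $f \in C^\ast(G)$ transported to $\caH_0$ under this identification is precisely $\pi_0(f)$. On $\algA$, using \eqref{eq-betazeroinproduct}, $(f \ast g)\check{\phantom{f}}(\alpha, 0) = \falpbeta(\alpha,0)\,\galpbeta(\alpha,0)$, so left multiplication acts on the vector $\galpbeta(\cdot,0)$ by pointwise multiplication with $\falpbeta(\cdot,0)$; and under Fourier transform to the $s$-variable this is exactly the formula $(\pi_0(f)\phi)(s) = \falpbeta(p,0)$ acting via $\pi_0 = \int^\oplus \dd p\, \pi_0^p$ composed with Fourier transform, as recorded in \ref{subsec-irrep}. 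Since both representations are continuous and agree on the dense subalgebra $\algA$, they agree on all of $C^\ast(G)$, and the intertwining unitary is the one constructed above. The only mildly delicate point is bookkeeping of the measure-theoretic identifications and the normalization constant $\tfrac{1}{2\pi}$ across the three sets of variables, together with verifying the density of the image of $\algA$ — i.e. that $\{\falpbeta(\cdot,0) \mid f \in \algA\} = \FourierDR(\hgR)$ is dense in $L^2$, which is immediate since $\FourierDR(\gR) \subset \caS(\gR)$ contains, e.g., all Gaussians' analogues or simply is dense by Proposition~\ref{prop-fouriersmoothcompactfunctions}. I expect this normalization/identification bookkeeping to be the main (though routine) obstacle; the conceptual content is entirely in \eqref{eq-tau(f*f)} and \eqref{eq-betazeroinproduct}.
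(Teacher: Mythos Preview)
Your proposal is correct and follows essentially the same route as the paper. Both arguments carry out the GNS construction for $\tau$, identify the null space with $\ker\rho = \caK_- \oplus \caK_+$, recognize the completion as $L^2(\hgR,\tfrac{1}{2\pi}\dd\alpha)$, and match the left-multiplication action with $\pi_0$ after a Fourier transform; the paper does this slightly more at the $C^\ast$-level (identifying $\frakN_\tau/N_\tau$ with $L^2\cap C_0(\hgR)$ directly via $\caF\circ\rho$ and writing the action as the convolution \eqref{eq-pitau} in the $a$-variable), whereas you work on $\algA$ using \eqref{eq-betazeroinproduct} in the $\alpha$-variable and extend by density, but the content is the same.
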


\begin{proof}
As in \cite[II.6.7.3]{Blac06a}, let us define
\begin{equation*}
\frakN_\tau \vc \{ f \in C^\ast(G) \, \mid \, \tau(f^\ast \ast f) < \infty \}
\quad\text{and}\quad
N_\tau \vc \{ f \in C^\ast(G) \, \mid \, \tau(f^\ast \ast f) = 0 \}.
\end{equation*}
One has $f \in N_\tau$ if and only if $\tfrac{1}{2\pi}\tau_\gR\left( \overline{(\caF \circ \rho)(f)} (\caF \circ \rho)(f)\right) = \tfrac{1}{2\pi}\int_{\hgR} \dd \alpha \, \abs*{(\caF \circ \rho)(f)}^2(\alpha) =  0$, which is equivalent to $(\caF \circ \rho)(f) = 0$, so that, $\caF$ being an isomorphism, $N_\tau =  \ker\rho = \caK_- \oplus \caK_+$ by \eqref{eq-secC*}. In the same way, $f \in \frakN_\tau$ if and only if $\tfrac{1}{2\pi}\int_{\hgR} \dd \alpha \, \abs*{(\caF \circ \rho)(f)}^2(\alpha) < \infty$, which is equivalent to the fact that $(\caF \circ \rho)(f) \in L^2(\hgR, \tfrac{1}{2\pi}\dd\alpha) \cap C_0(\hgR)$, so that, because $N_\tau$ is the kernel of \eqref{eq-secC*}, the quotient $\frakN_\tau/N_\tau$ can be identified with the subspace of $C_0(\hgR) \simeq C^\ast(\gR)$ of square integrable functions: $\frakN_\tau/N_\tau \simeq L^2(\hgR, \tfrac{1}{2\pi}\dd\alpha) \cap C_0(\hgR)$. Since \eqref{eq-tau(f*f)} shows that the scalar product induced by $\tau$ on $\frakN_\tau/N_\tau$ is the scalar product on $L^2(\hgR, \tfrac{1}{2\pi}\dd\alpha)$, the representation space of $\pi_\tau$ is $\caH_\tau \vc L^2(\hgR, \tfrac{1}{2\pi}\dd\alpha)$.

Performing a Fourier transform $L^2(\hgR, \tfrac{1}{2\pi}\dd\alpha) \rightarrow L^2(\gR, \dd a)$ (which is the inverse Fourier transform defining $\caF : C^\ast(\gR) \rightarrow C_0(\hgR)$), we now characterize $\pi_\tau$ on $\caH_\tau$: in the variable $a \in \gR$, the representation takes the explicit form
\begin{equation}
\label{eq-pitau}
\big(\pi_\tau(f)\psi\big)(a) = \int_{\gR} \dd a' \,\fabeta(a-a', 0)\, \psi(a')
\end{equation}
for any $\psi \in L^2(\gR, \dd a)$. It is shown in section \ref{subsec-irrep} that $\pi_0$ defined in \eqref{eq-defpi0} can be presented as a representation on $\caH_0 \vc  L^2(\gR, \dd s)$ by a Fourier transform $\phi(s) \vc \frac{1}{2 \pi} \int_{\gR} \dd p\; \hphi(p) \,e^{i p s}$. A direct computation gives \eqref{eq-pitau} with the Fourier transform $\psi(a) \vc \frac{1}{2 \pi} \int_{\gR} \dd p\; \hphi(p) \,e^{- i p a}$.
\end{proof}

\subsection{Traces from representations}
\label{subsec-tracefromrepresentations}

The representations $\pi_\pm$ are traceable (in the sense of \cite[17.1.6]{Dixm69a}) and the respective traces $\tr_\pm(f)$ of these representations, called also normalized characters of $C^\ast(G)$, are 
\begin{equation}
\tr_\pm(f) \vc  \Tr(\pi_\pm(f)), \quad f \in C^\ast(G)_+.  \label{caractere}
\end{equation}
If they are finite, they are computed in the variables $(a, \beta)$ by the integrals
\begin{align}
\label{eq-expressiontracesplusmoins}
\tr_-(f) = \int_{-\infty}^0 \dd \beta\; \tfrac{1}{\beta} \fabeta(0,\beta),
\qquad
\tr_+(f) = \int_0^{+\infty} \dd \beta\; \tfrac{1}{\beta} \fabeta(0,\beta),
\end{align}
or, in the variable $u$, $\beta = \pm e^{-u}$, by
\begin{equation}
\label{eq-tracefabeta}
\tr_\pm(f) = \int_\gR \dd u\; \fabeta(0, \pm e^{-u}).
\end{equation}

\noindent These traces are finite when $\pi_\pm(f)$ is trace-class (see Proposition~\ref{trace-class}) and computable also using the formula $\tr_\pm(f)= \int_\gR \dd u\; K_{\pi_{\pm}(f)}(u,u)$. 

For $\nu = 0$, we define $\tr_0(f) \vc \Tr (\pi_0(f))$ when $\pi_0(f)$ is trace-class. When $f \in \algA$, the Schwartz kernel of $\pi_{0}(f)$ is $K_{\pi_{\nu}(f)}(s,u) = \fabeta(u-s, 0)$, so is continuous on the diagonal, so that, using \cite[Corollary~3.2]{Bris91a}, its trace should be $\Tr (\pi_0(f)) = \int \dd u\, K_{\pi_{\nu}(f)}(u,u) = \int \dd u\, \fabeta(0, 0)$. But it is finite only for $f \in \algA$ such that $\fabeta(0, 0) = \tau(f) = 0$.

Neither of the above traces $\tr_\nu$, for $\nu \in \{-,0,+\}$, is related to the trace $\tau$ on $C^\ast(G)$ defined in \ref{subsec-trace}. However, $\tau$ is related to the individual traces of the family of the one-dimensional representations $\pi_0^p$ of $G$: if
\begin{equation*}
\tr_0^p(f) \vc \Tr(\pi^p_0(f)) = \int_{\gR^2} \dd a \dd b e^{a} \,\fab(a,b) \,e^{iap},
\end{equation*}
we get
\begin{equation*}
\tau(f) = \int_\gR \dd p\, \tr_0^p(f).
\end{equation*}
In other words, the trace $\tau$ on the algebra is the integration along $\gR$ of the field of traces $p \mapsto \tr_0^p$ defined by the one-dimensional irreducible representations $\pi_0^p$.

\subsection{Hilbert-Schmidt and trace-class operators}

A complete characterization of the Hilbert-Schmidt and trace-class operators on the representation spaces $\caH_\pm$ of $\pi_\pm$, is given in \cite{Khal74a}. Here we expose the main results in our notations. 

We denote by $\caL^1(\caH_\pm)$ (resp. $\caL^2(\caH_\pm)$) the space of trace-class operators (resp. Hilbert-Schmidt operators) on $\caH_\pm$ and by $\caL^1(\pm)$ (resp. $\caL^2(\pm)$) the space of couples $S = (S_-, S_+)$ of operators $S_\pm \in \caL^1(\caH_\pm)$ (resp. $S_\pm \in \caL^2(\caH_\pm)$), which are Banach spaces for the norms
\begin{equation*}
\norm*{S}_p^p \vc  \norm*{S_-}_p^p + \norm*{S_+}_p^p,\quad p=1,2.
\end{equation*}

Let us define the unbounded Duflo-Moore operator $\theta$ on $\caH_\pm$ \cite{DuflMoor76a} by
\begin{equation}
(\theta \,\phi)(s) \vc  e^{-s/2} \phi(s). \label{theta}
\end{equation}

\noindent This operator $\theta$ is related to the one-parameter group of automorphisms $\sigma_{t}$ (Lemma~\ref{lem-sigmat}):
\begin{lemma}
\label{lem-commutepinudelta}
When $\phi$ is in the domain of $\theta$, then $\pi_\pm(f)\, \phi$ is also in the domain of $\theta$ for $f \in \algA$ and
\begin{equation*}
\theta \,\pi_\pm(f) = \pi_\pm(\sigma_{-i/2}(f))\, \theta.
\end{equation*}
\end{lemma}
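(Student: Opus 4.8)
The plan is to verify the identity $\theta\,\pi_\pm(f) = \pi_\pm(\sigma_{-i/2}(f))\,\theta$ by a direct computation using the integral kernel of $\pi_\pm(f)$ in the variables $(a,\beta)$, together with the explicit form of $\sigma_t$. First I would recall from \eqref{eq-representationinvariablesabeta} that $(\pi_\nu(f)\phi)(s) = \int_\gR \dd u\; \fabeta(u-s,\nu e^{-s})\,\phi(u)$, and from the definition of $\sigma_t$ in variables $(a,\beta)$ that $\sigma_t(\fabeta)(a,\beta) = e^{ita}\fabeta(a,\beta)$ (this follows from $\sigma_t(\fab)(a,b) = e^{ita}\fab(a,b)$ and \eqref{eq-deffabetafromfab}, since the prefactor $e^{ita}$ passes through the $b$-integral). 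Hence $\sigma_{-i/2}(\fabeta)(a,\beta) = e^{a/2}\fabeta(a,\beta)$, which is the key simple observation driving the whole computation.

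Next I would compute both sides applied to a function $\phi$ in the domain of $\theta$, evaluated at $s$. On the left,
\[
\big(\theta\,\pi_\pm(f)\phi\big)(s) = e^{-s/2}\int_\gR \dd u\; \fabeta(u-s,\pm e^{-s})\,\phi(u).
\]
On the right,
\[
\big(\pi_\pm(\sigma_{-i/2}(f))\,\theta\phi\big)(s) = \int_\gR \dd u\; \sigma_{-i/2}(\fabeta)(u-s,\pm e^{-s})\,\big(\theta\phi\big)(u) = \int_\gR \dd u\; e^{(u-s)/2}\fabeta(u-s,\pm e^{-s})\,e^{-u/2}\phi(u),
\]
and the factor $e^{(u-s)/2}e^{-u/2} = e^{-s/2}$ matches the left-hand side exactly, giving equality pointwise.

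The remaining issue is genuinely analytic rather than algebraic: one must justify that $\pi_\pm(f)\phi$ lies in the domain of $\theta$ whenever $\phi$ does, and that all the manipulations above (in particular pulling $e^{-s/2}$ inside the integral and applying $\theta$ before $\pi_\pm(f)$) are legitimate. For $f \in \algA = \caD(\gR,\FourierDR(\hgR))$, the function $\fabeta(\cdot,\beta)$ is compactly supported in the first variable uniformly as $\beta$ ranges over compact sets, and $\fabeta(a,\cdot) \in \FourierDR(\hgR) \subset \caS(\hgR)$; combined with $\beta = \pm e^{-s}$ this gives that $K_{\pi_\pm(f)}(s,u) = \fabeta(u-s,\pm e^{-s})$ decays rapidly in $u-s$ and is controlled for $s \to \pm\infty$. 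One can then bound $\norm{\theta\,\pi_\pm(f)\phi}^2 = \int_\gR \dd s\; e^{-s}\,\abs{\int_\gR \dd u\; \fabeta(u-s,\pm e^{-s})\phi(u)}^2$ and show it is finite when $\int e^{-s}\abs{\phi(s)}^2\dd s < \infty$, using Cauchy--Schwarz on the inner integral and the support properties of $\fabeta$. I expect this domain bookkeeping to be the main obstacle, though it is routine given Proposition~\ref{prop-fouriersmoothcompactfunctions}; the algebraic core of the lemma is immediate once the formula $\sigma_{-i/2}(\fabeta)(a,\beta) = e^{a/2}\fabeta(a,\beta)$ is in hand.
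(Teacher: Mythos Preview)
Your proposal is correct and follows essentially the same approach as the paper: both compute $\pi_\pm(\sigma_{-i/2}(f))\,\theta\phi$ in the $(a,\beta)$ variables using $\sigma_{-i/2}(\fabeta)(a,\beta)=e^{a/2}\fabeta(a,\beta)$ and observe that $e^{(u-s)/2}e^{-u/2}=e^{-s/2}$, yielding $\theta\,\pi_\pm(f)\phi$. The paper's proof is terser on the domain issue (it simply reads off ``in particular $\pi_\pm(f)\phi$ is in the domain of $\theta$'' from the computation), whereas you sketch the analytic justification more explicitly; but there is no substantive difference.
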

Recall that $\sigma_t(\fabeta)(a,\beta) = e^{ita} \fabeta(a,\beta)$, so that $\theta$ is a realization of the modular factor $\Delta^{1/2}$ on the representation spaces $\caH_\pm$. Iterating this relation, one gets $\pi_\pm(f)\, \theta^2 = \theta^2 \,\pi_\pm(\sigma(f))$.

\begin{proof}
For any $\phi$ in the domain of $\theta$, one has in variables $(a,\beta)$, 
\begin{align*}
\big(\pi_\pm(\sigma_{-i/2}(f))\, \theta \,\phi\big)(s) &= \int_\gR \dd u\; e^{(u-s)/2} \fabeta(u-s, \pm e^{-s})\, e^{-u/2} \,\phi(u) 
 = e^{-s/2} \, \int_\gR \dd u\; \fabeta(u-s, \pm e^{-s})\,  \phi(u) \\
& = \big(\theta \,\pi_\pm(f)\,\phi\big)(s).
\end{align*}
In particular $\pi_\pm(f) \,\phi$ is in the domain of $\theta$.
\end{proof}

Let $f \in \algA$. Since $\Delta^{-1/2} f \in \algA$, define the operators $\caP_\pm(f) \vc  \theta \,\pi_\pm (\Delta^{-1/2} f)$ on $\caH_\pm$ and the Plancherel transformation $\caP$
\begin{equation*}
f \mapsto \caP(f) \vc  (\caP_-(f), \caP_+(f)) 
\end{equation*}
mapping $f$ to a pair of operators on $\caH_- \oplus \caH_+$. 

\begin{proposition}[\cite{Khal74a}]
For any $f\in \algA$, one has $\caP(f) \in \caL^2(\pm)$ and
\begin{equation}
\label{eq-plancherelformula}
\norm*{f}_{L^2(G,\dd \mu)} = \norm*{\caP(f)}_2.
\end{equation}
The application $f \mapsto \caP(f)$ extends to an isometric isomorphism from $L^2(G,\dd \mu)$ onto $\caL^2(\pm)$.
\end{proposition}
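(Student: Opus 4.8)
The plan is to reduce the assertion to the ordinary Plancherel theorem on $\gR$ by computing the Schwartz kernel of $\caP_\pm(f)$ in closed form. First I would observe that for $f\in\algA$ the function $\Delta^{-1/2}f$ again belongs to $\algA$ and, by \eqref{eq-deffabetafromfab}, has $(a,\beta)$-presentation $e^{-a/2}\fabeta(a,\beta)$. Feeding this into the representation formula \eqref{eq-representationinvariablesabeta} and composing on the left with $\theta$ (see \eqref{theta}) --- or, equivalently, using Lemma~\ref{lem-commutepinudelta} --- one gets that $\caP_\pm(f)$ acts on $\caH_\pm$ through the integral kernel
\[
K_{\caP_\pm(f)}(s,u)=e^{-s/2}\,e^{-(u-s)/2}\,\fabeta(u-s,\pm e^{-s})=e^{-u/2}\,\fabeta(u-s,\pm e^{-s}).
\]

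Second, I would estimate the Hilbert--Schmidt norm of this operator. Since $\fabeta\in\caD(\gR,\FourierDR(\hgR))$ is compactly supported in $a$ while $\fabeta(a,\cdot)\in\FourierDR(\hgR)\subset\caS(\hgR)$, the kernel is square integrable over $\gR^2$, so $\caP_\pm(f)$ is Hilbert--Schmidt and hence $\caP(f)\in\caL^2(\pm)$. Writing $\norm*{\caP_\pm(f)}_2^2=\int_{\gR^2}\dd s\,\dd u\; e^{-u}\,|\fabeta(u-s,\pm e^{-s})|^2$ and performing the substitutions $a=u-s$ (at fixed $s$) and then $\beta=\pm e^{-s}$, this becomes $\int_\gR\dd a\; e^{-a}\int_{\pm(0,\infty)}\dd\beta\,|\fabeta(a,\beta)|^2$; summing over the two signs gives $\norm*{\caP(f)}_2^2=\int_\gR\dd a\; e^{-a}\int_{\hgR}\dd\beta\,|\fabeta(a,\beta)|^2$. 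Combining this with the Plancherel identity on $\gR$ in the variable $b$ and the relation $\fabeta(a,\beta)=e^a\int_\gR\fab(a,b)\,e^{ib\beta}\,\dd b$ rewrites the right-hand side, up to the normalization conventions fixed in Section~\ref{C*-algebra}, as $\int_{\gR^2}e^a\,|\fab(a,b)|^2\,\dd a\,\dd b=\norm*{f}_{L^2(G,\dd\mu)}^2$, which is \eqref{eq-plancherelformula}.

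Third, for the isometric isomorphism onto $\caL^2(\pm)$, I would use that $\algA=\caD(G)_\ast$ is dense in $L^2(G,\dd\mu)$ to extend $\caP$ by continuity to an isometry $L^2(G,\dd\mu)\to\caL^2(\pm)$, and then establish surjectivity by recognizing this isometry as a composition of explicit bijections. Identifying $\caL^2(\caH_\pm)$ with $L^2(\gR^2,\dd s\,\dd u)$ via Schwartz kernels, $\caP$ is the composition of: $f\mapsto\fab\mapsto\fabeta$ (a bijective isometry onto the appropriate weighted $L^2$-space, by \eqref{eq-deffabetafromfab} and Plancherel in $b$); the splitting $L^2(\hgR)=L^2(0,\infty)\oplus L^2(-\infty,0)$ together with the substitution $\beta=\pm e^{-s}$, whose image is $\caH_\pm=L^2(\gR^2,e^{-(a+s)}\dd a\,\dd s)$; and the shift $u=a+s$ combined with multiplication by $e^{-u/2}$, landing in $L^2(\gR^2,\dd s\,\dd u)$. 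Each arrow is a bijective isometry (onto its target), so $\caP$ is too; equivalently, $\caP^{-1}$ can simply be read off from the kernel formula of the first step. I expect the only real obstacle to be the bookkeeping in the middle step: one must keep track of the Jacobians and of the orientations of the two half-lines under $\beta=\pm e^{-s}$, and match the various factors of $2\pi$ in the normalization conventions, so that the two norms coincide exactly rather than merely up to a constant.
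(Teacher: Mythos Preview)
The paper does not give a proof of this proposition; it is quoted from \cite{Khal74a}. Your argument is the standard one and is correct: compute the integral kernel $K_{\caP_\pm(f)}(s,u)=e^{-u/2}\fabeta(u-s,\pm e^{-s})$, integrate $|K|^2$ with the changes of variables $a=u-s$ and $\beta=\pm e^{-s}$, and reduce to the Euclidean Plancherel identity in the variable $b$; surjectivity then follows by reading the construction backwards as a composition of explicit unitaries. The only point worth flagging is the one you already noticed: with the conventions of \eqref{eq-deffabetafromfab} one finds $\int_{\hgR}\dd\beta\,|\fabeta(a,\beta)|^2=2\pi\,e^{2a}\int_\gR\dd b\,|\fab(a,b)|^2$, so a factor $\tfrac{1}{2\pi}$ must be absorbed into the measure on the $(a,\beta)$ side (compare \eqref{eq-scalarproductLtwoG}, where the paper does exactly this) for the equality \eqref{eq-plancherelformula} to hold on the nose rather than up to a constant.
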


The relation \eqref{eq-plancherelformula} is the Plancherel formula for the group $G$. This relation does not use the representations $\pi_0^p$, because, as mentioned before, the $\pi_0^p$'s are weakly contained in the $\pi_\nu$'s. At first glance, the operators $\pi_\pm(f)$ are expected to be the operators used on the right hand side of the Plancherel formula. But the non-unimodularity of $G$ implies that these operators must be replaced by their ``twisted'' versions $\caP_\pm(f)$.

\begin{corollary}
The operator $\caP_\pm : L^2(G,\dd \mu) \rightarrow \caL^2(\caH_\pm)$ is surjective.
\end{corollary}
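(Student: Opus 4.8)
The plan is to deduce the surjectivity of each component map $\caP_\pm$ directly from the Plancherel isomorphism $\caP = (\caP_-, \caP_+) : L^2(G,\dd\mu) \to \caL^2(\pm)$ of the preceding proposition, exploiting that the target $\caL^2(\pm)$ is, by its very definition, the Hilbert-space direct sum $\caL^2(\caH_-)\oplus\caL^2(\caH_+)$.

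First I would record that, for $f \in \algA$, the proposition yields the estimate $\norm*{\caP_\pm(f)}_2 \leq \norm*{\caP(f)}_2 = \norm*{f}_{L^2(G,\dd\mu)}$; hence each of the two maps $\caP_\pm$, a priori defined on the dense subspace $\algA$ of $L^2(G,\dd\mu)$, extends by continuity to a bounded operator $L^2(G,\dd\mu) \to \caL^2(\caH_\pm)$, and on all of $L^2(G,\dd\mu)$ one still has $\caP(f) = (\caP_-(f),\caP_+(f))$.

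Next, fix $S_+ \in \caL^2(\caH_+)$. Since $\norm*{(0,S_+)}_2^2 = \norm*{S_+}_2^2 < \infty$, the pair $(0,S_+)$ lies in $\caL^2(\pm)$. By the surjectivity part of the Plancherel proposition there exists $f \in L^2(G,\dd\mu)$ with $\caP(f) = (0,S_+)$, that is $\caP_+(f) = S_+$ (and incidentally $\caP_-(f) = 0$). Hence $\caP_+$ is onto $\caL^2(\caH_+)$. Interchanging the roles of the two signs — i.e. solving $\caP(f) = (S_-,0)$ for a given $S_- \in \caL^2(\caH_-)$ — shows in the same way that $\caP_-$ is onto $\caL^2(\caH_-)$.

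There is essentially no obstacle in this argument: the only point deserving a word of care is that $\caL^2(\pm)$ is the \emph{full} direct sum, so that the ``off-diagonal'' pairs $(0,S_+)$ and $(S_-,0)$ are admissible targets; granting this, the corollary is an immediate consequence of the surjectivity of $\caP$.
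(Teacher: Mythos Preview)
Your argument is correct and is exactly the immediate deduction the paper has in mind: the corollary is stated without proof because it follows at once from the Plancherel isomorphism $\caP : L^2(G,\dd\mu)\to\caL^2(\pm)=\caL^2(\caH_-)\oplus\caL^2(\caH_+)$ by hitting the pairs $(S_-,0)$ and $(0,S_+)$. There is nothing to add.
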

\noindent This corollary tells us that we know all Hilbert-Schmidt operators on $\caH_\nu$. They are of the form $\caP_\nu(f)$ for some $f \in L^2(G,\dd \mu)$:

\begin{proposition}
Let $f \in \algA$. Then, $\pi_\pm(f)$ is a Hilbert-Schmidt operator if and only if there exists $g \in \algA$ such that
\begin{equation*}
\fabeta(a,\beta) = \sqrt{\abs*{\beta}} \,e^{-a/2} \,\gabeta(a,\beta) \, \text{ for any }(a,\beta) \in \gR \times \hgR.
\end{equation*}
\end{proposition}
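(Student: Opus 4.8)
The strategy is to reduce the statement to the Plancherel characterization: by the corollary above, $\pi_\pm(f)$ is Hilbert--Schmidt on $\caH_\pm$ precisely when it is of the form $\caP_\pm(g)$ for some $g \in L^2(G,\dd\mu)$. So the first step is to compute $\caP_\pm(g) = \theta\,\pi_\pm(\Delta^{-1/2} g)$ in the variables $(a,\beta)$ for an arbitrary $g \in \algA$, using \eqref{theta}, \eqref{eq-representationinvariablesabeta}, and the fact that $\Delta^{-1/2} g$ corresponds to the function $e^{-a/2}\gabeta(a,\beta)$ in the $(a,\beta)$ presentation (this is just $\sigma_{i/2}$ applied on the level of functions, cf.\ Lemma~\ref{lem-commutepinudelta}). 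The Schwartz kernel computation, parallel to the one displayed before Proposition~\ref{prop-fouriersmoothcompactfunctions}, gives
\begin{equation*}
K_{\caP_\pm(g)}(s,u) = e^{-s/2}\, e^{-(u-s)/2}\, \gabeta(u-s, \pm e^{-s}) = e^{-u/2}\,\gabeta(u-s, \pm e^{-s}).
\end{equation*}
Comparing with the Schwartz kernel $K_{\pi_\pm(f)}(s,u) = \fabeta(u-s, \pm e^{-s})$ of $\pi_\pm(f)$, one sees that $\pi_\pm(f) = \caP_\pm(g)$ would force the identity $\fabeta(u-s,\pm e^{-s}) = e^{-u/2}\gabeta(u-s,\pm e^{-s})$ on the kernels, i.e.\ (setting $a = u-s$, so $u = a+s$ and $\pm e^{-s} = \beta$, $s = -\log|\beta|$, $u = a - \log|\beta|$) the relation $\fabeta(a,\beta) = |\beta|^{1/2}\, e^{-a/2}\,\gabeta(a,\beta)$. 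This gives the ``only if'' direction once one knows such a $g$ can be taken in $\algA$ rather than merely $L^2$.

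For the ``if'' direction, suppose $\fabeta(a,\beta) = \sqrt{|\beta|}\,e^{-a/2}\,\gabeta(a,\beta)$ with $g \in \algA$. Then by the kernel comparison above $\pi_\pm(f)$ has the same Schwartz kernel as $\caP_\pm(g)$, hence equals it as an operator on $\caH_\pm$, and $\caP_\pm(g) \in \caL^2(\caH_\pm)$ by the Proposition of Khalgui cited above; so $\pi_\pm(f)$ is Hilbert--Schmidt. Alternatively and more directly, one checks that $\pi_\pm(f)$ is Hilbert--Schmidt iff its kernel is in $L^2(\gR^2,\dd s\,\dd u)$, and computes
\begin{equation*}
\norm*{\pi_\pm(f)}_2^2 = \int_{\gR^2}\dd s\,\dd u\;\abs*{\fabeta(u-s,\pm e^{-s})}^2 = \int_{\gR^2}\dd a\,\dd s\;\abs*{\fabeta(a,\pm e^{-s})}^2,
\end{equation*}
and the change of variables $\beta = \pm e^{-s}$ (so $\dd s = \dd\beta/|\beta|$) turns finiteness of this into $\int \dd a\,\dd\beta\,|\beta|^{-1}|\fabeta(a,\beta)|^2 < \infty$, which is exactly the statement that $|\beta|^{-1/2}\fabeta(a,\beta)$ is square-integrable against $\dd a\,\dd\beta$; writing $\fabeta = |\beta|^{1/2}e^{-a/2}\gabeta$ makes this automatic.

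The main obstacle is the regularity/decay bookkeeping: one must argue that the putative factor $\gabeta(a,\beta) = |\beta|^{-1/2} e^{a/2}\fabeta(a,\beta)$ actually lands in $\algA = \caD(\gR,\FourierDR(\hgR))$, i.e.\ that dividing by $\sqrt{|\beta|}$ does not destroy the smoothness and exponential-type properties characterized in Proposition~\ref{prop-fouriersmoothcompactfunctions}, nor the compact support in $a$. Compact support in $a$ and the factor $e^{a/2}$ are harmless. The delicate point is behaviour near $\beta = 0$: for generic $f \in \algA$ the function $\fabeta(a,\cdot)$ need not vanish at $\beta = 0$, so $|\beta|^{-1/2}\fabeta$ has a singularity there and cannot be in $\FourierDR(\hgR) \subset \caS(\gR)$. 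Hence the Hilbert--Schmidt condition genuinely constrains $\fabeta$ to vanish (to order $1/2$, suitably interpreted via the factor $\sqrt{|\beta|}$) at $\beta = 0$; the content of the proposition is precisely that this vanishing is encoded by writing $\fabeta = \sqrt{|\beta|}\,e^{-a/2}\,\gabeta$ with a \emph{smooth} $g$. One closes the argument by verifying directly that the map $\gabeta \mapsto \sqrt{|\beta|}\,e^{-a/2}\gabeta$ sends $\caD(\gR,\FourierDR(\hgR))$ into the subspace of $L^2(G,\dd\mu)$-functions whose Plancherel transform is Hilbert--Schmidt, and that every Hilbert--Schmidt $\pi_\pm(f)$ with $f \in \algA$ arises this way by the kernel identification above, the smoothness of $g$ being inherited from that of $f$ away from $\beta = 0$ together with the prescribed vanishing at $\beta = 0$.
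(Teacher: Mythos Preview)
Your treatment of the ``if'' direction is correct and coincides with the paper's own argument: both reduce to the identity $\pi_\pm(f)=\caP_\pm(g)$, which you verify by the Schwartz--kernel computation $K_{\caP_\pm(g)}(s,u)=e^{-u/2}\gabeta(u-s,\pm e^{-s})$ and the change of variables $a=u-s$, $\beta=\pm e^{-s}$. The paper's proof is just the one-line statement that ``a direct computation shows $\pi_\pm(f)=\caP_\pm(g)$'', together with the cosmetic remark that the factor $e^{-a/2}$ is immaterial because $\gabeta$ has compact support in $a$; your alternative direct computation of the Hilbert--Schmidt norm $\int\dd a\int_0^\infty\dd\beta\,|\beta|^{-1}|\fabeta(a,\beta)|^2$ is a nice extra and is not in the paper.

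There is, however, a genuine gap in your ``only if'' direction, and it is precisely the point you flag as ``the main obstacle'': you do not establish that the putative $g$ lies in $\algA=\caD(\gR,\FourierDR(\hgR))$, and your proposed closing step (``smoothness of $g$ being inherited from that of $f$ away from $\beta=0$ together with the prescribed vanishing at $\beta=0$'') does not work. The function $|\beta|^{-1/2}$ is not analytic (not even $C^1$) at $\beta=0$, so dividing by it destroys the entire--exponential-type property of Proposition~\ref{prop-fouriersmoothcompactfunctions}. Concretely, take $\fabeta(a,\beta)=\phi(a)\,\beta\,\chi(\beta)$ with $\phi\in\caD(\gR)$ and $\chi\in\FourierDR(\hgR)$, $\chi(0)\neq 0$. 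Then $f\in\algA$ and $\int_0^\infty\dd\beta\,|\beta|^{-1}|\beta\chi(\beta)|^2=\int_0^\infty\dd\beta\,\beta\,|\chi(\beta)|^2<\infty$, so $\pi_+(f)$ is Hilbert--Schmidt; yet $\gabeta(a,\beta)=e^{a/2}\phi(a)\,\mathrm{sgn}(\beta)\,|\beta|^{1/2}\chi(\beta)$ has $\partial_\beta\gabeta$ blowing up at $\beta=0$, so $\gabeta(a,\cdot)\notin\FourierDR(\hgR)$ and $g\notin\algA$. Note that the paper's own proof is silent on this converse direction; it records only the computation giving the ``if'' implication.
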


\begin{proof}
Both sides of the relations are in $\algA$, and a direct computation shows that $\pi_\pm(f) = \caP_\pm(g)$. Notice that the factor $e^{-a/2}$ is unnecessary to characterize functions $f \in \algA$ such that $\pi_\pm(f)$ is Hilbert-Schmidt since $\gabeta$ is compactly supported in the variable $a$. It is only used to relate $\pi_\pm(f)$ to $\caP_\pm(g)$.
\end{proof}

If $f \in \algA$ is such that $\pi_\pm(f)$ is Hilbert-Schmidt, then $\fabeta(a,0) = 0$, so that $f \in \caK_{-} \oplus \caK_{+}$ (and then of course $\pi_\pm(f)$ is compact).

Several extensions of above results can be made for $p \neq 1,2$ via Hausdorff-Young theorem \cite{ EymaTerp79a,Russ77a}.

\medskip
Let $B(G)$ be the algebra of linear combinations of continuous functions of positive type on $G$ \cite{Eyma64a}. It is generated by the functions of the form $(a,b) \mapsto F_\pm(a,b) \vc  \scalprod{\pi_\pm(a,b) \xi}{\eta}_{\caH_\pm}$ for any $\xi,\eta \in \caH_\pm$. This commutative algebra is a Banach algebra for the norm 
\begin{equation*}
\norm*{F} \vc  \sup_{\substack{f \in L^1(G,\dd \mu) \\ \norm*{f}\leq 1}} \int_G \dd \mu(a,b) \fab(a,b)\, F(a,b)
\end{equation*}
where $\norm*{f}$ is the $C^\ast$-norm on $C^\ast(G)$. Consider the Fourier algebra $A(G) \subset B(G)$ of $G$ generated by the linear combinations of continuous compactly supported functions of positive type on $G$, equipped with the same norm. For the affine group, this algebra is given by $A(G) = B(G) \cap C_0(G)$, where $C_0(G)$ is the algebra of continuous functions on $G$ vanishing at infinity \cite{Khal74a}.

The following theorem describe the elements in $A(G)$ and gives a complete description of trace-class operators on $\caH_- \oplus \caH_+$: 

\begin{theorem}[\cite{Khal74a}]
Any element $F \in A(G)$ can be written as $F = f \ast (\Delta g^\ast)$ where $f,g \in L^2(G, \dd\mu)$, and moreover, $\norm*{F} = \norm*{f}_2\, \norm*{g}_2$.

\noindent Let $S = (S_-, S_+)\in \caL^1(\pm)$. Then the function
\begin{equation*}
F(a,b) \vc  \Tr \big(\pi_-(a,b) S_-\big) + \Tr \big(\pi_+(a,b) S_+\big)
\end{equation*}
belongs to $A(G)$ and satisfies $\norm*{F} = \norm*{S}_1$. The association $S \mapsto F$ is an isometric isomorphism from $\caL^1(\pm)$ onto $A(G)$.
\end{theorem}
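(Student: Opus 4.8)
The plan is to prove the two halves in tandem, exploiting the Plancherel isomorphism $\caP : L^2(G,\dd\mu) \xrightarrow{\ \simeq\ } \caL^2(\pm)$ already established. First I would recall that for couples $S = (S_-, S_+) \in \caL^1(\pm)$ one can factor each trace-class operator through Hilbert--Schmidt operators: $S_\nu = T_\nu^* R_\nu$ with $T_\nu, R_\nu \in \caL^2(\caH_\nu)$ and $\norm*{S_\nu}_1 = \norm*{T_\nu}_2 \norm*{R_\nu}_2$; moreover the polar decomposition lets one arrange the norms to match. Using surjectivity of $\caP_\pm$ (the Corollary), write $R = (R_-, R_+) = \caP(f)$ and $T = (T_-, T_+) = \caP(g)$ for suitable $f, g \in L^2(G,\dd\mu)$, with $\norm*{f}_2 = \norm*{\caP(f)}_2$ and likewise for $g$. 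Then the matrix coefficient $\Tr\big(\pi_\nu(a,b) S_\nu\big)$ should be re-expressed: using $S_\nu = \caP_\nu(g)^* \caP_\nu(f) = \theta\,\pi_\nu(\Delta^{-1/2} g)\big)^* \theta\,\pi_\nu(\Delta^{-1/2} f)$ and the intertwining relation $\theta\,\pi_\pm(h) = \pi_\pm(\sigma_{-i/2}(h))\,\theta$ from Lemma~\ref{lem-commutepinudelta}, one collapses the two $\theta$'s and the $\Delta^{-1/2}$ factors, arriving at $\Tr\big(\pi_\nu(a,b)S_\nu\big) = \Tr\big(\pi_\nu(a,b)\,\pi_\nu(g)^*\,\theta^2\,\pi_\nu(\Delta^{-1/2}f)\big)$, which after a further application of $\pi_\pm(h)\theta^2 = \theta^2\pi_\pm(\sigma(h))$ becomes a matrix coefficient of the regular representation.

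Concretely, I expect the computation to land on the identity
\begin{equation*}
F(a,b) = \Tr\big(\pi_-(a,b)S_-\big) + \Tr\big(\pi_+(a,b)S_+\big) = \big(f \ast (\Delta g^\ast)\big)(a,b),
\end{equation*}
which is exactly the asserted factorization, since the left-regular representation decomposes (subsection~\ref{subsec-leftregrep}) as $\pi_\regular^\zeta \simeq \int_\gR^\oplus \dd v\,(\pi_+\oplus\pi_-)$ and the diagonal matrix coefficient $\scalprod{\pi_\regular(f)\xi}{\eta}$ of a convolution is the function $f\ast(\Delta g^\ast)$ for the appropriate $\xi,\eta$. One then reads off $\norm*{F} = \norm*{f}_2\,\norm*{g}_2 = \norm*{\caP(f)}_2\,\norm*{\caP(g)}_2 = \norm*{R}_2\,\norm*{T}_2 = \norm*{S}_1$ (having chosen the factorization realizing equality in the trace-norm). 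Conversely, given $F \in A(G)$: by definition $A(G) = B(G)\cap C_0(G)$ is generated by compactly supported positive-type functions, each of which is a diagonal matrix coefficient $(a,b)\mapsto\scalprod{\pi_\regular(a,b)\xi}{\xi}$ after decomposing along the principal series; running the above computation backwards produces $S_\nu\in\caL^1(\caH_\nu)$ with $F(a,b) = \sum_\nu \Tr(\pi_\nu(a,b)S_\nu)$, and the norm identity together with completeness of both $A(G)$ and $\caL^1(\pm)$ upgrades the correspondence on generators to the claimed isometric isomorphism. The factorization $F = f\ast(\Delta g^\ast)$ with $\norm*{F} = \norm*{f}_2\norm*{g}_2$ follows by the same Plancherel bookkeeping, since trace-class $=$ (Hilbert--Schmidt)$\cdot$(Hilbert--Schmidt) with multiplicative norms, and $\caP$ carries each factor to an $L^2(G,\dd\mu)$ function of equal norm.

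The main obstacle will be the careful handling of the non-unimodular correction factors $\Delta$, $\theta$, $\sigma_{-i/2}$ throughout: one must verify that the $\theta$'s appearing in $\caP_\pm(f)^*\caP_\pm(g)$ recombine to precisely the $\Delta$ inside $\Delta g^\ast$ and not some twisted variant, and that domains match so that the trace $\Tr\big(\pi_\nu(a,b)S_\nu\big)$ is genuinely finite and the manipulations are legitimate (here $S_\nu$ trace-class and $\pi_\nu(a,b)$ unitary guarantee finiteness, but the intermediate expressions with $\theta^2$ unbounded need the factorization of $S_\nu$ through the domain of $\theta$, which is exactly what Lemma~\ref{lem-commutepinudelta} provides). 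A secondary point requiring care is that $A(G) = B(G)\cap C_0(G)$ is specific to the affine group and one should cite \cite{Khal74a} for it rather than re-prove it; with that in hand the argument is essentially a translation of the Plancherel theorem into matrix-coefficient language, so I would keep the write-up short and refer to \cite{Khal74a, DuflMoor76a} for the non-unimodular Plancherel framework underlying it.
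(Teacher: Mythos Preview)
The paper does not prove this theorem: it is quoted verbatim from \cite{Khal74a} (note the attribution in the theorem header) and no argument is supplied in the text. So there is no ``paper's own proof'' to compare against; the authors simply import the result as background for their later use in Proposition~\ref{trace-class}.

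Your sketch is a reasonable reconstruction of the standard non-unimodular Plancherel argument and is broadly in the spirit of \cite{Khal74a} and \cite{DuflMoor76a}: factor trace-class through Hilbert--Schmidt, pull back each factor to $L^2(G,\dd\mu)$ via the Plancherel isometry $\caP$, and identify the resulting matrix coefficient with a convolution $f\ast(\Delta g^\ast)$. The points you flag as delicate --- matching the $\theta$/$\Delta$ bookkeeping so that the modular factor lands exactly on $\Delta g^\ast$, and choosing the polar factorization to get the norm equality $\norm*{S}_1=\norm*{T}_2\norm*{R}_2$ --- are indeed the places where care is needed. One small gap to watch: you need surjectivity of the full map $\caP:L^2(G,\dd\mu)\to\caL^2(\pm)$ (which is stated in the paper's Proposition preceding the Corollary), not just of each $\caP_\pm$ separately, since an arbitrary pair $(T_-,T_+)\in\caL^2(\pm)$ must come from a \emph{single} $g\in L^2(G,\dd\mu)$; otherwise the norm identity $\norm*{g}_2=\norm*{\caP(g)}_2$ would not be available. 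With that adjustment your outline is sound, though for a full proof you would still need the identification $A(G)=B(G)\cap C_0(G)$ and the norm formula for $A(G)$, both of which are specific to this group and are established in \cite{Khal74a}.
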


More results in \cite{Khal74a} show that the restriction $S_\pm \mapsto \Tr\big(\pi_\pm(a,b) S_\pm\big)$, for $S_\pm \in \caL^1(\caH_\pm)$, characterizes the trace-class operators on $\caH_\pm$ as functions in a subalgebra $A_\pm(G) \subset A(G)$ for which $A(G) = A_-(G) \oplus A_+(G)$; and any $S_\pm \in \caL^1(\caH_\pm)$ can be written as $S_\pm = \caP_\pm(f) \,\caP_\pm(g)$ for $f,g \in L^2(G, \dd\mu)$.

\begin{proposition}
\label{trace-class}
For $f \in \algA$, $(\pi_-\oplus \pi_+)(f)\in \caL^1(\caH)$ if and only if there exist $h_1,h_2\in L^2(G,\dd\mu)$ such that
\begin{equation*}
\fabeta(a, \beta) = \tfrac{\abs*{\beta}}{2\pi} \int_{\gR^3} \dd a' \dd b \dd b'\, e^{a'}\, h_1(a',b')\, \overline{h_2(a+a', e^{-a}b' - b)} \,e^{-i b \beta}.
\end{equation*}
So that if $(\pi_-\oplus \pi_+)(f)\in \caL^1(\caH)$, then $\tau(f) = 0$.

\noindent For $\nu=0$ and $f \in \algA$, $\pi_0(f^\ast \ast f)$ is trace-class if and only if $\tau(f^\ast \ast f) = 0$ and then $\tr_0(f^\ast \ast f) = 0$.
\end{proposition}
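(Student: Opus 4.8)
The plan is to reduce the statement to the characterization of trace-class operators on $\caH_\pm$ already recalled from \cite{Khal74a} just above, namely that $S_\pm \in \caL^1(\caH_\pm)$ if and only if $S_\pm = \caP_\pm(h_1)\,\caP_\pm(h_2)$ for some $h_1, h_2 \in L^2(G,\dd\mu)$. First I would write $(\pi_-\oplus\pi_+)(f) \in \caL^1(\caH)$ as $(S_-,S_+) \in \caL^1(\pm)$ and, using $\pi_\pm(f) = \caP_\pm(\Delta^{1/2} f)$ (which follows from the definition $\caP_\pm(f) = \theta\,\pi_\pm(\Delta^{-1/2}f)$ and the fact that $\theta$ realizes $\Delta^{1/2}$, cf. Lemma~\ref{lem-commutepinudelta}), translate the condition into: $\caP_\pm(\Delta^{1/2} f) = \caP_\pm(h_1)\,\caP_\pm(h_2)$. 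The composition $\caP_\pm(h_1)\,\caP_\pm(h_2)$ has an explicit Schwartz kernel obtained by composing the two kernels of the form \eqref{eq-Schwartzkernelalpbeta}; integrating the product of kernels over the intermediate variable and matching with the kernel $K_{\pi_\pm(f)}(s,u) = \fabeta(u-s,\pm e^{-s})$ (again \eqref{eq-Schwartzkernelalpbeta}, variables $(a,\beta)$), one reads off the stated integral formula for $\fabeta(a,\beta)$, with the prefactor $\tfrac{\abs*{\beta}}{2\pi}$ coming precisely from the Duflo--Moore twist $\theta$ (the factors $e^{-s/2}$) together with the change of variable $\beta = \pm e^{-s}$.

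Next, granting the formula for $\fabeta$, the vanishing $\tau(f)=0$ is immediate: by Lemma~\ref{lemma-tracetau}, $\tau(f) = \fabeta(0,0)$, and the displayed expression carries an overall factor $\abs*{\beta}$, hence vanishes at $\beta = 0$. (Equivalently, $\fabeta(a,0)=0$ for all $a$, so $f \in \caK_-\oplus\caK_+$, consistent with the earlier remark that trace-class $\pi_\pm(f)$ forces $f$ into the ideal.)

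For the $\nu=0$ part I would argue directly. If $\tau(f^\ast\ast f)=0$, then by \eqref{eq-tau(f*f)} we have $\int_\gR \abs*{\fabeta(a,0)}^2\,\dd a = 0$, so $\fabeta(a,0)=0$ a.e.; since the Schwartz kernel of $\pi_0(g)$ is $K_{\pi_0(g)}(s,u) = \gabeta(u-s,0)$, this means $\pi_0(f^\ast\ast f)$ has zero kernel, hence $\pi_0(f^\ast\ast f)=0$, which is trivially trace-class with $\tr_0(f^\ast\ast f)=0$. Conversely, if $\pi_0(f^\ast\ast f)$ is trace-class, then since its kernel $h\mapsto \widecheck{(f^\ast\ast f)}(h,0)$ is continuous, Brislawn's theorem \cite[Corollary~3.2]{Bris91a} (used earlier in \ref{subsec-tracefromrepresentations}) gives $\tr_0(f^\ast\ast f) = \int_\gR \dd u\, K_{\pi_0(f^\ast\ast f)}(u,u) = \int_\gR \dd u\, \widetilde{(f^\ast\ast f)}(0,0)$; since $\pi_0$ is a multiplication operator in the $p$-variable by $\widecheck{(f^\ast\ast f)}(p,0) = \abs*{\falpbeta(p,0)}^2 \geq 0$ (using \eqref{eq-betazeroinproduct} and \eqref{eq-involution-ab} in the $(\alpha,\beta)$ variables), being trace-class forces $\int_\gR \abs*{\falpbeta(p,0)}^2\,\dd p < \infty$, i.e. $\tau(f^\ast\ast f)<\infty$; but the pointwise-continuous nonnegative integrand can only produce a finite trace in the degenerate way $\tau(f^\ast \ast f) = \widetilde{(f^\ast\ast f)}(0,0)$, and consistency of the two expressions (the constant kernel on the diagonal being integrable over $\gR$) forces $\widetilde{(f^\ast\ast f)}(0,0) = 0$, i.e. $\tau(f^\ast\ast f)=0$ and $\tr_0(f^\ast\ast f)=0$.

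The main obstacle I expect is the kernel-composition computation in the first step: one must carefully track the Duflo--Moore factors and the nonlinear change of variables $\beta = \pm e^{-s}$ through the convolution of two Schwartz kernels, verifying that the Hilbert--Schmidt integrability of each $\caP_\pm(h_i)$ makes the intermediate integral absolutely convergent and that the resulting expression indeed lies in $\algA$ (i.e., is of the form \eqref{eq-deffabetafromfab} of a genuine element of $\caD(G)_\ast$) — or at least that it defines the same operator. The converse direction, extracting $h_1,h_2$ from a given $f\in\algA$, is comparatively soft once the forward formula and the surjectivity statement $S_\pm = \caP_\pm(h_1)\caP_\pm(h_2)$ from \cite{Khal74a} are in hand.
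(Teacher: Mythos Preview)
Your approach differs from the paper's and contains a slip. The identity $\pi_\pm(f) = \caP_\pm(\Delta^{1/2} f)$ is false: from $\caP_\pm(g) = \theta\,\pi_\pm(\Delta^{-1/2} g)$ one obtains $\caP_\pm(\Delta^{1/2} f) = \theta\,\pi_\pm(f)$, and the unbounded Duflo--Moore factor $\theta$ cannot simply be discarded. You can bypass this by setting $\pi_\pm(f) = \caP_\pm(h_1)\caP_\pm(h_2)$ directly and matching Schwartz kernels, but then two further issues remain: you must argue that a \emph{single} pair $(h_1,h_2)$ serves for both signs $\nu$ (the per-$\nu$ factorization from \cite{Khal74a} yields a priori two different pairs), and kernel matching only fixes $\fabeta(a,\beta)$ on $\beta \neq 0$, so an extra continuity argument is needed at $\beta = 0$.

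The paper avoids all of this by passing through the Fourier algebra $A(G)$. The pair $(\pi_-(f),\pi_+(f)) \in \caL^1(\pm)$ corresponds under the isometric isomorphism $\caL^1(\pm) \simeq A(G)$ to the single function $F(a,b) = \sum_\nu \Tr\big(\pi_\nu(a,b)\,\pi_\nu(f)\big)$; this $F$ is computed explicitly from $\fabeta$ and the relation is \emph{inverted} to give $\fabeta(a,\beta) = \tfrac{|\beta|}{2\pi}\int_\gR \dd b\,F(-a,-e^a b)\,e^{ib\beta}$, valid for all $\beta$ at once. Writing $F = h_1 \ast (\Delta h_2^\ast)$ with a single pair $h_1,h_2\in L^2(G,\dd\mu)$ and substituting then yields the displayed formula globally, and $\tau(f) = \fabeta(0,0) = 0$ follows from the explicit $|\beta|$ prefactor. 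Your treatment of the $\nu=0$ case is essentially correct; the paper argues more briefly by recalling from \S\ref{subsec-tracefromrepresentations} that the diagonal of the Schwartz kernel of $\pi_0(g)$ is the constant $\tau(g)$, so trace-class forces $\tau(g)=0$.
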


\begin{proof}
If $(\pi_-\oplus \pi_+)(f)$ is trace-class for $f\in \algA$, then it defines $F \in A(G)$ by
\begin{equation}
F(a,b) \vc  \sum_\nu\, \Tr( \pi_\nu(a,b) \,\pi_\nu(f)) = \sum_\nu \int_{\gR} \dd u\, \fabeta(-a, \nu e^{-(u+a)}) \,e^{i \nu b e^{-u}},
\label{eq-Fnu}
\end{equation}
and the trace is then given by $F(0,0) = \sum_\nu \int_{\gR} \dd u\, \fabeta (0, \nu e^{-u})$, which is \eqref{eq-tracefabeta}. The relation \eqref{eq-Fnu} can be inverted as
\begin{equation}
\label{eq-fabetafromF}
\fabeta(a,\beta) = \tfrac{\abs*{\beta}}{2\pi} \int_{\gR} \dd b\, F(-a, -e^{a} b )\, e^{i b \beta}.
\end{equation}
Since we can write $F \in A(G)$ as $F = h_1 \ast (\Delta h_2^\ast)$, for $h_1, h_2 \in L^2(G, \dd\mu)$, substituting it into \eqref{eq-fabetafromF} gives the most general expression for $f \in \algA$ in terms of $h_1$ and $h_2$. This expression implies $\tau(f) = \fabeta(0,0) = 0$ directly.

For $\nu=0$, we saw in section \ref{subsec-tracefromrepresentations} that is $\pi_0(g)$ is trace-class then $\tau(g) = 0$. For $f \in \algA$, $\tr_0(f^\ast \ast f) = \int_{\gR} \dd u \, (\fabeta^\ast \astabeta \fabeta)(0,0) =  \int_{\gR} \dd u \, \tau(f^\ast \ast f) = 0$ if $\tau(f^\ast \ast f) = 0$.
\end{proof}

Notice that $(\pi_-\oplus \pi_+)(f)\in \caL^1(\caH)$ for $f \in \algA$, implies that $f \in \caK_{-} \oplus \caK_{+}$ (which is of course a stronger result than $(\pi_-\oplus \pi_+)(f)$ compact).

\begin{proposition}
\label{prop-traceclass}
For any $f \in \algA$, $\pi_\pm(f) \,\theta^2$, $\theta\, \pi_\pm(f) \,\theta$, and $\theta^2\, \pi_\pm(f)$ are trace-class operators and
\begin{equation*}
\Tr\big(\pi_\nu(f) \,\theta^2\big) = \Tr\big(\theta\, \pi_\nu(f) \,\theta\big) = \Tr\big(\theta^2\, \pi_\nu(f)\big) =
\begin{cases}
\displaystyle\int_0^\infty \dd \beta\, \fabeta(0,\beta) & \text{ for $\nu=+$},\\[10pt]
\displaystyle\int_{-\infty}^0 \dd \beta\, \fabeta(0,\beta) & \text{ for $\nu=-$}.
\end{cases}
\end{equation*}
\end{proposition}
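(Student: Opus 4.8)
The plan is to reduce all three families of operators to one normal form and to factor that form as a product of two Hilbert--Schmidt operators. By Lemma~\ref{lem-commutepinudelta} and the iterate $\pi_\pm(f)\theta^2=\theta^2\pi_\pm(\sigma(f))$ recorded after it, one has $\theta\,\pi_\nu(f)\,\theta=\pi_\nu(\sigma_{-i/2}f)\,\theta^2$ and $\theta^2\,\pi_\nu(f)=\pi_\nu(\sigma^{-1}f)\,\theta^2$ (with $\sigma^{-1}=\sigma_{-i}$); since $\sigma_t$ multiplies $\fabeta$ by $e^{ita}$, the elements $\sigma_{-i/2}f$ and $\sigma^{-1}f$ (presented in $(a,\beta)$ by $e^{a/2}\fabeta$ and $e^{a}\fabeta$) still lie in $\algA=\caD(\gR,\FourierDR(\hgR))$. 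Hence it suffices to prove that for every $g\in\algA$ the operator $T_g\vc\pi_\nu(g)\theta^2$ is trace-class, with $\Tr(T_g)=\int_0^\infty\dd\beta\,\gabeta(0,\beta)$ for $\nu=+$ and $\Tr(T_g)=\int_{-\infty}^0\dd\beta\,\gabeta(0,\beta)$ for $\nu=-$: applying this to $g\in\{f,\sigma_{-i/2}f,\sigma^{-1}f\}$, which all agree with $\fabeta$ at $a=0$, then gives the three equal values in the statement.

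For trace-class, I would use Proposition~\ref{eclatement} to write $g=\sum_{i=1}^N p_i\ast q_i$ with $p_i,q_i\in\algA$. On the core $\caD(\gR)\subset L^2(\gR,\dd s)$ (which is stable under $\theta^{\pm1}$ and is mapped by $\pi_\nu(\algA)$ into $\dom\theta$, by Lemma~\ref{lem-commutepinudelta}), the identity $\theta\,\pi_\nu(\sigma_{i/2}q_i)=\pi_\nu(q_i)\,\theta$ gives $T_g=\sum_{i=1}^N\big(\pi_\nu(p_i)\theta\big)\big(\pi_\nu(\sigma_{i/2}q_i)\theta\big)$. Each factor is Hilbert--Schmidt: Khalil's Plancherel proposition says $\caP_\pm(h)=\theta\,\pi_\pm(\Delta^{-1/2}h)\in\caL^2(\caH_\pm)$ for $h\in\algA$, and $\Delta^{-1/2}h$ is presented in $(a,\beta)$ by $e^{-a/2}\habeta$, so $\theta\,\pi_\nu(\algA)\subset\caL^2(\caH_\nu)$ and, taking adjoints, $\pi_\nu(\algA)\,\theta\subset\caL^2(\caH_\nu)$. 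As $p_i$ and $\sigma_{i/2}q_i$ are in $\algA$, $T_g$ is a finite sum of products of two Hilbert--Schmidt operators, hence trace-class; in particular $\pi_\nu(g)\theta^2$ restricted to $\caD(\gR)$ extends to this bounded operator, which we still denote $T_g$.

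For the trace, \eqref{eq-representationinvariablesabeta} shows that $T_g$ has Schwartz kernel $K(s,u)=e^{-u}\gabeta(u-s,\nu e^{-s})$, which is jointly continuous on $\gR^2$ since $g\in\caD(\gR,\FourierDR(\hgR))$ makes $(a,\beta)\mapsto\gabeta(a,\beta)$ continuous and $s\mapsto\nu e^{-s}$ is smooth. As $T_g$ is trace-class with a continuous kernel, \cite[Corollary~3.2]{Bris91a} (already used for $\pi_0$ in \ref{subsec-tracefromrepresentations}) gives $\Tr(T_g)=\int_\gR\dd s\,K(s,s)=\int_\gR\dd s\,e^{-s}\gabeta(0,\nu e^{-s})$. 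The substitution $\beta=\nu e^{-s}$, under which $\gR$ maps onto $J_\nu$ with $J_+=(0,\infty)$, $J_-=(-\infty,0)$, turns this into $\int_{J_\nu}\dd\beta\,\gabeta(0,\beta)$. Evaluating at $g=f$, $g=\sigma_{-i/2}f$, $g=\sigma^{-1}f$, all of which satisfy $\gabeta(0,\beta)=\fabeta(0,\beta)$, completes the proof.

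The main obstacle is the trace-class assertion. The naive split $\pi_\nu(g)\theta\cdot\theta$ fails because the right-hand factor $\theta$ is unbounded, so one genuinely needs both the Dixmier--Malliavin factorization of $g$ and the twisting identity of Lemma~\ref{lem-commutepinudelta} to distribute the two copies of $\theta$ onto the pieces $p_i,q_i$; the one delicate point is to justify the intermediate manipulations with the unbounded $\theta$ on the common core $\caD(\gR)$ and to identify the resulting bounded operator with the closure of $\pi_\nu(g)\theta^2$. Everything afterwards --- the kernel computation, the appeal to \cite{Bris91a}, and the change of variables --- is routine.
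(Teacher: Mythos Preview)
Your proof is correct and follows essentially the same route as the paper: Dixmier--Malliavin factorization of the algebra element, the twisting identity of Lemma~\ref{lem-commutepinudelta} to distribute the two copies of $\theta$, Hilbert--Schmidt via the Plancherel transform $\caP_\pm$, and finally the kernel integral with the substitution $\beta=\nu e^{-s}$. The only difference is cosmetic: the paper treats just $\pi_\pm(f)\theta^2$ and writes the HS factors directly as $\caP_\pm(g)\caP_\pm(\Delta^{-1/2}h)$, whereas you first reduce the two other placements of $\theta^2$ to this case via $f\mapsto\sigma_{-i/2}f,\ \sigma^{-1}f$ --- a tidy observation that makes the equality of the three traces immediate.
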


\begin{proof}
Thanks to Proposition~\ref{eclatement}, we can replace $f$ by $g\ast h$, for $g,h \in \algA$. We consider only the case of $\pi_\pm(f) \, \theta^2$. Using Lemma~\ref{lem-commutepinudelta}, one has 
\begin{align*}
\pi_\pm(g \ast h)\,\theta^2 &= \pi_\pm(g)\, \pi_\pm(h) \,\theta^2 = \pi_\pm(g)\, \theta^2\, \pi_\pm(\Delta^{-1} h)
= \theta \, \pi_\pm(\Delta^{-1/2} g)\, \theta \,\pi_\pm \big( \Delta^{-1/2} (\Delta^{-1/2} h)\big) \\
&= \caP_\pm(g) \,\caP_\pm(\Delta^{-1/2} h)
\end{align*}
which is trace-class because $\algA \subset L^2(G, \dd\mu)$ and $\Delta^{-1/2}\algA \subset L^2(G, \dd\mu)$.

The trace is computed using the kernel $K(s,u) = \fabeta(u-s, \nu e^{-s}) e^{-u}$ of $\pi_\nu(f) \,\theta^2$:
\begin{equation*}
\Tr \big(\pi_\nu(f) \,\theta^2\big) = \int_{\gR} \dd u\, \fabeta(0, \nu e^{-u})\, e^{-u} = - \nu \, \int_{\nu \times \infty}^0 \dd \beta\, \fabeta(0,\beta)
\end{equation*}
which gives the result.
\end{proof}

\begin{corollary}
When $f\in \algA$, $\pi_\pm(\delta_2 f)$ is trace-class and
\begin{equation*}
\tau(f)=\tr_-(\delta_2 f) = -\tr_+(\delta_2 f).
\end{equation*}
\end{corollary}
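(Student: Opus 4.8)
The plan is to prove the two assertions in turn: first that $\pi_\pm(\delta_2 f)$ is trace-class, and then to evaluate $\tr_\pm(\delta_2 f)$ and compare with $\tau(f)$. For the trace-class part I would exhibit $\delta_2 f$ as a left multiple of an element of $\algA$ by the multiplier $\ubeta$, which reduces the statement to Proposition~\ref{prop-traceclass}; for the evaluation I would feed $\delta_2 f$ into the character formula \eqref{eq-expressiontracesplusmoins}, watch the factor $\tfrac1\beta$ cancel the $\beta$ produced by $\delta_2$, and finish with the fundamental theorem of calculus.

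\textbf{Trace-class.} Since $\delta_2$ is a derivation of $\algA$, we have $\delta_2 f\in\algA$, with $(a,\beta)$-presentation $\delta_2(\fabeta)(a,\beta)=\beta\,(\partial_\beta\fabeta)(a,\beta)$. For each fixed $a$ the function $\beta\mapsto(\partial_\beta\fabeta)(a,\beta)$ is again in $\FourierDR(\hgR)$ — its inverse Fourier transform in $\beta$ is $b\mapsto ib\,\fab(a,b)\in\caD(\gR)$, which is compactly supported and smooth, cf. Proposition~\ref{prop-fouriersmoothcompactfunctions} — so there is a genuine $g\in\algA$ with $\gabeta=\partial_\beta\fabeta$. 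The formulas recorded just before \eqref{sigma} say that left multiplication by $\ubeta$ is multiplication by $\beta$ in the $(\alpha,\beta)$, equivalently the $(a,\beta)$, presentation, so we get the factorization $\delta_2 f=\ubeta\ast g$. Since $\pi_\pm(\ubeta)$ is multiplication by $\pm e^{-s}$ on $L^2(\gR,\dd s)$, i.e. $\pi_\pm(\ubeta)=\pm\,\theta^2$ with $\theta$ the Duflo--Moore operator of \eqref{theta}, this gives $\pi_\pm(\delta_2 f)=\pm\,\theta^2\,\pi_\pm(g)$, which is trace-class by Proposition~\ref{prop-traceclass}. (Equivalently one reads this off directly from the Schwartz kernel: $K_{\pi_\pm(\delta_2 f)}(s,u)=(\pm e^{-s})(\partial_\beta\fabeta)(u-s,\pm e^{-s})=(\pm e^{-s})\,K_{\pi_\pm(g)}(s,u)$.)

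\textbf{Trace values.} Putting $h=\delta_2 f$ in \eqref{eq-expressiontracesplusmoins}, the integrand becomes $\tfrac1\beta\,\delta_2(\fabeta)(0,\beta)=(\partial_\beta\fabeta)(0,\beta)$, so $\tr_-(\delta_2 f)=\int_{-\infty}^{0}(\partial_\beta\fabeta)(0,\beta)\,\dd\beta$ and $\tr_+(\delta_2 f)=\int_{0}^{+\infty}(\partial_\beta\fabeta)(0,\beta)\,\dd\beta$. Because $\fabeta(0,\cdot)\in\FourierDR(\hgR)\subset\caS(\hgR)$ by Proposition~\ref{prop-fouriersmoothcompactfunctions}, it vanishes at $\pm\infty$, so these two integrals equal $\fabeta(0,0)$ and $-\fabeta(0,0)$ respectively; and $\fabeta(0,0)=\tau(f)$ by \eqref{eq-expressiontau}. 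Hence $\tr_-(\delta_2 f)=\tau(f)=-\tr_+(\delta_2 f)$.

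\textbf{Main obstacle.} The only step that is not pure bookkeeping is the trace-class claim: a priori $\delta_2 f$ lies merely in $\caK_{-}\oplus\caK_{+}$, so $\pi_\pm(\delta_2 f)$ is known only to be compact, and by Proposition~\ref{trace-class} not every element of $\algA$ is represented by a trace-class operator. The gain comes entirely from the extra factor of $e^{-s}$ in the kernel of $\delta_2 f$ — encoded in $\pi_\pm(\ubeta)=\pm\theta^2$ — which is exactly what brings Proposition~\ref{prop-traceclass}, and with it the Plancherel/Duflo--Moore machinery of Section~\ref{Representations}, into play. Once trace-classness is secured, the identification of the trace with the diagonal integral of the kernel, hence with \eqref{eq-expressiontracesplusmoins}, is standard, and what remains is just the change of variables $\beta=\pm e^{-s}$ and an integration.
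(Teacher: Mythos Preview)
Your proof is correct and follows essentially the same route as the paper: both identify $\pi_\pm(\delta_2 f)=\pm\,\theta^2\,\pi_\pm(\partial_\beta\fabeta)$ (you via the multiplier factorization $\delta_2 f=\ubeta\ast g$, the paper by direct kernel computation) and invoke Proposition~\ref{prop-traceclass} for trace-classness, then evaluate via \eqref{eq-expressiontracesplusmoins}. Your write-up is in fact more explicit on the trace-value step, spelling out the fundamental-theorem-of-calculus argument that the paper leaves implicit.
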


For $\nu=0$, we have seen that $\pi_0(\delta_2 f) = 0$ for any $f \in \algA$, but $\tau(f) = \fabeta(0,0)$ can be non zero, so that there is no relation between these two quantities.

\begin{proof}
A direct computation in the variables $(a,\beta)$ shows that 
\begin{equation*}
\big(\pi_\pm(\delta_2 f \big) \,\phi)(s) = \int_{\gR} \dd u\, (\pm)  e^{-s} \,(\partial_\beta \fabeta)(u-s, \nu e^{-s})\,\phi(u) 
= \pm \theta^2 \,\big(\pi_\pm( \partial_\beta \fabeta) \,\phi \big)(s).
\end{equation*}
By Proposition~\ref{prop-traceclass}, the operator $\theta^2\, \pi_\pm( \partial_\beta \fabeta)$ is trace-class because $\partial_\beta \fabeta \in \algA$. The traces are computed using \eqref{eq-expressiontracesplusmoins}.
\end{proof}

\section{The spectral triple}
\label{Spectral triple}

\subsection{About the choices}

The choice of a spectral triple $(\algA,\caH,\caD)$ is by definition a choice of a geometry. In the beginning we had just the Lie-algebra type commutation relations \eqref{commutationkappa} of the $\kappa$-deformation space to which we naturally associated the affine group $G$. From a noncommutative point of view, a natural algebra to represent this ``noncommutative space'' is the group $C^\ast$-algebra $C^\ast(G)$. Thus it is natural to take the spectral triple algebra $\algA$ as a dense subalgebra of $C^\ast(G)$. Of course this choice should be compatible with the domain of the privileged operator $\caD$ and this in turn needs a choice of $\caH$ which actually means a choice of a faithful representation $\pi$ of $\algA$ on $\caH$. However, still the crucial choice is that of an operator $\caD$.

There might be various hints, as to which ingredients could be used. For instance, consider the two generators $T$ and $X$ of the Lie algebra $\lieG \simeq \gR^2$ of $G$ given by $T \vc (1,0)$ and $X \vc (0,1)$, with respective flows $\varphi_{t, T}(a,b) = (a+t, b)$ and $\varphi_{t, X}(a,b) = (a, e^{-a} t + b)$, from which we deduce the Lie bracket
\begin{equation*}
[T,X] = - X.
\end{equation*}
Notice that $x^0 = -iT$ and $x^1 = -iX$ satisfy the relation \eqref{commutationkappa} for $n=2$ and $\kappa = 1$. Denoting the induced representation of the Lie algebra by $\dd\pi_\nu$ we have:
\begin{equation*}
(\dd\pi_\nu(T)\phi)(s) = (\partial_s \phi)(s)
\quad\text{ and }\quad
(\dd\pi_\nu(X)\phi)(s) = \nu i e^{-s} \phi(s),
\end{equation*}
so that 
\begin{equation*}
\dd\pi_\nu(T) = i \pi_\nu(\ualpha) = i \partial_2
\quad\text{ and }\quad
\dd\pi_\nu(X) = i \pi_\nu(\ubeta) = 2 \pi i \nu \theta^2.
\end{equation*}
where $\partial_2$ is defined in \eqref{eq-defpartialk}. Therefore, for any $f \in \algA$, the commutator $[\dd\pi_\nu(T), \pi_\nu(f)] = - \pi_\nu(\delta_2 f)$ is a bounded operator, and, moreover, using Lemma~\ref{lem-commutepinudelta}, one has 
\begin{equation*}
[ \dd\pi_\nu(X), \,\pi_\nu(f)] = 2 \pi i \nu [\theta^2, \,\pi_\nu(f)] = 2 \pi i \nu \,\theta^2 \,\big(\pi_\nu(f) - \pi_\nu(\sigma(f)) \big),
\end{equation*}
which, by Proposition~\ref{prop-traceclass}, is even trace-class for $\nu \in \{-,+\}$. The last commutator is closely related to the twisted commutator, which vanishes for every $f \in \algA$:
\begin{equation*}
\dd\pi_\nu(X) \,\pi_\nu(f) - \pi_\nu\big(\sigma^{-1}(f)\big) \,\dd\pi_\nu(X) = 0.
\end{equation*}
The above unbounded operators, which have bounded commutators with the algebra, are interesting candidates for a geometry of the $\kappa$-deformed space. We mention them, however, just to indicate that there are many possibilities and that the choices are not obvious.  Even if the latter approach deserves further study, we will privilege in the following the derivation-based ``noncommutative geometry'' of $C^\ast(G)$. In particular, this means that we will use the two natural derivations $\delta_1,\delta_2$ defined in section \ref{subsec-derivations} on the algebra $\algA = \caD(G)_\ast$, in order to construct a Dirac-like operator $\caD$, similarly as for the noncommutative two-torus.

\begin{definition}
The algebra, its representation on the Hilbert space and the operator $\caD$ are
\begin{align*}
& \algA \vc \caD(G)_\ast,\\
& \caH \vc \bigoplus_{\nu = +,0,-} (L^2(\gR,ds) \otimes \gC^2),\\
& \pi \vc  \bigoplus_{\nu = +,0,-} \text{Diag}(\pi_\nu,\pi_\nu),\\
& \caD \vc \gamma^k\, \partial_k \otimes \bbbone_3 \,,\\
& \chi \vc -i\gamma^1\gamma^2  \otimes \bbbone_3\,,
\end{align*}
where $\gamma^1=\left(\begin{smallmatrix} 0 &1\\ 1 & 0 \end{smallmatrix}\right)$ and $\gamma^2=\left(\begin{smallmatrix} 0 &-i\\ i &\, \,0 \end{smallmatrix}\right)$ are the Pauli matrices.
\end{definition}

Since the unbounded operators $\partial_k$ defined in \eqref{eq-defpartialk} are hermitean, the operator $\caD$ has a selfadjoint extension. The representation is chosen not only to include the two irreducible representations in the principal series, $\pi_\pm$, but also to include the other irreducible representations contained in $\pi_0$. Actually, many of the computations performed below will split on the three cases $\nu \in \{-,0,+\}$, and the main results (Theorem~\ref{dimensionspectrum} for instance) will be valid even if $\pi_0$ where not included in our representation or if at the contrary $\pi_\pm$ are excluded.

\begin{proposition}
$(\algA,\caH,\caD)$ is a even (with grading $\chi$) regular spectral triple. 
\end{proposition}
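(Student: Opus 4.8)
The plan is to verify the three defining axioms of a (unital-in-the-multiplier-sense) even regular spectral triple: (i) the algebra $\algA$ acts by bounded operators on $\caH$ and $[\caD,\pi(f)]$ extends to a bounded operator for every $f\in\algA$; (ii) $(\bbbone+\caD^2)^{-1}$ is compact, i.e. $\caD$ has compact resolvent on each summand (after dealing with the kernel of $\caD$); (iii) regularity, meaning $\pi(f)$ and $[\caD,\pi(f)]$ lie in the smooth domain $\bigcap_n \mathrm{Dom}(\delta^n)$ of the derivation $\delta(\cdot)=[|\caD|,\cdot]$; plus the compatibility of the grading $\chi$ (it commutes with $\pi(f)$, anticommutes with $\caD$, and $\chi^2=\bbbone$, which is immediate from the Pauli-matrix identities $\gamma^1\gamma^2=-\gamma^2\gamma^1$, $(\gamma^k)^2=\bbbone$, so $\chi\caD=-\caD\chi$ and $\chi$ acts trivially in the $L^2$-factor hence commutes with $\pi$). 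Self-adjointness of $\caD$ was already noted since $\partial_1,\partial_2$ are essentially self-adjoint.

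For the bounded-commutator axiom, I would reduce to a single $\nu$-summand. On $L^2(\gR,ds)\otimes\gC^2$ one has $\caD=\gamma^1\partial_1+\gamma^2\partial_2$ and $\pi_\nu(f)$ acts diagonally in $\gC^2$, so $[\caD,\pi_\nu(f)\otimes\bbbone_2]=\gamma^1[\partial_1,\pi_\nu(f)]+\gamma^2[\partial_2,\pi_\nu(f)]$. By Lemma~\ref{lem-commutationderivations} this equals $\gamma^1\pi_\nu(i\delta_1 f)+\gamma^2\pi_\nu(i\delta_2 f)$, and since $\delta_1 f,\delta_2 f\in\algA$ these are bounded operators (elements of $\algA\subset C^\ast(G)$ act boundedly in every representation). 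Boundedness of $\pi(f)$ itself is the same statement. For the compact-resolvent axiom, the key observation is that $\caD^2=(\partial_1^2+\partial_2^2)\otimes\bbbone_2 + \text{(cross term)}$, but $\gamma^1\gamma^2+\gamma^2\gamma^1=0$ forces the cross term to reduce to the scalar commutator: $\caD^2=(\partial_1^2+\partial_2^2)\otimes\bbbone_2 - i\gamma^1\gamma^2[\partial_1,\partial_2]$; using $[\partial_1,\partial_2]=-i$ this is $\partial_1^2+\partial_2^2$ shifted by a bounded (in fact scalar-on-each-$\gC^2$) term — precisely the harmonic-oscillator Hamiltonian $s^2-\partial_s^2$ up to a bounded perturbation, as announced in the introduction. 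Since $s^2-\partial_s^2$ has discrete spectrum $\{2n+1\}$ with finite multiplicities on $L^2(\gR,ds)$, it has compact resolvent, and a bounded perturbation plus finite direct sum preserves this; so $(\bbbone+\caD^2)^{-1}$ is compact. (One should note $\caD$ itself may have nontrivial kernel, but the axiom only requires compactness of the resolvent of $\caD$, equivalently of $(\bbbone+\caD^2)^{-1}$, which is what we get.)

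For regularity, I would compute $|\caD|=(\partial_1^2+\partial_2^2+\text{bdd})^{1/2}$, essentially $N^{1/2}$ for $N$ the (shifted) number operator, and show that $\pi_\nu(f)$ and $[\caD,\pi_\nu(f)]=\gamma^k\pi_\nu(i\delta_k f)$ are in the domain of all powers of $\delta=[|\caD|,\cdot]$. The clean way is to iterate: $\delta(\pi_\nu(g))$ for $g\in\algA$ should again be expressible, via repeated application of Lemma~\ref{lem-commutationderivations} and the algebra relations (using that $\partial_1,\partial_2$ generate the same operators implementing $\delta_1,\delta_2$, with $[\partial_1,\partial_2]$ a scalar), in terms of $\pi_\nu$ of further elements of $\algA$ together with bounded ``oscillator'' pieces; since $\algA=\caD(G)_\ast$ is closed under $\delta_1,\delta_2$, this bootstraps to all orders. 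Concretely one checks $[\partial_1^2+\partial_2^2,\pi_\nu(f)]=\pi_\nu(\text{element of }\algA)\cdot(\text{polynomial of degree}\le1\text{ in }\partial_1,\partial_2)$, and then that conjugating by $(\partial_1^2+\partial_2^2)^{-1/2}$ (to pass from $[\caD^2,\cdot]$ to $[|\caD|,\cdot]$) preserves boundedness — the standard Baaj–Julg / commutator-expansion argument, or an explicit estimate using that the operators involved have at most polynomial growth relative to the oscillator.

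The main obstacle I anticipate is the regularity step, and within it two subtleties: first, passing from $[\caD^2,\cdot]$ (or $[|\caD|^2,\cdot]$) to $[|\caD|,\cdot]$ requires care because $|\caD|$ is an unbounded square root — one wants the Baaj–Julg-type lemma that if $\pi(f)$ and all iterated commutators with $\caD^2$ are relatively bounded with the right order, then smoothness with respect to $\delta=[|\caD|,\cdot]$ follows; second, the operators implementing the derivations, $\partial_1$ (multiplication by $s$) and $\partial_2$ (the $s$-derivative), do not commute — $[\partial_1,\partial_2]=-i$ — so at each order new terms mixing $\partial_1,\partial_2$ and $\pi_\nu(\delta_1\delta_2\cdots f)$ appear and one must verify these all remain bounded (which they do, because $f$ has compact support in $a$ and Fourier-compact-support/Schwartz behaviour in $\beta$, so $\pi_\nu(\delta^{I}f)$ decays fast enough to absorb the polynomial growth of the $\partial_k$'s against the oscillator $N$). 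Handling the $\nu=0$ summand is easy since there $\pi_0(\delta_2 f)=0$ and $\pi_0$ is represented by convolution in one variable; the real work is uniform control over the $\pi_\pm$ summands, and I expect the proof to invoke the earlier trace-class/Hilbert–Schmidt estimates (Propositions~\ref{trace-class} and~\ref{prop-traceclass}) to bound the relevant operators.
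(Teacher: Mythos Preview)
Your outline is correct and follows essentially the same route as the paper: bounded commutators via Lemma~\ref{lem-commutationderivations}, compact resolvent via the explicit identification $\caD^2=\left(\begin{smallmatrix}H+1&0\\0&H-1\end{smallmatrix}\right)\otimes\bbbone_3$ with $H$ the harmonic oscillator, and regularity by showing that iterated commutators with $\caD^2$ produce polynomials of bounded degree in $\partial_1,\partial_2$ with coefficients in $\pi_\nu(\algA)$, then controlling these against $\langle\caD\rangle$.

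Two small corrections. First, for the passage from $[\caD^2,\cdot]$ to $[|\caD|,\cdot]$ the paper does not use a Baaj--Julg argument directly but the $L(T)=\langle\caD\rangle^{-1}[\caD^2,T]$, $R(T)=[\caD^2,T]\langle\caD\rangle^{-1}$ machinery of Connes--Moscovici (see \cite{ConnMosc95a}, \cite[Lemma~10.2.3]{GracVariFigu01a}, \cite{CareGayrRenn12a}), reducing regularity to boundedness of $\langle\caD\rangle^{-m}\pi_\nu(a)\,\partial_1^{n_1}\partial_2^{n_2}\langle\caD\rangle^{-n}$ for $n_1+n_2=n+m$. Second, your expectation that the trace-class/Hilbert--Schmidt Propositions~\ref{trace-class} and~\ref{prop-traceclass} are invoked here is off: those are used later for the spectral-dimension computation, not for regularity. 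The final boundedness step is instead handled by citing \cite[Corollary~5]{GayrWulk11a}, which gives exactly the oscillator-type estimate you describe (polynomial growth of $\partial_k$ absorbed by $\langle\caD\rangle^{-m-n}$).
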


\begin{proof}
The operator $\caD$ is an unbounded selfadjoint operator on $\caH$ with the domain being the Schwartz space $\mathcal{S}(\gR)$ such that
\begin{align}
\label{Dcarre}
\caD^2= \begin{pmatrix} H+1 & 0 \\ 0 & H-1 \end{pmatrix} \otimes \bbbone_3 \,\,
\text{ where }\,\, H \vc -\tfrac{d\,}{ds}^2 +s^2.
\end{align}
Thus $H$ is the Hamiltonian of quantum harmonic oscillator with spectrum $\sigma(H) = \{2n+1 \,\mid\, n\in \gN\}$, so $\sigma(\caD^2)=\{d_n^2\vc2n\, \mid \, n\in \gN\}$ with multiplicity $m_n$ of $d_n$ equal to 1 if $n=0$ and 2 otherwise.

Since the resolvent of $\caD$ is compact, it is sufficient to show that for any $f \in \algA$, $[\caD,\pi(f)]$ is a bounded operator on $\caH$. By direct computation,
\begin{align}
[\caD,\pi(f)]= \bigoplus_{\nu = +,0,-} A_\nu, \qquad
A_\nu \vc \gamma^k \pi_\nu(i\delta_k f) \label{[D,.]}
\end{align}
because $[\partial_k,\pi_\nu(f)]=\pi_\nu(i\delta_k f)$ for $k\in \{1,2\}, \,\nu \in \{-,+,0\}$ and $\pi_0(\delta_2 f)=0$. Since $\pi_\nu(g)$ is bounded for any $g \in \algA$, the claim is proved.

Regularity of the triple means that $\algA$ and $[\caD,\pi(\algA)]$ are in $\cap_{n=0}^\infty \text{ dom }\delta^n$ when $\delta\vc \text{ad}(\abs*{\caD})$. Let $L(T) \vc \langle \caD \rangle^{-1}[\caD^2,T]$ and $R(T) \vc [\caD^2,T] \langle \caD \rangle^{-1}$, where $\langle \caD \rangle \vc (1+\caD)^{1/2}$. Then, applying \cite[p.~238]{ConnMosc95a}, \cite[Lemma~10.2.3]{GracVariFigu01a} and \cite{CareGayrRenn12a}, we have to show that for any $T\in \pi(\algA) \cup [\caD,\pi(\algA)]$, $R^m(T) \circ L^n(T)=\langle \caD \rangle^{-n}\big(\text{ad}(\caD^2)\big)^{n+m}(T)\,\langle \caD \rangle^{-m}$ is a bounded operator.

Since $\pi$ is diagonal, we may by restriction assume that $\pi=\pi_\nu\,\bbbone_2$ for some $\nu =-,0,+$, and $\caD=\gamma^k \partial_k$, so \eqref{[D,.]} says that the case $T\in [\caD,\algA]$ reduces to the case $T\in \algA$ since the $\gamma$-matrices are bounded. Thus we assume $T=\pi(f)$ and we have to show that 
\begin{equation*}
\langle \gamma^k \partial_k \rangle^{-n}\big(\text{ad}((\gamma^k \partial_k)^2)\big)^{n+m}(\pi_\nu(f)\bbbone_2)\,\langle \gamma^k \partial_k \rangle^{-m}
\end{equation*}
is bounded. First, observe that 
\begin{align*}
\text{ad} (\caD^2 )\,\pi(f)=[ (\gamma^k \partial_k)^2,\pi(f)]
=[\partial_1^2+\partial_2^2,\pi_\nu(f)]\,\bbbone_2
= \Big(-\pi_\nu\big((\delta_1^2+\delta_2^2)\fabeta \big)+2\pi_\nu(\delta_k f)\partial_k\Big)\,\bbbone_2.
\end{align*}
By iterative application of this formula, taking into account nontrivial commutations between $\partial_k$, we obtain an expression, which is a polynomial of degree at most $m+n$ in the operators $\partial_k$ with coefficients from $\pi_\nu(\algA)$. So, to end the proof it is sufficient to show that for any $a \in \algA$, any $n, m, n_1,n_2 \geq 0$ such that $n+m=n_1+n_2$ the operator
\begin{equation*}
\langle \gamma^k \partial_k \rangle^{-m} \,\pi_n(a) \partial_1^{n_1} \partial_2^{n_2}
\,\langle \gamma^k \partial_k \rangle^{-n},
\end{equation*}
is bounded, which can be done similarly as in \cite[Corollary~5]{GayrWulk11a}.

One checks that the triple is even since $\chi=\chi^\ast$, $[\chi,\pi(f)]=0$ for $f\in \algA$ and $\caD \chi=-\chi \caD$.
\end{proof}

\subsection{Metric dimension}

The zeta-function of $\caD$ is defined by
\begin{align*}
\zeta_{\caD}(s) \vc \Tr \big( (1+\caD^2)^{-s/2}\big) \text{ for }s \in \gC \text{ with }\Re(s) \text{ large enough}.
\end{align*}

\begin{lemma}
If $\zeta$ is the standard Riemann zeta-function, then
\begin{align*}
\zeta_{\caD}(s) = (2-2^{1-s/2}) \,\zeta(\tfrac{s}{2})-1
\end{align*}
has a unique pole at $s=2$. Thus the metric dimension of the triple $(\algA,\caH,\caD)$, defined as the infimum of all  $s\in \gR^*$ such that $\Tr \big( (1+\caD)^{-s/2})\big)<\infty$, is $2$. 
\end{lemma}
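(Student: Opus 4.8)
The plan is to compute $\zeta_{\caD}(s)$ directly from the spectral data of $\caD^2$ established in \eqref{Dcarre}, and then read off the pole structure and metric dimension.

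First I would record the spectrum of $\caD^2$ with multiplicities. From \eqref{Dcarre}, the operator $\caD^2$ acts on $\bigoplus_{\nu=+,0,-}(L^2(\gR,ds)\otimes\gC^2)$ as $\operatorname{Diag}(H+1,H-1)$ tensored with $\bbbone_3$, where $H$ is the harmonic oscillator Hamiltonian with simple spectrum $\{2n+1 : n\in\gN\}$. So the eigenvalues of $H+1$ are $\{2n+2\}$ and those of $H-1$ are $\{2n\}$, each with $H$-multiplicity one; tensoring with $\gC^2$ is already accounted for in the two diagonal blocks, and the factor $\bbbone_3$ multiplies all multiplicities by $3$. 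Collecting equal eigenvalues across the two blocks: the value $0$ occurs once (from $H-1$, $n=0$), and each value $2n$ with $n\geq 1$ occurs twice (once as $H-1$ at level $n$, once as $H+1$ at level $n-1$). This matches the statement $\sigma(\caD^2)=\{d_n^2 = 2n : n\in\gN\}$ with multiplicity $m_n = 1$ for $n=0$ and $m_n=2$ for $n\geq 1$, and then an overall factor $3$ from the three copies indexed by $\nu$.

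Next I would assemble the zeta function. Since $1+\caD^2$ has eigenvalue $1+2n$ with total multiplicity $3m_n$, we get for $\Re(s)$ large
\begin{align*}
\zeta_{\caD}(s) = \Tr\big((1+\caD^2)^{-s/2}\big) = 3\sum_{n=0}^\infty m_n (1+2n)^{-s/2} = 3\Big(1 + 2\sum_{n=1}^\infty (1+2n)^{-s/2}\Big).
\end{align*}
Wait — I should double-check the constant against the claimed closed form, which carries no factor $3$; the intended normalization is presumably to work with one copy (the role of $\bbbone_3$ being a harmless overall constant that does not affect the location of the pole), so I will present the computation for a single $\nu$ and note the factor $3$ is immaterial for the pole. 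Then $\sum_{n=0}^\infty m_n(1+2n)^{-s/2} = \sum_{n\geq 0}(1+2n)^{-s/2} + \sum_{n\geq 1}(1+2n)^{-s/2} = 2\sum_{n\geq 0}(1+2n)^{-s/2} - 1$. Now $\sum_{n\geq 0}(2n+1)^{-w} = (1-2^{-w})\zeta(w)$ because the odd-integer zeta sum is $\zeta(w)$ minus its even part $2^{-w}\zeta(w)$. With $w=s/2$ this gives $\zeta_{\caD}(s) = 2(1-2^{-s/2})\zeta(s/2) - 1 = (2 - 2^{1-s/2})\zeta(s/2) - 1$, as claimed.

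Finally, the pole structure: $\zeta(w)$ is meromorphic on $\gC$ with a unique simple pole at $w=1$ and is entire elsewhere; the entire factor $2-2^{1-s/2}$ and the constant $-1$ introduce no poles and do not vanish at $s=2$ (there $2-2^{1-s/2} = 2-2^0 = 1 \neq 0$). Hence $\zeta_{\caD}$ has a unique pole, simple, at $s/2 = 1$, i.e. $s=2$. For the metric dimension: $\Tr\big((1+\caD^2)^{-s/2}\big)$ converges precisely when $\Re(s) > 2$ by comparison with $\sum n^{-s/2}$, and diverges at $s=2$; therefore the infimum of $s\in\gR^*$ for which the trace is finite equals $2$. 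The only mild point to be careful about is justifying convergence/meromorphic continuation of the eigenvalue sum, but this is immediate from the elementary identity with $\zeta(s/2)$ just derived, so there is no real obstacle here; the proof is essentially a bookkeeping of multiplicities followed by the odd-zeta identity.
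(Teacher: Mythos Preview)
Your proof is correct and follows essentially the same route as the paper: both compute the eigenvalue sum using the odd-integer zeta identity $\sum_{n\geq 0}(2n+1)^{-w}=(1-2^{-w})\zeta(w)$ (the paper phrases this via the Hurwitz function $\zeta_{1/2}(s)=(2^s-1)\zeta(s)$, which is the same identity). You are also right to flag the overall factor $3$ from $\bbbone_3$, which the paper silently drops; it is indeed immaterial for the pole location and metric dimension.
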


\begin{proof}
\begin{align*}
\zeta_{\caD}(2s) = \sum_{n=0}^\infty \tfrac{m_n}{(d_n^2+1)^s}=1+ 2\sum_{n=1}^\infty \tfrac{1}{(2n+1)^s}
= 1+2^{1-s} \sum_{n=1}^\infty \tfrac{1}{(n+1/2)^s}= 1+2^{1-s} \big(\zeta_{1/2}(s) -2^s\big)
\end{align*}
where the Hurwitz zeta-function $\zeta_{1/2}(s)\vc \sum_{n=0}^\infty \tfrac{1}{(n+1/2)^s}$ satisfies $\zeta_{1/2}(s)=(2^s-1)\zeta(s)$. Thus $\zeta_\caD(2s)=(2-2^{1-s})\zeta(s)-1$, the unique pole of $\zeta_\caD$ is at  $s=2$ and this pole is simple.
\end{proof}

\subsection{Spectral dimension and Dixmier trace}
\label{subsec-metricdim}

If we want to find the spectral dimension  $d \in \gR_+$ of $(\algA,\caH,\caD)$, which is a nonunital spectral triple, we should rather use the following definition \cite[Definition 6.1]{CareGayrRenn12a}
\begin{equation}
\label{eq-defspectraldimension}
d\vc\inf\{d'>0\,\mid\, \Tr\big(\pi(f)\,(1+\caD^2)^{-d'/2}\big)<\infty \text{ for any }f\in \algA^+\}.
\end{equation}
Actually, we shall prove something stronger: $(\algA,\caH,\caD)$ is $\mathcal{Z}_1$-summable (see \cite[Definition 6.2]{CareGayrRenn12a}), namely
\begin{align*}
\underset{s \,\downarrow \,1}{\limsup} \, \,\abs*{(s-1)\,\Tr\big(\pi(f)\,(1+\caD^2)^{-s/2}\big)} <\infty \text{ for all } 
f \in \algA.
\end{align*}
This means that $\pi(f)\,(1+\caD^2)^{-d/2} \in \caL^{1,\infty}(\caH)$ or that its $m$-th singular value behaves  like $\mathcal{O}(m^{-1})$.  

\begin{theorem}
\label{spectraldim}
The spectral triple $(\algA,\caH,\caD)$ is $\mathcal{Z}_1$-summable (thus its spectral dimension is $d=1$). Moreover, for any $f \in \algA$, the operators $\pi(f)\,(1+\caD^2)^{-1/2}$ are measurable and for any Dixmier trace $\Trw$, we have 
\begin{align}
\Trw \big[\pi(f)\,(1+\caD^2)^{-1/2}\big] 
= \sum_{\nu \in \{-,0,+\}} \Trw \big[\pi_\nu(f)\,(1+\caD^2)^{-1/2}\big]
= 8\,\tau(f). \label{result}
\end{align}
In particular, $\Trw \big[\pi(f\ast f^\ast)\,(1+\caD^2)^{-d/2}\big] = 8\,\int_\gR  da\,|\fabeta(a,0)|^2$. 
\end{theorem}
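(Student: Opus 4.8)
The plan is to reduce the computation of the Dixmier trace to the Schwartz kernels of $\pi_\nu(f)$ evaluated near a suitable asymptotic regime, using the known spectral data of $\caD^2$. First I would invoke Proposition~\ref{eclatement} to write $f = \sum_i g_i \ast h_i$ and, via the polarization identity \eqref{eq-polarization}, reduce to the case $f = g \ast g^\ast$ with $g\in\algA$, so that $\pi_\nu(f) = \pi_\nu(g)\pi_\nu(g)^\ast \geq 0$; this makes all trace expressions manifestly well-defined and lets one work with positive operators throughout. Since $\pi$ is diagonal, $\Tr_\omega[\pi(f)(1+\caD^2)^{-1/2}] = \sum_{\nu\in\{-,0,+\}} \Tr_\omega[\pi_\nu(f)(1+\caD^2)^{-1/2}]$ (with the $\gC^2$-multiplicity folded into the $\caD^2 = \mathrm{diag}(H+1,H-1)$ structure), so it suffices to treat each $\nu$ separately.

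Next I would use the fact that $H = -\tfrac{d^2}{ds^2}+s^2$ has eigenbasis the Hermite functions $\{e_n\}_{n\geq 0}$ with $He_n = (2n+1)e_n$, so $(1+\caD^2)^{-1/2}$ on each $\gC^2$-component has eigenvalues $(2n+1\pm 1+1)^{-1/2}$, i.e. of order $(2n)^{-1/2}$. The key computation is then the asymptotics of the partial sums $\sum_{n=0}^{N} \langle e_n, \pi_\nu(f) e_n\rangle$: by the $\caL^{1,\infty}$-theory and the Cesàro/Tauberian characterisation of Dixmier traces (as in \cite{CareGayrRenn12a} and \cite[Ch.~7]{GracVariFigu01a}), one has $\Tr_\omega[\pi_\nu(f)(1+\caD^2)^{-1/2}] = \lim_{N\to\infty}\tfrac{1}{\log N}\sum_{n=0}^N \langle e_n,\pi_\nu(f)e_n\rangle\,(2n)^{-1/2}\cdot(\text{weight})$, and since $(2n)^{-1/2}$ summed gives $\sqrt{2N}$-type growth, the $\log$-divergence is governed by the density of the diagonal matrix elements $\langle e_n,\pi_\nu(f)e_n\rangle$ near the "$n\to\infty$" edge. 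Here I would use the explicit Schwartz kernel $K_{\pi_\nu(f)}(s,u) = \fabeta(u-s,\nu e^{-s})$ from \eqref{eq-representationinvariablesabeta}: as $n\to\infty$ the Hermite functions concentrate (semiclassically) so that $\langle e_n,\pi_\nu(f)e_n\rangle \to \fabeta(0,0) = \tau(f)$ for $\nu=0$, while for $\nu=\pm$ the factor $\nu e^{-s}\to 0$ forces $\langle e_n,\pi_\nu(f)e_n\rangle \to \fabeta(0,0) = \tau(f)$ as well (this is exactly why $\pi_\pm(f)$ fails to be trace-class unless $\tau(f)=0$). Combining the three $\nu$'s and carefully tracking the two $\gC^2$-copies and the numerical factor from $\sum (2n)^{-1/2} \sim \sqrt{2N}$ versus the $\log$-normalisation yields the constant $8$, and measurability follows since the limit exists independently of $\omega$.

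The main obstacle will be making the heuristic "$\langle e_n,\pi_\nu(f)e_n\rangle \to \tau(f)$" rigorous with enough uniformity to extract the precise leading $\log$-coefficient — in particular controlling the $\nu=\pm$ contributions where the kernel is supported near $\beta = \nu e^{-s} \to 0^{\pm}$ and the operator is actually trace-class-like "in the bulk" but contributes at the spectral edge. I expect one has to avoid Hermite-function asymptotics altogether and instead compute $\Tr[\pi_\nu(f)(1+\caD^2)^{-s/2}]$ directly as a function of $s$, show it has a simple pole at $s=1$ with residue proportional to $\tau(f)$ (this is the $\caZ_1$-summability claim, and gives measurability automatically via \cite[Prop.~6.something]{CareGayrRenn12a}), and read off the Dixmier trace as that residue. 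Concretely: write $\Tr[\pi_\nu(f)(1+\caD^2)^{-s/2}] = \sum_{n} m_{n,\nu}\,(d_n^2+1)^{-s/2}\,\langle e_n,\pi_\nu(f)e_n\rangle$, split off the "constant part" $\tau(f)\sum_n m_n (d_n^2+1)^{-s/2} = \tau(f)\zeta_\caD(s)/(\text{per-}\nu\text{ factor})$ which carries the pole (with residue computable from the Lemma on $\zeta_\caD$), and bound the remainder $\sum_n (\langle e_n,\pi_\nu(f)e_n\rangle - \tau(f))(d_n^2+1)^{-s/2}$ as holomorphic past $s=1$ using that the correction is $O(n^{-1})$ or summable — this last decay estimate, which is really a statement that $\pi_\nu(f) - \tau(f)\cdot(\text{projection-like object})$ is of trace class or better on the relevant subspace, is the technical heart and will use Proposition~\ref{prop-traceclass} (trace-class-ness of $\pi_\pm(f)\theta^2$ etc.) together with the decomposition $\fabeta(u-s,\nu e^{-s}) = \fabeta(u-s,0) + [\fabeta(u-s,\nu e^{-s}) - \fabeta(u-s,0)]$ where the bracketed term lies in $\caK_-\oplus\caK_+$ and is genuinely trace-class against $(1+\caD^2)^{-1/2}$. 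Finally, the residue bookkeeping: the pole of $\zeta_\caD(s)$ at $s=2$ from the Lemma translates (after the $s\mapsto s$, not $2s$, normalisation appropriate to $(1+\caD^2)^{-s/2}$) into $\mathrm{Res}_{s=1}\Tr[\pi_\nu(f)(1+\caD^2)^{-s/2}] = c_\nu\,\tau(f)$, and summing $c_- + c_0 + c_+$ over the three representations, each with its $\gC^2$ doubling and the $\tfrac{1}{2\pi}$ in the definition of $\tau$, must be shown to equal $8$ after multiplying by the Dixmier-trace normalisation; I would double-check this constant against the residue $2-2^{1-s/2}$ evaluated appropriately. The last sentence of the theorem is then immediate from \eqref{eq-tau(f*f)} applied to $f^\ast$ in place of $f$.
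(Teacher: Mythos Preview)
Your overall architecture --- reduce to $f=g\ast g^\ast$ via Proposition~\ref{eclatement} and polarization, split over $\nu$, establish $\caZ_1$-summability, and compute $\Trw$ as the residue $\lim_{s\downarrow 1}(s-1)\Tr[\pi(f)(1+\caD^2)^{-s/2}]$ by invoking \cite{CareGayrRenn12a} --- is exactly the paper's framework. But the paper does \emph{not} work with Hermite matrix elements $\langle e_n,\pi_\nu(f)e_n\rangle$, nor with your proposed kernel splitting. Instead it passes to the heat trace via the Mellin transform
\[
\Tr\big[\pi(f)(1+\caD^2)^{-s/2}\big]=\tfrac{1}{\Gamma(s/2)}\int_0^\infty dt\,t^{s/2-1}e^{-t}\Tr\big[\pi(f)e^{-t\caD^2}\big],
\]
computes $\Tr[\pi_\nu(f)e^{-tH}]$ as an explicit Gaussian integral using Mehler's formula for $K_{e^{-tH}}$, and extracts the residue from the $t\to 0^+$ asymptotics of that integral (Lemma~\ref{lem-Ifortsmall}). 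The intervening Lemmas~\ref{lem-normtracecommutatorexpD2}--\ref{commut} and Proposition~\ref{Dixtr} are needed to justify that the residue formula \eqref{Dix} applies; these are heat-kernel estimates, not spectral ones.

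There is a genuine gap in your approach. Your decomposition $\fabeta(u-s,\nu e^{-s})=\fabeta(u-s,0)+[\,\fabeta(u-s,\nu e^{-s})-\fabeta(u-s,0)\,]$ and the claim that the bracketed term is ``genuinely trace-class against $(1+\caD^2)^{-1/2}$'' is false. For $s\to+\infty$ the bracket is $O(e^{-s})$, fine; but for $s\to-\infty$ one has $\nu e^{-s}\to\pm\infty$, so $\fabeta(u-s,\nu e^{-s})\to 0$ (Schwartz in $\beta$) while $\fabeta(u-s,0)$ does \emph{not} decay --- the bracket behaves like $-\fabeta(u-s,0)$ on the half-line $s\ll 0$. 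Thus the difference operator $\pi_\pm(f)-\pi_0(f)$ (both viewed on $L^2(\gR)$) is not trace-class against $(1+\caD^2)^{-1/2}$; it contributes to the Dixmier trace. This is precisely what Lemma~\ref{lem-Ifortsmall} captures: after the change of variables $x'=\sqrt{T}x$, for $\nu=\pm$ only the half-line $x'>0$ survives (where $e^{-x'/\sqrt{T}}\to 0$), giving $\sqrt{t}\,I_\pm(t)\to\pi^{3/2}\tau(f)$, whereas for $\nu=0$ both half-lines contribute, giving $\sqrt{t}\,I_0(t)\to 2\pi^{3/2}\tau(f)$. The per-$\nu$ Dixmier traces are therefore $2\tau(f),\,4\tau(f),\,2\tau(f)$ --- not equal --- and the constant $8$ is the sum $2+4+2$. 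Your heuristic ``$\langle e_n,\pi_\nu(f)e_n\rangle\to\tau(f)$ for all $\nu$'' and the trace-class claim on the remainder would instead give three equal contributions and the wrong constant. The one-sidedness of $\beta=\nu e^{-s}$ for $\nu=\pm$ is the mechanism you are missing.
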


Notice that if $\pi(f)$ is trace-class, then by Proposition~\ref{trace-class} all terms of \eqref{result} vanish.

\begin{theorem}
\label{dimensionspectrum}
The dimension spectrum (see \cite{GayrWulk11a}) of $(\algA,\caH,\caD)$ is $\{1-\gN\}$.
\end{theorem}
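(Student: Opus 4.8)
The plan is to identify the dimension spectrum with the set of poles of the meromorphic continuations of the zeta functions $\zeta_b(s)\vc\Tr\big(b\,(1+\caD^2)^{-s/2}\big)$, where $b$ ranges over the algebra $\mathcal{B}$ generated by $\pi(\algA)$, $[\caD,\pi(\algA)]$ and their iterated images under $\delta\vc\text{ad}(\abs{\caD})$, and to prove this set is exactly $\{1-n : n\in\gN\}=\{1,0,-1,-2,\dots\}$. A first reduction is that $\Tr\big(\delta(c)\,(1+\caD^2)^{-s/2}\big)=0$ for every $c$, by cyclicity of the trace together with $[\abs{\caD},(1+\caD^2)^{-s/2}]=0$; since $\delta$ is a derivation this lets one ``integrate by parts'' inside the functional and remove all but one $\delta$ from any word of $\mathcal{B}$. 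Combined with the structural analysis already performed in the regularity proof, one gets that, up to remainders contributing no poles, every element of $\mathcal{B}$ is a finite sum over $\nu\in\{-,0,+\}$ and over monomials in $\partial_1,\partial_2$ of operators that, under $\Tr(\,\cdot\,(1+\caD^2)^{-s/2})$ and by cyclicity, are equivalent to $\pi_\nu(g)\,\partial_1^{n_1}\partial_2^{n_2}\,(1+\caD^2)^{-(n_1+n_2)/2}$ with $g\in\algA$ (the exact degree balance $n+m=n_1+n_2$ of the regularity proof produces this precise exponent; for $\nu=0$ the $\partial_2$-factors are harmless since $\pi_0\circ\delta_2=0$).

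Next I would pass to the heat kernel: writing $(1+\caD^2)^{-\sigma/2}=\Gamma(\sigma/2)^{-1}\int_0^\infty t^{\sigma/2-1}e^{-t}e^{-t\caD^2}\,dt$ reduces the poles of $\zeta_b$ to the small-$t$ asymptotics of $\Psi^{n_1,n_2}_\nu(t)\vc\Tr\big(\pi_\nu(g)\,\partial_1^{n_1}\partial_2^{n_2}e^{-t\caD^2}\big)$. These I would compute exactly as in the proof of Theorem~\ref{spectraldim}: from the Mehler kernel of $e^{-tH}$ (with the $e^{\mp t}$ spin factors of $\caD^2$) and the Schwartz kernel $K_{\pi_\nu(g)}(s,u)=\gabeta(u-s,\nu e^{-s})$, substituting $u=s+v$ and integrating by parts to absorb the $\partial_2$'s; the Gaussian Mehler factor localises $v$ at scale $\sqrt t$, and expanding $\gabeta$ in $v$ and the kernel in $t$ turns everything into Gaussian moments in $v$ times a one-dimensional $s$-integral. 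The substitution $x=\nu e^{-s}$ sends the critical region $s\to+\infty$ to $x\to0$, where $\gabeta(\,\cdot\,,0)$ need not vanish: this is the region carrying the ``$\tau$-type'' contributions, and also the source of logarithms, since the interplay of the $x\to0$ behaviour with the negative powers of $\abs{\caD}$ surviving in $b$ (equivalently with the $t^{\sigma/2-1}$ weight) produces resonances whenever an exponent of $t$ meets a non-negative integer, contributing terms $t^{\,j}\log t$. The expected outcome is, for each $\nu,n_1,n_2$, an expansion $\Psi^{n_1,n_2}_\nu(t)\sim\sum_{k\ge0}\big(a_k+b_k\log t\big)\,t^{(k-1-n_1)/2}$, with $b_k\neq0$ only at the integer-resonant exponents, and with the $\nu=0$ sector giving a pure half-integer-power series without logarithm (there $\gabeta(\,\cdot\,,0)$ is $s$-independent, so the $s$-integral is a bare Gaussian--polynomial integral).

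Inserting this into the Mellin formula, a $t^{(k-1-n_1)/2}$ term yields a simple pole of $\zeta_b$ at $s=1-n_2-k$, while a $t^{(k-1-n_1)/2}\log t$ term yields a double pole there; since $\Gamma\big((s+n_1+n_2)/2\big)^{-1}$ has a simple zero exactly at the even non-positive resonant values, those double poles survive — and only survive — as simple poles, which is precisely what populates the even non-positive integers (otherwise killed by the $\Gamma$-zero). Collecting over all $b\in\mathcal{B}$: the bound $\Re(\text{pole})\le1$ comes from $\mathcal{Z}_1$-summability (Theorem~\ref{spectraldim}) and the degree balance; the odd integers $\le1$ already occur for $b=\pi(f)$ (the expansion behind Theorem~\ref{spectraldim} gives the pole at $s=1$ with residue proportional to $\tau(f)$); the even integers $\le0$ occur through the logarithmic terms of elements built from $[\caD,\pi(\algA)]$ and $\delta$; and no pole can lie off $\gZ$ because every heat-trace exponent lies in $\tfrac12\gZ$ and pairs with the $\tfrac12$ in $s/2$. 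A final check that the leading nonzero coefficient at each candidate value is nonzero for generic $f\in\algA$ shows the spectrum is exactly $\{1-\gN\}$, not a proper subset.

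The hard part will be the heat-trace asymptotics of the second paragraph: controlling the $s$-integral uniformly in $t$ as $s\to+\infty$, where the Gaussian factor $e^{-s^2\tanh t}$ only marginally tames a logarithmic divergence; pinning down exactly which $t^{\,j}\log t$ terms carry a nonzero coefficient; and propagating this bookkeeping through the reduction to $\pi_\nu(g)\,\partial_1^{n_1}\partial_2^{n_2}(1+\caD^2)^{-(n_1+n_2)/2}$, so that the resulting pole set comes out neither too large (no spurious poles, in particular none with $\Re s>1$ and none off $\gZ$) nor too small (every element of $1-\gN$ genuinely occurs). This computation runs parallel to the one in \cite{GayrWulk11a}.
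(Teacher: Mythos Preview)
Your approach is essentially the same as the paper's: the paper's proof is a one-sentence reference --- ``Since all behavior in $t$ obtained in Lemmas~\ref{lem-Ifortsmall}--\ref{lem-normtracepiexpD2} are the same as in \cite{GayrWulk11a}, it is sufficient to follow the arguments of \cite[Theorem 17]{GayrWulk11a}'' --- and your outline (reduction of $\mathcal{B}$ to balanced polynomials in $\partial_1,\partial_2$, Mehler-kernel heat trace, Mellin transform) is precisely a sketch of that argument, which you also acknowledge at the end.

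One point deserves scrutiny, though. Your mechanism for populating the even non-positive integers $\{0,-2,-4,\dots\}$ of the dimension spectrum relies on $t^{j}\log t$ terms in the small-$t$ expansion of $\Tr\big(\pi_\nu(g)\,\partial_1^{n_1}\partial_2^{n_2}\,e^{-t\caD^2}\big)$, since otherwise the simple zeros of $\Gamma(s/2)^{-1}$ would kill the corresponding simple poles of the Mellin integral. But a direct expansion of $I_\nu(t)$ (and its $\partial_k$-decorated variants) along the lines of Lemma~\ref{lem-Ifortsmall} --- Taylor-expanding $\fabeta(\sqrt{T}\,y'+\tfrac{S}{\sqrt{T}}\,x',\,\nu e^{-x'/\sqrt{T}})$ in the first argument and treating the second via the split $x'\gtrless 0$ --- produces a clean series in $\sqrt{t}$ with no logarithms: the would-be divergence $\int_0^\epsilon g(\beta)\,d\beta/\beta$ is exactly cancelled by the $\log\epsilon$ from the half-Gaussian, leaving a finite $\sqrt{t}$-coefficient. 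So the source of the even-integer poles is not the heat expansion of a single ``balanced'' building block, but the fact that the reduction of a general $b\in\mathcal{B}$ to such blocks does \emph{not} preserve exact degree balance: the passage from $\delta=\mathrm{ad}(|\caD|)$ to $\mathrm{ad}(\caD^2)$ introduces compensating powers $|\caD|^{-1}$ that shift the effective exponent in $(1+\caD^2)^{-s/2}$ by half-integers, and it is this shift --- not logarithms --- that moves the pole lattice to cover all of $1-\gN$. You should check this against the actual bookkeeping in \cite[Theorem~17]{GayrWulk11a} rather than relying on the log-term heuristic.
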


To get these results, and in particular the measurability which is the independence of the result in the choice of a Dixmier trace $\Trw$, we need to know the behavior around $t=0$ of $\Tr\big(\pi(f)\,e^{-t\caD^2}\big)$ and we will follow closely \cite{GayrWulk11a} which also use an harmonic-like operator $\caD$.

The operator $e^{-t\caD^2}$ is trace-class and 
\begin{align}
\Tr(e^{-t\caD^2})=6\Tr(e^{-tH})\,\cosh t=3\coth t. \label{eq-heattraceDirac}
\end{align}
The heat trace associated to $\caD$ is, using \eqref{Dcarre},
\begin{align}
\Tr(\pi(f)\,e^{-t\caD^2}) = 2\cosh(t)\,\sum_\nu \Tr \big(\pi_\nu(f)\,e^{-tH}\big),\, \quad t\in \gR^+.
\end{align}

\begin{lemma}
Let $f\in \algA$ and $\nu \in \{-,0,+\}$. Then
\begin{align}
\Tr\big( \pi_\nu(f)\,e^{-tH}\big)=\tfrac{1}{2\pi \sqrt{\cosh 2t}} \int_{\hgR\times\gR} \dd v \,\dd x\, \falpbeta(v,\nu e^{-x}) \, e^{-\tfrac{1}{2} (\tanh 2t)(x^2+v^2)\, -\,i \,\tfrac{2\sinh^2 t}{\cosh^2 t +\sinh^2t} \,xv}. \label{heattrace}
\end{align}
In particular, when $\nu=0$, 
$\Tr\big(\pi_0(f)\,e^{-tH} \big)= \tfrac{1}{\sqrt{2\pi \sinh 2t}} \,\int_{\hgR} \dd v\, \falpbeta(v,0)\,e^{-(\tanh t)\,v^2}$.
\end{lemma}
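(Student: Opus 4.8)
The plan is to compute $\Tr(\pi_\nu(f)\,e^{-tH})$ as the integral over the diagonal of the integral kernel of the trace-class operator $\pi_\nu(f)\,e^{-tH}$, feeding in the explicit Schwartz kernel \eqref{eq-Schwartzkernelalpbeta} of $\pi_\nu(f)$ in the $(\alpha,\beta)$-variables together with the Mehler kernel of the harmonic-oscillator semigroup,
\[ K_{e^{-tH}}(s,u)\vc \tfrac{1}{\sqrt{2\pi\sinh 2t}}\,\exp\Bigl(-\tfrac{\cosh(2t)(s^2+u^2)-2su}{2\sinh 2t}\Bigr),\qquad t>0. \]
First I would justify the diagonal-kernel formula for the trace: $e^{-tH}$ is trace-class because $\sum_{n\geq 0}e^{-(2n+1)t}=\tfrac{1}{2\sinh t}<\infty$ (which also yields \eqref{eq-heattraceDirac}), so writing $\pi_\nu(f)\,e^{-tH}=\bigl(\pi_\nu(f)\,e^{-tH/2}\bigr)\bigl(e^{-tH/2}\bigr)$ as a product of two Hilbert-Schmidt operators and using the semigroup property $\int_\gR\dd u\,K_{e^{-tH/2}}(w,u)\,K_{e^{-tH/2}}(u,s)=K_{e^{-tH}}(w,s)$ gives
\[ \Tr\bigl(\pi_\nu(f)\,e^{-tH}\bigr)=\int_{\gR^2}\dd s\,\dd w\;K_{\pi_\nu(f)}(s,w)\,K_{e^{-tH}}(w,s), \]
all the interchanges being legitimate by the joint continuity of the kernels and the rapid decay of $\falpbeta$ (Proposition~\ref{prop-fouriersmoothcompactfunctions}); alternatively one invokes \cite{Bris91a} directly.

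Next I would substitute $K_{\pi_\nu(f)}(s,w)=\tfrac{1}{2\pi}\int_{\hgR}\dd v\,\falpbeta(v,\nu e^{-s})\,e^{-iv(w-s)}$ from \eqref{eq-Schwartzkernelalpbeta}, landing on a triple integral over $(s,w,v)$ whose $w$-integrand is a Gaussian with a complex linear term. Performing the $w$-integration by $\int_\gR e^{-aw^2+bw}\dd w=\sqrt{\pi/a}\,e^{b^2/4a}$ — licit since $\Re a=\tfrac{\cosh 2t}{2\sinh 2t}>0$ for $t>0$ — the normalisation collapses to $\tfrac{1}{2\pi\sqrt{\cosh 2t}}$ and the surviving exponent is a quadratic form in the two remaining variables. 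Collecting this form, using $\cosh 2t-1=2\sinh^2 t$ and $\cosh 2t=\cosh^2 t+\sinh^2 t$, and renaming the surviving variable $s$ as $x$, produces exactly \eqref{heattrace}; Fubini between the $v$-integral and the $s$-, $w$-integrations is again justified by the Schwartz decay of $\falpbeta$ in both slots together with the Mehler damping for $t>0$.

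Finally, for $\nu=0$ one has $\falpbeta(v,\nu e^{-x})=\falpbeta(v,0)$, independent of $x$, so one further Gaussian integration in $x$ can be carried out; using $\tfrac12\tanh 2t=\tfrac{\sinh t\cosh t}{\cosh^2 t+\sinh^2 t}$ the net coefficient of $v^2$ reduces to $-\tanh t$ and the prefactor to $(2\pi\sinh 2t)^{-1/2}$, which is the stated special case (the cross term in \eqref{heattrace} causes no trouble here, as it enters the Gaussian squared).

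The main obstacle is not the algebra but the analytic bookkeeping: one must verify with care that $\Tr(\pi_\nu(f)\,e^{-tH})$ is indeed the diagonal integral of the kernel and that the successive exchanges of integration are valid. Both follow from the trace-class/Hilbert-Schmidt factorisation above together with the rapid decay of $\falpbeta$ in its two arguments (Proposition~\ref{prop-fouriersmoothcompactfunctions}) and the Gaussian damping away from $t=0$; note in particular that $\pi_\nu(f)$ itself need not be Hilbert-Schmidt, so the factor $e^{-tH/2}$ is essential. Once these points are secured, the remaining steps — the nested Gaussian integrals and the hyperbolic identities simplifying the resulting quadratic form — are routine.
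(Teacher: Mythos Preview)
Your proposal is correct and follows essentially the same route as the paper: both feed the Schwartz kernel \eqref{eq-Schwartzkernelalpbeta} of $\pi_\nu(f)$ and the Mehler kernel of $e^{-tH}$ into the diagonal-integral formula for the trace, perform one Gaussian integration in the intermediate variable, and simplify via hyperbolic identities (the paper writes the Mehler exponent in the $(x-y)^2,(x+y)^2$ form, you in the equivalent $x^2+y^2,\,xy$ form). If anything, your justification of the trace formula via the Hilbert--Schmidt factorisation $\pi_\nu(f)e^{-tH/2}\cdot e^{-tH/2}$ is more explicit than the paper's, which simply notes that $e^{-tH}$ is trace-class.
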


\begin{proof}
It is known that the kernel of the heat operator $e^{-tH}$ is given by Mehler's formula \cite{BerlGetzVerg03a}
\begin{align*}
K_{e^{-tH}}(x,y)= \tfrac{1}{\sqrt{2\pi \sinh(2t)}}\, e^{-\tfrac{1}{4}[\coth(t)\,(x-y)^2+\tanh(t)\,(x+y)^2]}.
\end{align*}
In the variables $(\alpha,\beta)$, \eqref{eq-Schwartzkernelalpbeta} gives
$K_{\pi_\nu(f)}(x,y) = \tfrac{1}{2\pi} \int_{\gR} \dd v\,\falpbeta(v,\nu e^{-x})\,e^{-iv(y-x)}$, so that
\begin{align*}
K_{\pi_\nu(f)\,e^{-tH}}(x,y)
=\tfrac{1}{2\pi}\int_{\gR\times\hgR} \dd u\,\dd v \falpbeta(v,\nu e^{-x}) \,e^{-iv(u-x)}\,\tfrac{1}{\sqrt{2\pi \sinh 2t}}\, e^{-\tfrac{1}{4}[(\coth t)\,(u-y)^2+(\tanh t)\,(u+y)^2]}.
\end{align*}
The integration along $u$ in this expression can be performed, and one has:
\begin{align*}
\int_{\gR} \dd u \,e^{-iv(u-x)}\,e^{-\tfrac{1}{4}[\coth(t)\,(u-y)^2+\tanh(t)\,(u+y)^2]} =
\sqrt{2\pi \tanh 2t}\,e^{-\tfrac{1}{2} (\tanh 2t)(x^2+y^2)\,-\,i\, \tfrac{2\sinh^2 t}{\cosh^2 t +\sinh^2 t} \,xv}.
\end{align*}
Since $e^{-tH}$ is trace-class by \eqref{eq-heattraceDirac}, so is $\pi_\nu(f) e^{-tH}$, thus 
\begin{align*}
\Tr\big( \pi_\nu(f)\,e^{-tH}\big)
&=\int_{\gR} \dd x\,K_{\pi_\nu(f)\,e^{-tH}}(x,x)
\\
&=\tfrac{1}{2\pi \sqrt{\cosh 2t}} \int_{\hgR\times\gR} \dd v \,\dd x\, \falpbeta(v,\nu e^{-x}) \, e^{-\tfrac{1}{2} (\tanh 2t)(x^2+v^2)\, -\,i \,\tfrac{2\sinh^2 t}{\cosh^2 t +\sinh^2t} \,xv}. 
\end{align*}
Moreover
\begin{align*}
\int_{\gR} \dd x \,e^{-\tfrac{1}{2} (\tanh 2t)(x^2+y^2)\,
-\,i \,\tfrac{2\sinh^2 t}{\cosh^2 t +\sinh^2 t} \,xv}=\sqrt{\pi}\sqrt{\coth t+\tanh t} \, \,e^{-(\tanh t)\,v^2}
\end{align*}
so the result for $\Tr\big(\pi_0(f)\,e^{-tH} \big)$ follows directly after 
performing integration in $x$.
\end{proof}

In the following computations, we use the two functions defined on $t>0$:
\begin{align*}
T(t) \vc \tfrac{1}{2}\tanh 2t,
\qquad
S(t) \vc \tfrac{2\sinh^2 t}{\cosh^2 t +\sinh^2 t}
\end{align*}
sometimes also denoted by $T$ and $S$ in computations.

\begin{lemma}
\label{lem-Ifortsmall}
For any $f\in \algA$, let
\begin{equation}
\label{I(t)}
I_\nu(t) \vc \tfrac{\sqrt{\pi}}{\sqrt{T(t)}} \int_{\gR^2} \dd x \dd y \,\fabeta(y,\nu e^{-x})\,e^{-T(t)x^2}\,e^{-\tfrac{1}{4T(t)}(S(t)x-y)^2}.
\end{equation}
Then 
\begin{equation*}
\lim_{t \to 0+} \sqrt{t} \,I_\nu (t) = 
\begin{cases}
\pi^{\tfrac{3}{2}} \fabeta(0,0) = \pi^{\tfrac{3}{2}} \tau(f) & \text{ for $\nu\in \{-,+\}$},\\[10pt]
2 \pi^{\tfrac{3}{2}} \fabeta(0,0) = 2 \pi^{\tfrac{3}{2}} \tau(f) & \text{ for $\nu=0$}.
\end{cases}
\end{equation*}
\end{lemma}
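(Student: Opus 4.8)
The plan is to normalise the two Gaussian factors in \eqref{I(t)} by linear changes of variables and then pass to the limit by dominated convergence.

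First I record the behaviour of $T$ and $S$ as $t\downarrow0$: from $\tanh 2t = 2t+\mathcal{O}(t^3)$ one gets $T(t)\to0$ with $T(t)/t\to1$, hence $\sqrt t/\sqrt{T(t)}\to1$; and from $S(t)=2\sinh^2 t/\cosh 2t$ one gets $S(t)\sim 2t^2$, hence $S(t)/\sqrt{T(t)}\to0$ and $2\sqrt{T(t)}\to0$. These are the only facts about $T,S$ that enter. In \eqref{I(t)} I then substitute $y = S(t)\,x + 2\sqrt{T(t)}\,w$, which turns the factor $e^{-\tfrac{1}{4T(t)}(S(t)x-y)^2}$ into $e^{-w^2}$ and whose Jacobian $2\sqrt{T(t)}$ absorbs the prefactor into the constant $2\sqrt\pi$; then $x = u/\sqrt{T(t)}$, which turns $e^{-T(t)x^2}$ into $e^{-u^2}$ at the cost of a new factor $1/\sqrt{T(t)}$. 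This yields
\begin{equation*}
\sqrt t\,I_\nu(t) = 2\sqrt\pi\,\tfrac{\sqrt t}{\sqrt{T(t)}}\int_{\gR^2}\dd u\,\dd w\;\fabeta\!\Big(\tfrac{S(t)}{\sqrt{T(t)}}\,u + 2\sqrt{T(t)}\,w,\ \nu\,e^{-u/\sqrt{T(t)}}\Big)\,e^{-u^2-w^2}.
\end{equation*}

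Since $\fabeta\in\caD(\gR,\FourierDR(\hgR))$ is bounded, the integrand is dominated by $\norm*{\fabeta}_\infty\,e^{-u^2-w^2}\in L^1(\gR^2)$ independently of $t$, so it suffices to identify its pointwise limit for a.e.\ $(u,w)$, that is for $u\neq0$. By the asymptotics above the first argument of $\fabeta$ tends to $0$, while $\nu\,e^{-u/\sqrt{T(t)}}\to0$ when $u>0$ and $\abs*{\nu\,e^{-u/\sqrt{T(t)}}}\to\infty$ when $u<0$ and $\nu\in\{-,+\}$. Using continuity of $\fabeta$ in the first case, and in the second the fact that $\sup_a\abs*{\fabeta(a,\beta)}\to0$ as $\abs*{\beta}\to\infty$ (the Schwartz decay in $\beta$ being uniform over the compact support in $a$), the integrand converges a.e.\ to $\bbbone_{\{u>0\}}\,\fabeta(0,0)\,e^{-u^2-w^2}$ for $\nu\in\{-,+\}$, and to $\fabeta(0,0)\,e^{-u^2-w^2}$ for $\nu=0$ (where the second slot is identically $0$). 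Dominated convergence and $\sqrt t/\sqrt{T(t)}\to1$ then reduce the computation to the Gaussian integrals $\int_0^\infty e^{-u^2}\dd u = \tfrac{\sqrt\pi}{2}$ and $\int_\gR e^{-w^2}\dd w = \sqrt\pi$, which give $\pi^{3/2}\fabeta(0,0)$ for $\nu=\pm$ and $2\pi^{3/2}\fabeta(0,0)$ for $\nu=0$; finally $\fabeta(0,0)=\tau(f)$ by Lemma~\ref{lemma-tracetau}.

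The only genuinely delicate point is the interchange of limit and integral on $\{u<0\}$, which relies precisely on $\fabeta$ decaying in the variable $\beta$ uniformly in the variable $a$ — exactly what membership in $\caD(\gR,\FourierDR(\hgR))$ provides. The rest is routine manipulation of Gaussian integrals.
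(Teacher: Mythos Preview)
Your proof is correct and follows essentially the same approach as the paper's own proof: the same rescaling of the two Gaussian variables (your $(u,w)$ are the paper's $(x',y'/2)$), the same dominated convergence argument using boundedness of $\fabeta$, and the same dichotomy on the sign of $u$ to handle the blow-up of $\nu e^{-u/\sqrt{T}}$. If anything, you are a bit more careful than the paper in explaining why the pointwise limit for $u<0$ is zero even though both arguments of $\fabeta$ are moving simultaneously, invoking the uniform (in $a$) Schwartz decay in $\beta$ that membership in $\caD(\gR,\FourierDR(\hgR))$ provides.
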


\begin{proof}
In the integral \eqref{I(t)}, we make the change of variables $x' = \sqrt{T} x$ and $y' = \tfrac{y - S x}{\sqrt{T}}$, so that
\begin{equation*}
I_\nu(t) = \tfrac{\sqrt{\pi}}{\sqrt{T}} \int_{\gR^2} \dd x' \dd y' \,\fabeta(\sqrt{T} y' + \tfrac{S}{\sqrt{T}} x',\nu e^{-x'/\sqrt{T}})\,e^{-x'^2}\,e^{-\tfrac{1}{4}y'^2}.
\end{equation*}
One has 
\begin{equation*}
\lim_{t \to 0+} \fabeta(\sqrt{T} y' + \tfrac{S}{\sqrt{T}} x',\nu e^{-x'/\sqrt{T}}) =
\begin{cases}
\fabeta(0,0) & \text{ if $x'>0$},\\
\lim_{\beta \to \nu \times \infty} \fabeta(0, \beta) = 0 & \text{ if $x' < 0$}.
\end{cases}
\end{equation*}
The function $\abs{\fabeta}$ is bounded, so that the modulus of the integrand is dominated by a constant times the two Gaussian functions, which is integrable, so that we can apply the dominated convergence theorem.
For $\nu=0$, we get 
\begin{equation*}
\lim_{t\to 0+} \sqrt{t}\, I_\nu(t) = \sqrt{\pi} \fabeta(0,0) \int_{-\infty}^{\infty} \dd x'\, e^{-x'^2}\, \int_{-\infty}^{\infty} \dd y' \,\,e^{-y'^2/4} = 2 \pi \sqrt{\pi} \fabeta(0,0).
\end{equation*}
For $\nu \in \{-,+\}$, we split the integral along $x'$ into $x'>0$ and $x'<0$, and we get
\begin{equation*}
\lim_{t\to 0+} \sqrt{t}\, I_\nu(t) = \sqrt{\pi} \fabeta(0,0) \int_{0}^{\infty} \dd x'\, e^{-x'^2}\, \int_{-\infty}^{\infty} \dd y' \,\,e^{-y'^2/4} = \pi \sqrt{\pi} \fabeta(0,0).
\end{equation*}
\end{proof}

\begin{lemma}
For any $f\in \algA$, there exists a constant $c(f)$ such that
\begin{align}
\label{tracesym}
\Tr\big(\pi(f^\ast)\,e^{-t\caD^2}\,\pi(f)\big) \leq c(f)\, \max(1,\tfrac{1}{\sqrt{t}})\,.
\end{align}
\end{lemma}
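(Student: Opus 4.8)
The plan is to reduce the left-hand side to the elementary heat traces $\Tr\big(\pi_\nu(g)\,e^{-tH}\big)$, which are already under control through Mehler's formula and Lemma~\ref{lem-Ifortsmall}, and then to read off the correct power of $t$. First I would record that $\pi(f^\ast)\,e^{-t\caD^2}\,\pi(f)$ is trace-class and non-negative: it equals $A^\ast A$ with $A\vc e^{-t\caD^2/2}\pi(f)$, and $e^{-t\caD^2}$ is trace-class by \eqref{eq-heattraceDirac} while $\pi(f),\pi(f^\ast)$ are bounded, so in particular ``$\leq$'' is the right inequality to aim at. Using the block-diagonal form of $\caD^2$ in \eqref{Dcarre}, the structure $\pi=\bigoplus_\nu \mathrm{Diag}(\pi_\nu,\pi_\nu)$, cyclicity of the trace, and the homomorphism identity $\pi_\nu(f)\pi_\nu(f^\ast)=\pi_\nu(f\ast f^\ast)$, one gets
\[
\Tr\big(\pi(f^\ast)\,e^{-t\caD^2}\,\pi(f)\big)=2\cosh(t)\sum_{\nu\in\{-,0,+\}}\Tr\big(\pi_\nu(g)\,e^{-tH}\big),\qquad g\vc f\ast f^\ast\in\algA .
\]

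Next I would use the heat-trace formula: writing $\galpbeta$ as the Fourier transform of $\gabeta$ in the first slot and performing the Gaussian integral in $v$ in \eqref{heattrace} turns that expression into $\Tr\big(\pi_\nu(g)\,e^{-tH}\big)=\tfrac1{2\pi\sqrt{\cosh 2t}}\,I_\nu(t)$, where $I_\nu(t)$ is the integral \eqref{I(t)} attached to $\gabeta$. Then I would run the same change of variables $x'=\sqrt{T(t)}\,x$, $y'=(y-S(t)x)/\sqrt{T(t)}$ as in the proof of Lemma~\ref{lem-Ifortsmall}, which brings $I_\nu(t)$ to the form $\tfrac{\sqrt\pi}{\sqrt{T(t)}}\int_{\gR^2}\gabeta\big(\cdots\big)\,e^{-x'^2}e^{-y'^2/4}\,\dd x'\dd y'$. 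Since $g\in\algA$ forces $\gabeta$ to be bounded, this gives $\abs*{I_\nu(t)}\leq 2\pi^{3/2}\norm*{\gabeta}_\infty/\sqrt{T(t)}$, and because $\cosh(2t)\,T(t)=\tfrac12\sinh(2t)$ one arrives at $\abs*{\Tr\big(\pi_\nu(g)\,e^{-tH}\big)}\leq \sqrt{2\pi}\,\norm*{\gabeta}_\infty/\sqrt{\sinh 2t}$.

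Finally, summing the three contributions and using $\tfrac{2\cosh t}{\sqrt{\sinh 2t}}=\sqrt{2\coth t}$, I would obtain $\Tr\big(\pi(f^\ast)\,e^{-t\caD^2}\,\pi(f)\big)\leq 6\sqrt{\pi}\,\norm*{\gabeta}_\infty\,\sqrt{\coth t}$; since $\coth t<1+\tfrac1t\leq 2\max(1,\tfrac1t)$ for every $t>0$, this yields \eqref{tracesym} with $c(f)\vc 6\sqrt{2\pi}\,\norm*{\widetilde{f\ast f^\ast}}_{\infty}$. The point worth stressing is not a technical hurdle but a choice of route: the naive bound $\Tr\big(\pi_\nu(g)\,e^{-tH}\big)\leq\norm*{\pi_\nu(g)}\,\Tr(e^{-tH})\sim\norm*{g}/(2t)$ only delivers an $\mathcal O(t^{-1})$ rate — the ``metric-dimension-two'' estimate — whereas \eqref{tracesym}, consistent with spectral dimension $1$, requires the sharper $\mathcal O(t^{-1/2})$. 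The missing half power is supplied precisely by the Gaussian factor $e^{-T(t)x^2}$ in $I_\nu(t)$, which damps the region $\beta=\nu e^{-x}\to 0$ responsible for the slow decay; hence the estimate must go through the explicit Mehler-kernel computation rather than through operator norms.
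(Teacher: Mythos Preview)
Your proof is correct and follows essentially the same route as the paper: reduce to $g=f\ast f^\ast$, use \eqref{heattrace}, Fourier-transform the first slot to land on the integral $I_\nu(t)$ of \eqref{I(t)}, and then estimate via the change of variables from Lemma~\ref{lem-Ifortsmall}. The only noteworthy difference is in the final step: the paper splits into a small-$t$ regime (invoking the limit in Lemma~\ref{lem-Ifortsmall}) and a large-$t$ regime (using the compact support of $\gabeta$ in the first variable), whereas you simply bound $\abs*{\gabeta}\leq\norm*{\gabeta}_\infty$ inside the Gaussian-weighted integral to get $\abs*{I_\nu(t)}\leq 2\pi^{3/2}\norm*{\gabeta}_\infty/\sqrt{T(t)}$ uniformly in $t$. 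This yields the closed expression $\sqrt{\coth t}$ and an explicit constant $c(f)=6\sqrt{2\pi}\,\norm*{\widetilde{f\ast f^\ast}}_\infty$, which is a bit cleaner than the paper's implicit constant obtained by patching the two regimes. Your closing remark about the naive operator-norm bound giving only $\mathcal O(t^{-1})$ matches the paper's Remark~\ref{remark-spectraldim}.
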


\begin{proof}
Since the function in (\ref{tracesym}) is continuous in $t>0$, and has a limit when multiplied by $\sqrt{t}$
at $t=0$, the result could be deduced from continuity. However, since we shall conjecture that similar result appears in a more general situation, we prefer to give a computable proof.
\\
The equation \eqref{heattrace} entails
\begin{align}
\Tr\big(\pi(f^\ast)\,e^{-t\caD^2}\,\pi(f)\big) 
&=\Tr\big(\pi(f\ast f^\ast)\,e^{-t\caD^2}\big)
\nonumber
\\
&\hspace{-1.5cm}= \tfrac{2\cosh t}{2\pi \sqrt{\cosh 2t}} \, \sum_\nu \int_{\gR^2} \dd v \dd x\, (\falpbeta\astalpbeta\falpbeta^\ast)(x,\nu e^{-x}) \,e^{-T (x^2+y^2)-iS xv} 
\label{eq-fabfabexpD2}
\\
&\hspace{-1.5cm}= \tfrac{\cosh t}{\pi \sqrt{\cosh 2t}} \sum_\nu \tfrac{\sqrt{\pi}}{\sqrt{T}} \int_{\gR^2} \dd x \dd y \,(\fabeta\astabeta \fabeta^\ast)(y,\nu e^{-x}) \,e^{-T\, x^2 -\tfrac{1}{4T}(Sx-y)^2}
\label{eq-fabetafabetaexpD2}
\\
& \hspace{-1.5cm}\leq \tfrac{\sqrt{t}} {\sqrt{t}}\left[ \tfrac{\cosh t}{\pi \sqrt{\cosh 2t}} \sum_\nu \sqrt{\tfrac{\pi}{T(t)}} \int_{\gR^2} \dd x \dd y\, \abs*{\fabeta\astabeta \fabeta^\ast}(y,\nu e^{-x}) \,e^{-T(t)\, x^2 -\tfrac{1}{4T(t)}(S(t)x-y)^2} \right].
\label{bracket}
\end{align}
To prove \eqref{eq-fabetafabetaexpD2}, notice that, with $g \vc  f \ast f^\ast$, the integral in $v$ in \eqref{eq-fabfabexpD2} is equal to
\begin{align}
\int_{\hgR} \dd v \, \galpbeta(v,\nu e^{-x}) \, e^{-Tv^2}\,e^{-\,i S \,xv}
&=\tfrac{\sqrt{\pi}S}{\sqrt{T}}\, \int_{\gR} \dd y\, \gabeta(Sy,\nu\,e^{-x})\,e^{-\tfrac{S^2}{4T}\,(x-y)^2} \label{change}
\\
&=\tfrac{\sqrt{\pi}}{\sqrt{T}} \int_{\gR} \dd y' \,\gabeta(y',\nu e^{-x})\,e^{-\tfrac{1}{4T}(Sx - y')^2}. \label{change1}
\end{align}
Actually the right hand side of \eqref{change} is equal to
$\tfrac{\sqrt{\pi}S}{\sqrt{T}} \,\tfrac{1}{2\pi}\,\int_{\gR\times\hgR} \dd y\,\dd v \,\galpbeta(v,\nu\,e^{-x})\,e^{-iSyv} \,e^{-\tfrac{S^2}{4T}(x-y)^2}$ 
and the integration in $y$ gives $\int_{\gR} \dd y\, e^{-iSyv} \,e^{-\tfrac{S^2}{4T}(x-y)^2}= 2\tfrac{\sqrt{\pi T}}{S} \,e^{-Tv^2}\,e^{-iSxv}$ which proves \eqref{change} and \eqref{change1} after a change of variable. 

Denote by $A(t)$ the expression in the bracket of \eqref{bracket}. When $t \to 0$, Lemma~\ref{lem-Ifortsmall} implies that $\sqrt{t}\, A(t)$ goes to a constant which depends on $\fabeta(0,0)$. Choosing a constant $c_1(f)$ sufficiently large, there exist $0 < t_0 <1$ such that for all $t \in (0,t_0)$, $\sqrt{t}\, A(t) \leq c_1(f)$.

The function $(x,y) \mapsto \abs*{\fabeta\astabeta \fabeta^\ast}(y,\nu e^{-x})$ is compactly supported in $y$, with support included in $[-m,m]$, and bounded in $x$ and $y$, so that there is a constant $c_2(f)$ (which also depends on $t_0$) such that, for all $t\geq t_0$, 
\begin{equation*}
A(t) \leq \tfrac{\cosh t}{\pi \sqrt{\cosh 2t}}  \tfrac{6m\sqrt{\pi}}{\sqrt{T(t)}}\, \norm{\fabeta}_\infty \int_{\gR} \dd x\, e^{-T(t) \,x^2} \leq c_2(f).
\end{equation*}
\\
With $c(f) = \max\left(\tfrac{c_1(f)}{\sqrt{t_0}}, c_2(f)\right)$, one gets \eqref{tracesym}.
\end{proof}

\begin{lemma}
\label{lem-normtracecommutatorexpD2}
There exists a constant $C$ such that for any $f\in\algA$, 
\begin{align}
\norm*{\left[\pi(f),e^{-t\caD^2}\right]}_1 \leq C \sqrt{t} \sum_k 
\norm*{\pi \left( \delta_k(f) \right) \,e^{-\frac{t}{4} \,\caD^2}}_1\,.
\label{comm1}
\end{align}
\end{lemma}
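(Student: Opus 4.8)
The plan is to use Duhamel's formula together with the decomposition of $[\caD^2,\pi(f)]$ into first-order pieces. Since $\algA$ preserves the Schwartz space $\caS(\gR)$, a common core for $\caD^2$, one has $[\pi(f),e^{-t\caD^2}]=\int_0^t e^{-(t-s)\caD^2}\,[\caD^2,\pi(f)]\,e^{-s\caD^2}\,ds$, and the Leibniz rule gives $[\caD^2,\pi(f)]=\caD B+B\caD$ with $B\vc[\caD,\pi(f)]$, which by \eqref{[D,.]} is the \emph{bounded} operator $\bigoplus_\nu\gamma^k\pi_\nu(i\delta_k f)$. A naive estimate of this integral diverges: near $s=0$ one only has $\norm*{B\,e^{-s\caD^2}}_1\le\norm*{B}\,\Tr(e^{-s\caD^2})=3\norm*{B}\coth s\sim 3\norm*{B}/s$, so that $\int_0^t(t-s)^{-1/2}s^{-1}\,ds=\infty$. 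Hence I would first rewrite $e^{-t\caD^2}=e^{-\frac t2\caD^2}e^{-\frac t2\caD^2}$ and use $[\pi(f),e^{-t\caD^2}]=[\pi(f),e^{-\frac t2\caD^2}]\,e^{-\frac t2\caD^2}+e^{-\frac t2\caD^2}\,[\pi(f),e^{-\frac t2\caD^2}]$; applying Duhamel to each bracket and using that $\caD$ commutes with $e^{-\frac t2\caD^2}$, this writes $[\pi(f),e^{-t\caD^2}]$ as a sum of two integrals over $\sigma\in[0,\tfrac t2]$ of $e^{-(\frac t2-\sigma)\caD^2}(\caD B+B\caD)e^{-(\sigma+\frac t2)\caD^2}$ and of $e^{-(t-\sigma)\caD^2}(\caD B+B\caD)e^{-\sigma\caD^2}$, in which one of the two heat factors always carries a ``time'' $\ge\tfrac t2\ge\tfrac t4$.

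The estimates then rest on three elementary facts: $\norm*{e^{-u\caD^2}}\le1$ (as $\caD^2\ge0$); $\norm*{\caD\,e^{-u\caD^2}}\le(2eu)^{-1/2}$ (from $\sup_\lambda|\lambda|e^{-u\lambda^2}$); and $\norm*{B\,e^{-u\caD^2}}_1\le\sum_k\norm*{\pi(\delta_k f)\,e^{-u\caD^2}}_1$, which follows from \eqref{[D,.]} because the Pauli matrices are unitary and $e^{-u\caD^2}$ is block-scalar for the splitting $L^2(\gR)\otimes\gC^2$, namely $\caD^2=H\otimes\bbbone_2\pm\bbbone\otimes\mathrm{diag}(1,-1)$. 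In each of the four resulting terms I would extract a central factor $e^{-\frac t4\caD^2}$ out of the long heat factor, reducing the trace norm to a bounded multiple of $\norm*{B\,e^{-\frac t4\caD^2}}_1\le\sum_k\norm*{\pi(\delta_k f)\,e^{-\frac t4\caD^2}}_1$; and I would dispose of $\caD$ either by absorbing it into the short heat factor — costing $(2e\sigma)^{-1/2}$, which is integrable since $\int_0^{t/2}\sigma^{-1/2}\,d\sigma=\sqrt{2t}$ — or, when $\caD$ is separated from the short factor by $B$, by first commuting $\caD$ past the long heat factor and absorbing it into the extracted $e^{-\frac t4\caD^2}$, costing $(et/2)^{-1/2}$ offset by $\int_0^{t/2}d\sigma=\tfrac t2$. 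In every case one is left with $\sqrt t$ times $\sum_k\norm*{\pi(\delta_k f)\,e^{-\frac t4\caD^2}}_1$, which is the claimed bound. The terms in which the derivation ends up to the \emph{left} of the heat factor are handled by passing to adjoints, using $\norm*{T}_1=\norm*{T^\ast}_1$, $\caD=\caD^\ast$, $\big(e^{-u\caD^2}\big)^\ast=e^{-u\caD^2}$, $[\caD,\pi(f)]^\ast=-[\caD,\pi(f^\ast)]$ and the fact that $\delta_k$ is a real derivation, which reproduces a bound of the same shape.

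The only genuinely delicate point — the hard part — is the bookkeeping in the previous paragraph: one must keep an honest $e^{-\frac t4\caD^2}$ glued to $\pi(\delta_k f)$ in order to produce the right-hand side of the lemma, while simultaneously getting rid of $\caD$ at a cost that is at worst an integrable negative power of the complementary (possibly small) time, in order to produce a finite $\sqrt t$. It is this tension that forces the preliminary rearrangement $e^{-t\caD^2}=\big(e^{-\frac t2\caD^2}\big)^2$; once the time intervals are balanced in this way, the rest is repeated use of the Schatten–Hölder inequality $\norm*{XYZ}_1\le\norm*{X}\,\norm*{Y}_1\,\norm*{Z}$ and of the semigroup law $e^{-(u+v)\caD^2}=e^{-u\caD^2}e^{-v\caD^2}$.
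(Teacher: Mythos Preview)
Your proof is correct and follows essentially the same route as the paper: Duhamel's formula, the decomposition $[\caD^2,\pi(f)]=\caD B+B\caD$ with $B=[\caD,\pi(f)]$ bounded, the estimate $\norm{\caD\,e^{-u\caD^2}}\le(2eu)^{-1/2}$, and the extraction of a factor $e^{-\frac t4\caD^2}$ from the longer heat kernel via the semigroup law. The paper merely organizes the bookkeeping differently --- it applies Duhamel to $[\pi(f),e^{-t\caD^2}]$ directly, halves each of the two resulting heat factors, and splits the $s$-integral at $s=\tfrac12$ instead of writing $e^{-t\caD^2}=(e^{-\frac t2\caD^2})^2$ first --- but the two arrangements are equivalent and yield the same constant $C=12/\sqrt{2e}$.
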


\begin{proof}
Since $[e^A,B]=\int_0^1 \dd s\, e^{sA}\,[A,B]\,e^{(1-s)A}$, one has 
\begin{equation*}
[\pi(f), e^{-t\caD^2}]=-t\int_0^1 \dd s\, e^{-ts\caD^2} \,[\caD^2,\pi(f)]\, e^{-t(1-s)\caD^2}.
\end{equation*}
 
Moreover, $e^{-t\caD^2}$ is trace-class, so writing the commutator 
$[\caD^2,\pi(f)]= \caD [\caD,\pi(f)] + [\caD,\pi(f)] \caD$, we get
\begin{align*}
\norm*{[\pi(f),e^{-t\caD^2}]}_1 \leq t \int_0^1 \dd s\, \norm*{e^{-ts/2 \,\caD^2} \caD} \, \norm*{e^{-ts/2\,\caD^2}[\caD,\pi(f)]\,e^{-t(1-s)/2\,\caD^2}}_1\, \norm*{e^{-t(1-s)/2\,\caD^2}} \\
+ \norm*{e^{-ts/2\,\caD^2}}\, \norm*{e^{-ts/2\,\caD^2}[\caD,\pi(f)]\,e^{-t(1-s)/2\,\caD^2}}_1\, \norm*{\caD e^{-t(1-s)/2\,\caD^2}}.
\end{align*}
Since $\norm*{\caD\,e^{-t\caD^2}}=\underset{x\in \gR}{\sup} \,\,\abs*{xe^{-tx^2}}=\tfrac{1}{\sqrt{2et}}= \vcentcolon c\tfrac{1}{\sqrt{t}}\,$, 
\begin{align*}
\norm*{[\pi(f),e^{-t\caD^2}]}_1 & \leq c\, t \int_0^1 \dd s \, \norm*{e^{-ts/2\,\caD^2} [\caD,\pi(f)]\,e^{-t(1-s)/2\,\caD^2}}_1 
(\tfrac{1}{\sqrt{ts}}+\tfrac{1}{\sqrt{t(1-s)}})\\
&=c\,\sqrt{t} \int_0^1 \dd s\,(\tfrac{1}{\sqrt{s}}+\tfrac{1}{\sqrt{1-s}}) \norm*{e^{-ts/2\,\caD^2}[\caD,\pi(f)]\,e^{-t(1-s)/2\,\caD^2}}_1\,.
\end{align*}
We claim that
\begin{align*}
\norm*{e^{-ts/2\,\caD^2}[\caD,\pi(f)]\,e^{-t(1-s)/2\,\caD^2}}_1 \leq 
 \begin{cases}
\sum_k \norm*{\pi(\delta_k(f))\,e^{-t/4\,\caD^2}}_1 &\text{ if $s \in [0,\frac{1}{2}]$},\\[6pt]
\sum_k \norm*{e^{-t/4\,\caD^2}\,\pi(\delta_k(f))}_1 &\text{ if $s \in [\frac{1}{2},1]$}.
\end{cases}
\end{align*}
This yields the result with 
$C\vc 3c\int_0^1 ds\,(\tfrac{1}{\sqrt{s}}+\tfrac{1}{\sqrt{1-s}})
=\tfrac{12}{\sqrt{2 e}}$. \\
Proof of the claim: for $\nu \in \{-,0,+\}$, let $B_\nu \vc e^{-ts/2 \,(\gamma^k \partial_k)^2}\,\gamma^p \pi_\nu(\delta_p(f)) \,e^{-t(1-s)/2 \,(\gamma^k \partial_k)^2}$, then
\begin{align*}
\norm*{\text{Diag}(B_-,B_+,B_0)}_1 & = 
\sum_\nu \norm*{B_\nu}_1  \leq \sum_\nu \sum_p \norm*{\gamma^p}\,
\norm*{e^{-ts/2 (\gamma^k \partial_k)^2}\,\pi_\nu\big(\delta_p(f)\big)\,
      e^{-t(1-s)/2 (\gamma^k \partial_k)^2}}_1\\
&=\sum_\nu \sum_p \norm*{e^{-ts/2\,\caD^2}\,
  \pi_\nu\big(\delta_k(f)\big)\, e^{-t(1-s)/2\,\caD^2}}_1\\
&\leq  \sum_p \norm*{e^{-ts/2 \,\caD^2}}\,
\norm*{\pi\big(\delta_p(f)\big)\,e^{-t/4\,\caD^2}}_1 \quad \text{for } 
s\in [0,\frac{1}{2}]
\end{align*}
since $\norm*{XY}_1\leq \norm*{XZ}_1$ for $0\leq Y\leq Z$. Case 
$s\in [\frac{1}{2},1]$ is similar.
\end{proof}

\begin{lemma}
\label{lem-normtracepiexpD2}
For any $f\in \algA$, there exists a constant $C(f)$ such that 
\begin{equation*}
\norm*{\pi(f)\,e^{-t\caD^2}}_1\leq C(f) \max \left(\tfrac{1}{\sqrt{t}}, \sqrt{t}\right).
\end{equation*}
\end{lemma}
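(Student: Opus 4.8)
The plan is to reduce, via the Dixmier--Malliavin factorization (Proposition~\ref{eclatement}), linearity and the triangle inequality for $\norm*{\cdot}_1$, to the case $f=g\ast h$ with $g,h\in\algA$. For such an $f$ one has $\pi(f)=\pi(g)\pi(h)$, and I would write
\begin{equation*}
\pi(f)\,e^{-t\caD^2} \;=\; \pi(g)\,e^{-t\caD^2}\,\pi(h)\;+\;\pi(g)\,\bigl[\pi(h),e^{-t\caD^2}\bigr],
\end{equation*}
and estimate the two summands separately. The reason for peeling off a commutator is that the immediate bound $\norm*{\pi(f)e^{-t\caD^2}}_1\le\norm*{\pi(f)e^{-t\caD^2/2}}_2\,\norm*{e^{-t\caD^2/2}}_2$ is too lossy near $t=0$: $\norm*{e^{-t\caD^2/2}}_2^2=\Tr(e^{-t\caD^2})=3\coth t$ is of order $t^{-1}$ by \eqref{eq-heattraceDirac}, while \eqref{tracesym} only controls $\norm*{\pi(f)e^{-t\caD^2/2}}_2^2$ by $O(t^{-1/2})$, so that route would give only $O(t^{-3/4})$. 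Distributing the heat semigroup symmetrically between $\pi(g)$ and $\pi(h)$ is what restores the exponent $1/2$.

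For the first summand I would apply $\norm*{AB}_1\le\norm*{A}_2\norm*{B}_2$ with $A=\pi(g)e^{-t\caD^2/2}$ and $B=e^{-t\caD^2/2}\pi(h)$, both Hilbert--Schmidt since $e^{-t\caD^2/2}$ is. Using cyclicity of the trace, $\norm*{e^{-t\caD^2/2}\pi(h)}_2^2=\Tr\bigl(\pi(h^\ast)e^{-t\caD^2}\pi(h)\bigr)$ and $\norm*{\pi(g)e^{-t\caD^2/2}}_2^2=\Tr\bigl(\pi(g)e^{-t\caD^2}\pi(g^\ast)\bigr)$, and \eqref{tracesym} applied to $h$, respectively to $g^\ast$, bounds each by a constant times $\max(1,t^{-1/2})$; hence $\norm*{\pi(g)e^{-t\caD^2}\pi(h)}_1$ is $O\bigl(\max(1,t^{-1/2})\bigr)$, which is dominated by $\max(1/\sqrt t,\sqrt t)$.

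For the second summand I would use $\norm*{\pi(g)[\pi(h),e^{-t\caD^2}]}_1\le\norm*{\pi(g)}\,\norm*{[\pi(h),e^{-t\caD^2}]}_1$ together with Lemma~\ref{lem-normtracecommutatorexpD2}, giving $\norm*{[\pi(h),e^{-t\caD^2}]}_1\le C\sqrt t\sum_k\norm*{\pi(\delta_k h)\,e^{-\frac t4\caD^2}}_1$. Here even the crude estimate $\norm*{\pi(\delta_k h)\,e^{-\frac t4\caD^2}}_1\le\norm*{\pi(\delta_k h)}\,\Tr(e^{-\frac t4\caD^2})=3\norm*{\pi(\delta_k h)}\coth(t/4)=O\bigl(\max(1,t^{-1})\bigr)$ suffices, since after multiplication by $\sqrt t$ it contributes $O\bigl(\max(\sqrt t,t^{-1/2})\bigr)$. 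Adding the two contributions yields $\norm*{\pi(f)e^{-t\caD^2}}_1\le C(f)\max(1/\sqrt t,\sqrt t)$, with $C(f)$ obtained by collecting the factorization constants of $f$ and the operator norms $\norm*{\pi(\cdot)}$ of the finitely many pieces. I expect the only genuine subtlety to be exactly this bookkeeping — producing the singular exponent $1/2$ rather than the $3/4$ of naive Cauchy--Schwarz at $t=0$, which hinges on matching the $\sqrt t$ gain of Lemma~\ref{lem-normtracecommutatorexpD2} against the $t^{-1}$ blow-up of $\Tr(e^{-t\caD^2})$ and on splitting the heat kernel evenly between the two algebra factors furnished by Proposition~\ref{eclatement}. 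The regime $t\to\infty$ is harmless: $e^{-t\caD^2}$ decays exponentially, so $\max(1/\sqrt t,\sqrt t)$ is a gross overestimate there.
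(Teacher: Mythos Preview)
Your proposal is correct and follows essentially the same approach as the paper: reduce via Proposition~\ref{eclatement} to $f=g\ast h$, split $\pi(g\ast h)e^{-t\caD^2}=\pi(g)e^{-t\caD^2}\pi(h)+\pi(g)[\pi(h),e^{-t\caD^2}]$, bound the first term by \eqref{tracesym} and Cauchy--Schwarz for $\norm{\cdot}_1$, and bound the second via Lemma~\ref{lem-normtracecommutatorexpD2} together with the crude estimate $\norm{\pi(\delta_k h)e^{-t\caD^2/4}}_1\le\norm{\pi(\delta_k h)}\Tr(e^{-t\caD^2/4})$. Your explanation of why the naive Hilbert--Schmidt splitting gives only $t^{-3/4}$ is a nice addition not spelled out in the paper.
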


\begin{proof}
By Proposition~\ref{eclatement}, we may assume that $f=g\ast h$.\\ 
Since $\pi(f)\,e^{-t\caD^2}=\pi(g)\,e^{-t\caD^2}\,\pi(h)+\pi(g)[\pi(h),e^{-t\caD^2}]$, we get by \eqref{tracesym}, \eqref{comm1} and \eqref{eq-heattraceDirac},
\begin{align}
\norm*{\pi(f)\,e^{-t\caD^2}}_1 
& \leq \norm*{\pi(g)\,e^{-t/2\,\caD^2}}_2\,\norm*{e^{-t/2\,\caD^2}\,\pi(h)}_2 + \norm*{\pi(g)\,[\pi(h),e^{-t\caD^2}]}_1 \nonumber
\\
& \leq \sqrt{c(g)\,c(h)} \,\max(1,\tfrac{1}{\sqrt{t}}) + \norm*{\pi(g)}\, C\sum_k \norm*{\delta_k(h)}\,\sqrt{t} \coth t/4\,.
\label{eq-majorationtracepikernel}
\end{align}
When $t\to 0$, one has $\sqrt{t} \coth t/4 \sim \tfrac{4}{\sqrt{t}}$, and when $t \to \infty$, one has $\sqrt{t} \coth t/4 \sim \sqrt{t}$. Collecting all these asymptotic behaviors, one gets the result.
\end{proof}

\begin{lemma}
\label{commut}
For any $f\in \algA$,
\begin{align}
\label{com}
[(1+\caD^2)^{-1/2},\pi(f)] \in \caL^1(\caH).
\end{align}
\end{lemma}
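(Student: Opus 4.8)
The plan is to realize the negative power through the subordination formula
\[
(1+\caD^2)^{-1/2} = \tfrac{1}{\sqrt{\pi}} \int_0^\infty \tfrac{\dd t}{\sqrt{t}}\, e^{-t}\, e^{-t\caD^2},
\]
which is legitimate by the spectral theorem since $1+\caD^2 \geq 1$, and then to commute it with $\pi(f)$, obtaining
\[
[(1+\caD^2)^{-1/2}, \pi(f)] = \tfrac{1}{\sqrt{\pi}} \int_0^\infty \tfrac{\dd t}{\sqrt{t}}\, e^{-t}\, [e^{-t\caD^2}, \pi(f)].
\]
The statement then follows once I show that the right-hand side is an absolutely convergent Bochner integral in the Banach space $\caL^1(\caH)$: since $\caL^1(\caH)$ is complete it is closed under the resulting limit, and the interchange of commutator and integral is retroactively justified.

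To produce an integrable majorant for $t \mapsto t^{-1/2} e^{-t} \norm*{[e^{-t\caD^2}, \pi(f)]}_1$, I combine the two trace-norm estimates already in hand. Lemma~\ref{lem-normtracecommutatorexpD2} gives
\[
\norm*{[\pi(f), e^{-t\caD^2}]}_1 \leq C \sqrt{t} \sum_k \norm*{\pi(\delta_k f)\, e^{-\tfrac{t}{4}\caD^2}}_1 ,
\]
and, the $\delta_k$ being derivations of $\algA$, each $\delta_k f$ again lies in $\algA$, so Lemma~\ref{lem-normtracepiexpD2} applied with $t$ replaced by $t/4$ yields $\norm*{\pi(\delta_k f)\, e^{-\tfrac{t}{4}\caD^2}}_1 \leq C(\delta_k f)\, \max\!\big(2 t^{-1/2}, \tfrac12 t^{1/2}\big)$. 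Multiplying, $\norm*{[e^{-t\caD^2}, \pi(f)]}_1$ is $O(1)$ as $t \to 0^+$ and $O(t)$ as $t \to +\infty$. Hence the integrand $t^{-1/2} e^{-t} \norm*{[e^{-t\caD^2}, \pi(f)]}_1$ is $O(t^{-1/2})$ near $0$, integrable on $(0,1]$, and $O(t^{1/2} e^{-t})$ near infinity, integrable on $[1,\infty)$. The Bochner integral therefore converges in $\caL^1(\caH)$, which proves \eqref{com}.

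The only mildly delicate point — and the one I expect to be the main obstacle — is the rigorous passage from the two displayed operator identities to this Bochner integral. I would first pair with vectors $\xi,\eta$ in the domain of $\caD$: using that $e^{-t\caD^2}$ is trace-class (cf.~\eqref{eq-heattraceDirac}) together with the estimate above, the scalar function $t \mapsto \scalprod{\eta}{[e^{-t\caD^2}, \pi(f)]\xi}$ is measurable and has an integrable dominant, so Fubini gives weak convergence of the integral to $\scalprod{\eta}{[(1+\caD^2)^{-1/2}, \pi(f)]\xi}$; the uniform $\caL^1$-bound then upgrades this to convergence in trace norm. No estimate beyond Lemmas~\ref{lem-normtracecommutatorexpD2} and \ref{lem-normtracepiexpD2} and the trace-class property \eqref{eq-heattraceDirac} is required.
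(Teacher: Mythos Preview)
Your proof is correct and follows essentially the same route as the paper: the Laplace representation $(1+\caD^2)^{-1/2}=\Gamma(1/2)^{-1}\int_0^\infty t^{-1/2}e^{-t}e^{-t\caD^2}\,\dd t$, then Lemma~\ref{lem-normtracecommutatorexpD2} to bound $\norm*{[e^{-t\caD^2},\pi(f)]}_1$, then Lemma~\ref{lem-normtracepiexpD2} applied to $\delta_k f\in\algA$. The paper first invokes Proposition~\ref{eclatement} to write $f=g\ast h$ and reduces to estimating $\pi(g)\,[(1+\caD^2)^{-1/2},\pi(h)]$, but since both lemmas already hold for arbitrary elements of $\algA$ this factorisation is not needed; your direct argument is slightly cleaner, and your final paragraph on the Bochner-integral justification is more explicit than what the paper provides.
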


\begin{proof}
Assume again that $f=g\ast h$. Since 
\begin{equation*}
[(1+\caD^2)^{-1/2},\pi(f)] = \pi(g)\,[(1+\caD^2)^{-1/2},\pi(h)] - \big(\pi(h^\ast)\;[(1+\caD^2)^{-1/2},\pi(g^\ast)] \big)^\ast,
\end{equation*}
 it is sufficient to prove $\pi(g)\,[(1+\caD^2)^{-1/2},\pi(h)] \in \caL^1(\caH)$ for any $g,h\in \algA$. Thus using Lemma~\ref{lem-normtracecommutatorexpD2} and then Lemma~\ref{lem-normtracepiexpD2}:
\begin{align*}
\norm*{\pi(g)\,[(1+\caD^2)^{-1/2},\pi(h)]}_1 
& \leq \tfrac{1}{\Gamma(1/2)}\int_0^\infty \dd t\,t^{1/2-1}\, \norm*{\pi(g)} \,\norm*{[e^{-t(1+\caD^2)},\pi(h)]}_1 \\
& \leq \tfrac{C\, \norm*{\pi(g)}}{\Gamma(1/2)} \int_0^\infty \dd t\,t^{-1/2} \,e^{-t} \sqrt{t} \sum_k \norm*{\pi(\delta_k(h)\,e^{-t/4\,\caD^2}}_1 \\
& \leq \tfrac{C \,\norm*{\pi(g)}\,\sum_k C(\delta_k(h))}{\Gamma(1/2)}\,   \int_0^\infty \dd t\, e^{-t} \max \left(\tfrac{1}{\sqrt{t}}, \sqrt{t}\right)
 \\
& < \infty.
\end{align*}
\end{proof}

\begin{proposition}
\label{Dixtr}
$\pi(f) \,(1+\caD^2)^{-1/2} \in \caL^{1,\infty}(\caH)$ for any $f \in \algA.$
\end{proposition}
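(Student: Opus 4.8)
The plan is to reduce the claim, via the factorisation of elements of $\algA$ from Proposition~\ref{eclatement}, to showing that $\pi(a)(1+\caD^2)^{-1/4}$ lies in the weak Schatten ideal $\caL^{2,\infty}(\caH)$ for every $a\in\algA$, after first discarding a genuinely trace-class term. By Proposition~\ref{eclatement} and linearity we may assume $f=g\ast h$ with $g,h\in\algA$. Then
$$\pi(f)\,(1+\caD^2)^{-1/2}=\pi(g)\,(1+\caD^2)^{-1/2}\,\pi(h)+\pi(g)\,\big[\pi(h),(1+\caD^2)^{-1/2}\big],$$
and the second summand belongs to $\caL^1(\caH)\subseteq\caL^{1,\infty}(\caH)$ by Lemma~\ref{commut}; since $\caL^{1,\infty}(\caH)$ is a quasi-Banach operator ideal it remains to treat the first summand. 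As $(1+\caD^2)^{-1/4}(1+\caD^2)^{-1/4}=(1+\caD^2)^{-1/2}$ by functional calculus, it factors \emph{exactly} as
$$\pi(g)\,(1+\caD^2)^{-1/2}\,\pi(h)=\big(\pi(g)(1+\caD^2)^{-1/4}\big)\,\big((1+\caD^2)^{-1/4}\pi(h)\big),$$
so the Hölder-type inclusion $\caL^{2,\infty}(\caH)\cdot\caL^{2,\infty}(\caH)\subseteq\caL^{1,\infty}(\caH)$ reduces everything to showing $\pi(a)(1+\caD^2)^{-1/4}\in\caL^{2,\infty}(\caH)$ for all $a\in\algA$, the right-hand factor above being the adjoint of such an operator with $a=h^\ast$.

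For this sub-claim I would use that a bounded operator $X$ lies in $\caL^{2,\infty}(\caH)$ iff $X^\ast X\in\caL^{1,\infty}(\caH)$, and compute $\big(\pi(a)(1+\caD^2)^{-1/4}\big)^\ast\big(\pi(a)(1+\caD^2)^{-1/4}\big)=(1+\caD^2)^{-1/4}\,\pi(a^\ast\ast a)\,(1+\caD^2)^{-1/4}$, with $\pi(a^\ast\ast a)=\pi(a)^\ast\pi(a)\geq 0$. The estimate \eqref{tracesym} (applied with $a^\ast$ in place of $f$) together with cyclicity of the trace gives, for $t>0$,
$$\Tr\big(\pi(a^\ast\ast a)\,e^{-t(1+\caD^2)}\big)=e^{-t}\,\Tr\big(\pi(a)\,e^{-t\caD^2}\,\pi(a^\ast)\big)\leq e^{-t}\,c(a^\ast)\,\max\big(1,t^{-1/2}\big),$$
so this heat trace twisted by the positive element $a^\ast\ast a$ is $O(t^{-1/2})$ as $t\to0^+$ — i.e.\ it exhibits \emph{spectral dimension one}, whereas $\Tr(e^{-t\caD^2})$ is of order $t^{-1}$ by \eqref{eq-heattraceDirac}. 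Feeding this bound into the standard Tauberian/Cwikel-type argument that turns such a heat-trace estimate for a positive operator into a weak-Schatten membership (the mechanism used for harmonic-oscillator-type $\caD$ in \cite{GayrWulk11a, CareGayrRenn12a}) yields $(1+\caD^2)^{-1/4}\pi(a^\ast\ast a)(1+\caD^2)^{-1/4}\in\caL^{1,\infty}(\caH)$, which is the sub-claim. Alternatively one could skip the factorisation and apply the same principle directly to the bound $\norm*{\pi(f)e^{-t\caD^2}}_1=O(t^{-1/2})$ for small $t$ from Lemma~\ref{lem-normtracepiexpD2}, obtaining $\pi(f)(1+\caD^2)^{-1/2}\in\caL^{1,\infty}(\caH)$ in one stroke.

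The one genuinely non-formal step is this Tauberian passage: obtaining membership in the \emph{weak} ideal $\caL^{1,\infty}$ precisely at the critical exponent, rather than merely in $\caL^{1+\varepsilon}$ for every $\varepsilon>0$, uses the sharp behaviour of the heat trace as $t\to0^+$ and not just its finiteness — which is exactly why the delicate small-$t$ asymptotics were worked out in advance (Lemma~\ref{lem-Ifortsmall} and the bounds \eqref{tracesym}, \eqref{comm1}, Lemma~\ref{lem-normtracepiexpD2}). Everything else is bookkeeping with operator ideals: the reduction $f=g\ast h$, the trace-class splitting through Lemma~\ref{commut}, the exact $\caL^{2,\infty}$ factorisation, and the Hölder inclusion $\caL^{2,\infty}\cdot\caL^{2,\infty}\subseteq\caL^{1,\infty}$. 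Once $\pi(f)(1+\caD^2)^{-1/2}\in\caL^{1,\infty}(\caH)$ is established, the measurability of this operator and the value $8\,\tau(f)$ of its Dixmier trace asserted in Theorem~\ref{spectraldim} will follow by inserting the explicit limit of Lemma~\ref{lem-Ifortsmall} into the usual formula expressing the Dixmier trace through the $t\to0^+$ asymptotics of $\Tr\big(\pi(f)e^{-t\caD^2}\big)$.
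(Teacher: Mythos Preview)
Your proposal is correct and follows the same overall strategy as the paper: factor $f=g\ast h$, split off the trace-class commutator via Lemma~\ref{commut}, and then convert the heat-trace bound \eqref{tracesym} into weak-Schatten membership through the machinery of \cite{CareGayrRenn12a}. The only organisational difference is in how the symmetrised term $\pi(g)(1+\caD^2)^{-1/2}\pi(h)$ is handled: you factor it as a product of two $\caL^{2,\infty}$ operators and use the H\"older inclusion, whereas the paper instead shows directly that \eqref{tracesym} places $\pi(f)$ in the class $B_\zeta\big((1+\caD^2)^{-1/2}\big)$ of \cite[Definition~1]{CareGayrRenn12a} via the zeta-function bound $\sup_{1\le s\le 2}(s-1)\Tr\big(\pi(f^\ast)(1+\caD^2)^{-s/2}\pi(f)\big)\le\Gamma(\tfrac12)c(f)$, quotes \cite[Proposition~3.8]{CareGayrRenn12a} to obtain $\pi(f^\ast)(1+\caD^2)^{-1/2}\pi(f)\in\caL^{1,\infty}(\caH)$, and polarises. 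Your Tauberian step and the paper's invocation of \cite[Proposition~3.8]{CareGayrRenn12a} are the same black box in different wrappers; the paper's route has the mild advantage of making the external input fully explicit rather than leaving it as a named but unreferenced ``Tauberian/Cwikel-type argument.''
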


\begin{proof}
First assume that
\begin{align}
\label{sup}
\underset{1\leq s\leq 2}{\sup}\,(s-1)\,\Tr\big(\pi(f^\ast)\,(1+\caD^2)^{-s/2}\,\pi(f) \big) < \infty, \text{ for any } f\in \algA.
\end{align}
This shows that $\pi(f)$ is in the $\ast$-algebra $B_\zeta\big((1+\caD^2)^{-1/2}\big)$ of \cite[Definition 1]{CareGayrRenn12a}. 
Thus by \cite[Proposition 3.8]{CareGayrRenn12a}, $\pi(f^\ast)\,(1+\caD^2)^{-s/2}\,\pi(f)\in \caL^{1,\infty}(\caH)$ and by 
polarization, $\pi(f)\,(1+\caD^2)^{-s/2}\,\pi(g)\in \caL^{1,\infty}(\caH)$ for any $f,g \in \algA$.
\\
If $f=g\ast h$, the relation 
$\pi(f)\,(1+\caD^2)^{-1/2}=\pi(g)\,(1+\caD^2)^{-1/2}\,\pi(h) - \pi(g)\,[(1+\caD^2)^{-1/2},\,\pi(h)]$ yields the result because 
by \eqref{com}, the last commutator is in $\caL^{1,\infty}(\caH)$.\\
Let us now prove \eqref{sup} using \eqref{tracesym}. We get for $1 \leq s \leq 2$,
\begin{align}
\Tr\big(\pi(f^*)\,(1+\caD^2)^{-s/2}\,\pi(f) \big)
&=\tfrac{1}{\Gamma(s/2)} \int_0^\infty \dd t\, t^{s/2-1}\,e^{-t} \Tr\big(\pi(f^\ast)\,e^{-tD^2}\,\pi(f) \big)
\nonumber
\\
& \leq \tfrac{c(f)}{\Gamma(s/2)} \int_0^\infty \dd t\, t^{s/2-1} \,e^{-t}\,\max(1,\tfrac{1}{\sqrt{t}}) \leq c(f) \tfrac{\Gamma((s-1)/2)}{\Gamma(s/2)}
\label{eq-inequalitygammafunction}
\end{align}
since for any $s \geq 1$, $\int_0^\infty \dd t\, t^{s/2-1} \,e^{-t} = \Gamma(\tfrac{s}{2})$ and $\int_0^\infty \dd t\, t^{s/2-1}  \,e^{-t}\,\tfrac{1}{\sqrt{t}} = \Gamma(\tfrac{s-1}{2})$. Thus the left hand side of \eqref{sup} is bounded by $\Gamma(\frac{1}{2})\,c(f)$.
\end{proof}

\begin{proof}[Proof of Theorem \ref{spectraldim}]
Thanks to Proposition~\ref{Dixtr}, we first claim that the computation of the Dixmier trace of $\pi(f)(1+\caD^2)^{-d/2}$ can be obtained by
\begin{align}
\Trw\big[\pi(f)\,(1+\caD^2)^{-1/2}\big]=\underset{s\downarrow 1}{\lim}\, (s-1)\,
\Tr\big[\pi(f)(1+\caD^2)^{-s/2}\big]. \label{Dix}
\end{align}

We shall prove it for $f=g\ast g^\ast$. This condition is not a restriction since any $f \in \algA$ is a finite sum of elements $g \ast h^\ast$ with $g, h \in \algA$ thanks to Proposition~\ref{eclatement} and \eqref{eq-polarization}.

Actually, for any $g \in \algA$, $a=\pi(g \ast g^\ast)\geq 0$ is such that $a^{1-1/2}=\vert \pi(g^*)\vert \in B_\zeta\big((1+\caD^2)^{-1/2})$, as seen in the proof of Proposition~\ref{Dixtr} since $a\in B_\zeta$ yields $\vert a^* \vert \in B_\zeta$ by \cite[Lemma 3.7, iii)]{CareGayrRenn12a}. Moreover, $[a^{1/2},\big((1+\caD^2)^{-1/2})]\in \caL^1(\caH) \subset \mathcal{Z}_1^0$, by Lemma~\ref{commut}. We can now apply \cite[Proposition 4.12]{CareGayrRenn12a} to prove 
\eqref{Dix}. \\
Using \eqref{heattrace}, \eqref{I(t)} and \eqref{change1},
\begin{align*}
\Tr\big[\pi(f)(1+\caD^2)^{-s/2}\big]
& =\tfrac{1}{\Gamma(s/2)}  \int_0^\infty \dd t \,t^{\tfrac{s}{2} -1}\,\sum_\nu2\cosh(t)\,\Tr\big(\pi_\nu(f)\,e^{-tH}\big) 
\\
& =\tfrac{1}{\Gamma(s/2)} \int_0^\infty \dd t \,t^{\tfrac{s}{2} -1}\,t^{-1/2}\,e^{-t}\,F(t)
\end{align*}
where
\begin{align*}
F(t) \vc \sum_\nu \tfrac{\cosh(t)\,t^{1/2}\,e^t}{\pi \sqrt{\cosh 2t}} I_\nu(t). \label{F(t)}
\end{align*}

The main interest in this formula comes from Lemma~\ref{F(0)}, so we want to know the behavior of $F(t)$ when $t \to 0$:
\begin{equation}
F(t) \underset{t\downarrow 0}{\sim} \tfrac{1}{\pi}\sum_\nu\,  \underset{t\downarrow 0}{\lim}\,\,t^{\frac{1}{2}} I_\nu(t).
\end{equation}
Lemma~\ref{lem-Ifortsmall} shows that the behavior around $t=0$ of $I_\nu(t)$ is in $\tfrac{1}{\sqrt{t}}$, so gathering \eqref{Dix} and Lemmas~\ref{F(0)}, \ref{lem-Ifortsmall}, we obtain that $\Trw\big[\pi(f)\,(1+\caD^2)^{\frac{s}{2}}\big]$ is non-zero for $s=1$ and takes the claimed value. The last equality of the theorem is \eqref{eq-tau(f*f)}.
\end{proof}

\begin{lemma}
\label{F(0)}
If $F$ is an analytic function in $t$,
\begin{align*}
\underset{s\downarrow 1}{\lim}\, \tfrac{s-1}{\Gamma(ds/2)}\int_0^\infty \dd t \,t^{\tfrac{ds}{2} -1}\,t^{-d/2}\,e^{-t}\,F(t)
= \tfrac{2}{d\,\Gamma(d/2)}\,F(0).
\end{align*}
\end{lemma}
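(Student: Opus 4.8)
The plan is to collapse the $s$-dependence of the exponent into a single small parameter and then isolate the pole at $s=1$ by subtracting off $F(0)$.

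First I would substitute $\epsilon\vc\tfrac{d}{2}(s-1)$, so that $s\downarrow 1\iff\epsilon\downarrow 0$, $s-1=\tfrac{2\epsilon}{d}$ and $\tfrac{ds}{2}=\tfrac{d}{2}+\epsilon$; then the exponent $\tfrac{ds}{2}-1-\tfrac{d}{2}$ is exactly $\epsilon-1$, and the quantity to be evaluated becomes
\begin{equation*}
\frac{2\epsilon/d}{\Gamma(d/2+\epsilon)}\int_0^\infty \dd t\, t^{\epsilon-1}\,e^{-t}\,F(t).
\end{equation*}
Next I would write $F(t)=F(0)+\bigl(F(t)-F(0)\bigr)$ inside the integral. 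The constant part contributes $F(0)\int_0^\infty t^{\epsilon-1}e^{-t}\,\dd t=F(0)\,\Gamma(\epsilon)$, and since $\epsilon\Gamma(\epsilon)=\Gamma(1+\epsilon)$ is continuous at $\epsilon=0$ with value $1$, the associated term is $\tfrac{2}{d}\,\tfrac{\Gamma(1+\epsilon)}{\Gamma(d/2+\epsilon)}\,F(0)$, which converges to $\tfrac{2}{d\,\Gamma(d/2)}F(0)$ — precisely the asserted limit. It therefore remains to show that the remainder $\tfrac{2\epsilon/d}{\Gamma(d/2+\epsilon)}\,J(\epsilon)$, with $J(\epsilon)\vc\int_0^\infty t^{\epsilon-1}e^{-t}\bigl(F(t)-F(0)\bigr)\,\dd t$, tends to $0$, and for this it suffices that $J(\epsilon)$ stays bounded as $\epsilon\downarrow 0$.

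The main (and only) point requiring care is this uniform bound on $J(\epsilon)$. Because $F$ is analytic at $0$ one has $F(t)-F(0)=O(t)$ as $t\to0^+$, hence $\abs*{t^{\epsilon-1}e^{-t}\bigl(F(t)-F(0)\bigr)}$ is bounded on $(0,1]$ by a constant times $t^\epsilon e^{-t}\le e^{-t}$, uniformly for $\epsilon\in(0,1]$; on $[1,\infty)$ one has $t^{\epsilon-1}\le 1$, and in the situation of interest $F$ has at most polynomial growth (see the explicit form of $F$ in the proof of Theorem~\ref{spectraldim}, together with the boundedness of $\cosh t/\sqrt{\cosh 2t}$ and of $I_\nu(t)$ as $t\to\infty$), so the integrand is there dominated by $e^{-t}$ times a polynomial. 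Dominated convergence then gives $J(\epsilon)\to\int_0^\infty t^{-1}e^{-t}\bigl(F(t)-F(0)\bigr)\,\dd t$, a finite number, whence $\epsilon J(\epsilon)\to 0$. Adding the two contributions yields the claimed identity; everything past the domination step is routine Gamma-function bookkeeping.
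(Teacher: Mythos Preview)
Your argument is correct. The paper proceeds differently: it expands $F$ as a power series and checks the identity monomial by monomial, using
\[
\int_0^\infty t^{\frac{ds}{2}-1}\,t^{-d/2}\,e^{-t}\,t^n\,\dd t=\Gamma\Big(\tfrac{d(s-1)}{2}+n\Big),
\]
together with $\lim_{s\downarrow 1}(s-1)\,\Gamma\big(\tfrac{d(s-1)}{2}+n\big)=\tfrac{2}{d}\,\delta_{n,0}$, so that only the $n=0$ term survives. Your substitution $\epsilon=\tfrac{d}{2}(s-1)$ followed by the single subtraction $F(t)=F(0)+(F(t)-F(0))$ achieves the same thing in one step, with the $\Gamma(1+\epsilon)/\Gamma(d/2+\epsilon)$ factor producing the constant and the remainder $J(\epsilon)$ staying finite. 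The advantage of your route is that it makes the analytic requirement on $F$ transparent: you only use $F(t)-F(0)=O(t)$ near $0$ and an integrability condition at infinity, which you correctly verify for the specific $F$ arising in the proof of Theorem~\ref{spectraldim}. The paper's term-by-term argument tacitly needs the same growth control to justify interchanging the series with the integral and the limit, but does not spell it out.
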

\begin{proof}
This equality comes from
\begin{align*}
\underset{s\downarrow 1}{\lim}\, \tfrac{s-1}{\Gamma(ds/2)}\int_0^\infty \dd t \,t^{\tfrac{ds}{2} -1}\,t^{-d/2}\,e^{-t}\, t^{n}
= \tfrac{2}{d\,\Gamma(d/2)}\, \delta_{n,0}\,,\text{ for } n\in \gN,
\end{align*}
and this follows from $\Gamma(z)=\int_0^\infty \dd t\,t^{z-1}\,e^{-t}$ and 
$\underset{s\downarrow 1}{\lim}\,(s-1)\,\Gamma(\tfrac{d(s-1)+2n}{2})=\tfrac{2}{d}\,\delta_{n,0}$.
\end{proof}

\begin{proof}[Proof of Theorem \ref{dimensionspectrum}] 
Since all behavior in $t$ obtained in Lemmas~\ref{lem-Ifortsmall}--\ref{lem-normtracepiexpD2} are the same as in \cite{GayrWulk11a}, it is sufficient to follow the arguments of \cite[Theorem 17]{GayrWulk11a}.
\end{proof}

\begin{remark}
\label{remark-spectraldim}
The computation of the spectral dimension in Theorem~\ref{spectraldim} relies on the behavior at $t\to 0$ of $I_\nu(t)$ in \eqref{I(t)}, which is  $I_\nu(t) \,\underset{t\downarrow 0}{\sim}\, \tfrac{C}{\sqrt{t}}$. This behavior is used to prove inequality~\eqref{tracesym}, then the right hand side of this inequality is used in the first term of \eqref{eq-majorationtracepikernel}, which proves Lemma~\ref{lem-normtracepiexpD2}, and it is used also in \eqref{eq-inequalitygammafunction}, which proves Proposition~\ref{Dixtr}, where the minimum value $d'=1$ for which the integral converges at $t\to 0$ is implicitly used. 

The behavior $I_\nu(t) \,\underset{t\downarrow 0}{\sim}\, \tfrac{C}{\sqrt{t}}$ depends explicitly on the presence of $\fabeta \in \algA$ in \eqref{I(t)}. Indeed, inserting  in \eqref{I(t)} instead an element from the multiplier algebra, like, for instance, $\fabeta(a,\beta) = \bbboneabeta(a,\beta) = \delta_0(a) \notin \algA$, yields  $I_\nu(t) = \tfrac{\sqrt{\pi}}{\sqrt{T}} \int \dd x \,e^{-U x^2} \,\underset{t\downarrow 0}{\sim}\, \tfrac{C'}{t}$. Using this asymptotic behavior in \eqref{tracesym} and \eqref{eq-majorationtracepikernel} would force the minimum value $d'=2$ for the convergence of \eqref{eq-inequalitygammafunction}.

What happens is, that for the integration around $t \sim 0$ in \eqref{eq-inequalitygammafunction}, the behavior of $I_\nu(t)$ is much better with $\fabeta \in \algA$ than with the identity operator or an element of the multiplier algebra. A function $\fabeta \in \algA$ is rapidly decreasing at infinity for the variable $\beta = \nu e^{-x}$, and this behavior at $x \to -\infty$ is reported, through the factor $e^{-T x^2}$ in the integrand, as a better behavior at $t\to 0$ of the integral.  

In \cite{GayrWulk11a}, the same argument can be used to explain the fact that the spectral dimension is half the metric dimension: see Lemmas~7, 10, 12 and Corollary~13 in \cite{GayrWulk11a}.
\end{remark}

\section{Conclusions}

In a compactified version of $G$, it has been proved in \cite[Theorem 5.2]{IochMassSchu11a} that there are no finitely-summable spectral triples when the representation $\pi$ is quasi-equivalent to the left regular one, while here, Theorem~\ref{spectraldim} breaks this no-go result. An essential difference is that the compactified version of $G$ considered in \cite{IochMassSchu11a} is an antiliminal group (its $C^\ast$-algebra is NGCR) while here the group $G$ is postliminal (Theorem~\ref{thm-postliminal}). Note also that while $\pi_+ \oplus\pi_-$ is quasi-equivalent to the left regular representation, $\pi_0$ is not.  Actually, each of three representations $\pi_\nu$ give rise to a spectral triple of dimension 1 since $\Trw[\pi_\nu(f)(1+\caD^2)^{-1/2}]$ is proportional to $\tau(f)$ for each $\nu \in\{-,0,+\}$. Thus, even if $\pi_0$ does not appear for instance in Plancherel's formula because $\{\pi_-,\,\pi_+\}$ is dense in $\widehat{G}$, this representation $\pi_0=\int_\gR^\oplus dp\,\pi_0^p$ plays the same role as the non-trivial representations $\pi_-$ or $\pi_+$ in the sense that, alone, $\pi_0$ produces a spectral dimension one.

The natural spectral triple on a $2$-dimensional noncommutative torus $\algA_\theta$ is composed of (\cite{Conn96a}, \cite[Section~12.3]{GracVariFigu01a}): the algebra constructed using the space $\caS(\gZ^2)$ of sequences of rapid decay, the Hilbert space $\caH_\tau \vc \ell^2(\gZ^2) \simeq L^2(\gT^2)$ obtained in the GNS construction for the canonical trace $\tau$ on the algebra, and the Dirac operator constructed using the two derivations $\delta_1, \delta_2$ naturally defined on the two unitary generators of the algebra. Using a similar technique as the one used in \ref{subsec-leftregrep}, the GNS representation decomposes as $\pi_\text{GNS} = \int_{\gS^1}^{\oplus} \dd v \, \pi_v$ along irreducible \emph{unitarily nonequivalent} representations $\pi_v$ on the same Hilbert space $L^2(\gS^1, \dd u)$. In this decomposition, $\delta_1$ (resp. $\delta_2$) is represented as a commutator with the derivation along the variable $u$ of the Hilbert space (resp. along the summation variable $v \in \gS^1$).

A similar decomposition can be done for the representation of the noncommutative torus on $L^2(\gR)$ \cite[Section~III.3]{Conn94a}, and it gives similar results: the representation decomposes along irreducible \emph{unitarily nonequivalent} representations on $L^2(\gS^1, \dd u)$, and the first order differential operators $\nabla_1, \nabla_2$ acting on $\caS(\gR) \subset L^2(\gR)$ are mapped to two covariant derivatives, one along the direction of $L^2(\gS^1, \dd u)$, and the other along the summation direction $v$.

In both cases, the derivations $\delta_1, \delta_2$ are represented as commutators with differential operators which act as derivatives in the two directions $u$ and $v$.

The regular representation decomposition given in section \ref{subsec-leftregrep} is quite different to the decompositions obtained for the noncommutative torus, because it takes place along the \emph{same} representation $\pi_-\oplus \pi_+$. As a consequence, any derivative along the composition parameter commutes with the algebra. This explain that the two derivations $\delta_1, \delta_2$ that we consider, can only be represented as commutators with operators along $\caH_-\oplus \caH_+$, as shown in section \ref{subsec-derivationsrepresentations}. The use of the full left regular representation to construct a spectral triple would be useless. Contrary to the noncommutative torus, as far as the action of the Dirac operator on the algebra is concerned, no information can be encoded into the summation parameter. 

We can conclude from Remark~\ref{remark-spectraldim} that the drop in spectral dimension compared to metric dimension is due to the combination of two facts concerning the spectral triple:
\begin{enumerate}[label=\arabic*)]
\item A non-compact noncommutative geometry, which requires to insert $\pi(a)$ in the computations of traces used to evaluate the spectral dimension, with $a \in \algA$ such that its kernel has some properties similar to the ones used in the proof of Lemma~\ref{lem-Ifortsmall}.

\item A harmonic-like Dirac operator $\caD$ (isospectral with the Hamiltonian of a harmonic oscillator): although we have a non-compact noncommutative space, this operator has discrete spectrum and the heat operator is trace-class.

\end{enumerate}
We conjecture that such a drop in spectral dimension can occur in other noncommutative geometries $(\algA, \caH, \caD)$ which share these features, as for instance in \cite{GayrWulk11a}.

There are numerous questions one can ask about the properties of the constructed spectral triple. First of all, it is interesting to ask whether it defines a nontrivial $K$-homology class and in which cyclic cohomology class appears the associated cyclic cocycle (if nontrivial). Finally, one may ask whether further conditions for spectral triples (like Hochschild cycle condition) might be satisfied.

\section*{Acknowledgments}

We thanks Victor Gayral for his generous help during the preparation of this work. B.~I. and T.~M. would like to thank the \emph{John Templeton Foundation} and the \emph{Copernicus Center for Interdisciplinary Studies} for their financial support during their stay in Krak\'ow.

\bibliography{kappaGen}

\begin{thebibliography}{10}

\bibitem{Agos07a}
A.~Agostini.
\newblock {$\kappa$}-{M}inkowski representations on {H}ilbert spaces.
\newblock {\em J. Math. Phys.}, 48(5):052305, 19, 2007.

\bibitem{AmelGubiMarc09a}
G.~Amelino-Camelia, G.~Gubitosi, A.~Marciano, P.~Martinetti, and F.~Mercati.
\newblock A no-pure-boost uncertainty principle from spacetime
  noncommutativity.
\newblock {\em Phys. Lett. B}, 671(2):298--302, 2009.

\bibitem{AmelMarcPran09a}
G.~Amelino-Camelia, A.~Marciano, and D.~Pranzetti.
\newblock On the {5D} differential calculus and translation transformations in
  {4D} $\kappa$-{M}inkowski noncommutative spacetime.
\newblock {\em Int. J. Mod. Phys. A}, 24(28n29):5445--5463, 2009.

\bibitem{BerlGetzVerg03a}
N.~Berline, E.~Getzler, and M.~Vergne.
\newblock {\em {Heat kernels and Dirac operators}}.
\newblock Grundlehren Text Editions. Springer Verlag, 2003.

\bibitem{Blac06a}
B.~Blackadar.
\newblock {\em Operator Algebras, Theory of {$C^\ast$}-Algebras and von
  {N}eumann Algebras}, volume 122 of {\em Encyclopaedia of Mathematical
  Sciences}.
\newblock Springer-Verlag, 2006.

\bibitem{Bris91a}
C.~Brislawn.
\newblock Traceable integral kernels on countably generated measure spaces.
\newblock {\em Pacific J. Math.}, 150(2):229--240, 1991.

\bibitem{CareGayrRenn12a}
A.~Carey, V.~Gayral, A.~Rennie, and F.~Sukochev.
\newblock Integration on locally compact noncommutative spaces.
\newblock {\em J. Funct. Anal.}, 263(2):383--414, 2012.

\bibitem{Choi12a}
Y.~Choi.
\newblock Directly finite algebras of pseudofunctions on locally compact
  groups.
\newblock arxiv 1205.4354.

\bibitem{Conn94a}
A.~Connes.
\newblock {\em Noncommutative Geometry}.
\newblock Academic Press, 1994.

\bibitem{Conn96a}
A.~Connes.
\newblock Gravity coupled with matter and the foundation of non-commutative
  geometry.
\newblock {\em Commun. Math. Phys.}, 182:155--176, 1996.

\bibitem{ConnMarc08a}
A.~Connes and M.~Marcolli.
\newblock {\em Noncommutative geometry, quantum fields and motives}, volume~55
  of {\em American Mathematical Society Colloquium Publications}.
\newblock American Mathematical Society, Providence, RI, 2008.

\bibitem{ConnMosc95a}
A.~Connes and H.~Moscovici.
\newblock The local index formula in noncommutative geometry.
\newblock {\em Geom. Funct. Anal.}, 5(2):174--243, 1995.

\bibitem{DabrPiac09a}
L.~Dabrowski and G.~Piacitelli.
\newblock {The $\kappa$-Minkowski Spacetime: Trace, Classical Limit and
  uncertainty Relations}.
\newblock In J.~Kowalski~Glikman, R.~Durka, and M.~Szczachor, editors, {\em AIP
  Conference Proceedings, no. 1196}, 2010.

\bibitem{DAnd06a}
F.~D'Andrea.
\newblock Spectral geometry of {$\kappa$}-{M}inkowski space.
\newblock {\em J. Math. Phys.}, 47(6):062105, 19, 2006.

\bibitem{DaszLukiWoro09a}
M.~Daszkiewicz, J.~Lukierski, and M.~Woronowicz.
\newblock $\kappa$-deformed oscillators, the choice of star product and free
  $\kappa$-deformed quantum fields.
\newblock {\em J. Phys. A}, 42(35):355201, 2009.

\bibitem{Diep96a}
D.~N. Diep.
\newblock {\em Non commutative geometry methods for group {$C^*$}-algebras}.
\newblock Longman Higher Education, 1996.

\bibitem{Dixm69a}
J.~Dixmier.
\newblock {\em Les ${C^\ast}$-alg{\`e}bres et leurs repr{\'e}sentations}.
\newblock Gauthier-Villars, 1969.

\bibitem{DixmMall78a}
J.~Dixmier and P.~Malliavin.
\newblock Factorisations de fonctions et de vecteurs ind{\'e}finiment
  diff{\'e}rentiables.
\newblock {\em Bull. Sci. Math. (2)}, 102(4):307--330, 1978.

\bibitem{DuflMoor76a}
M.~Duflo and C.~Moore.
\newblock On the regular representation of a nonunimodular locally compact
  group.
\newblock {\em J. Funct. Anal.}, 21(2):209--243, 1976.

\bibitem{DurhSita11a}
B.~Durhuus and A.~Sitarz.
\newblock Star product realizations of $\kappa$-{M}inkowski space.
\newblock arxiv 1104.0206.

\bibitem{El-H11a}
K.~El-Hussein.
\newblock On the left ideals of group algebra of the affine group.
\newblock {\em Int. Math. Forum}, 4:193--202, 2011.

\bibitem{Eyma64a}
P.~Eymard.
\newblock L'alg{\`e}bre de {F}ourier d'un groupe localement compact.
\newblock {\em Bull. Soc. Math. France}, 92:181--236, 1964.

\bibitem{EymaTerp79a}
P.~Eymard and M.~Terp.
\newblock {La transformation de Fourier et son inverse sur le groupe des $ax+b$
  d'un corps local}.
\newblock In P.~Eymard, R.~Takahashi, J.~Faraut, and G.~Schiffmann, editors,
  {\em Analyse Harmonique sur les Groupes de Lie II}, volume 739 of {\em
  Lecture Notes in Mathematics}, pages 207--248. Springer Berlin / Heidelberg,
  1979.

\bibitem{Fara12a}
A.~G. Farashahi.
\newblock A new approach to the {F}ourier analysis on semi-direct products of
  groups.
\newblock arxiv 1201.1179.

\bibitem{Fell62a}
J.~M.~G. Fell.
\newblock Weak containment and induced representations of groups.
\newblock {\em Canad. J. Math.}, 14:237--268, 1962.

\bibitem{FreiLivi05a}
L.~Freidel and E.~R. Livine.
\newblock Effective 3d quantum gravity and non-commutative quantum field
  theory.
\newblock {\em Phys. Rev. Lett.}, 96:221301, 2005.

\bibitem{GayrGracVari07a}
V.~Gayral, J.~M. Gracia-Bond{\'\i}a, and J.~C. Varilly.
\newblock Fourier analysis on the affine group, quantization and noncompact
  {C}onnes geometries.
\newblock {\em J. Noncommut. Geom.}, 2(2):215--261, 2008.

\bibitem{GayrWulk11a}
V.~Gayral and R.~Wulkenhaar.
\newblock Spectral geometry of the {M}oyal plane with harmonic propagation.
\newblock arxiv 1108.2184.

\bibitem{GelfNaim47a}
I.~M. Gelfand and M.~A. Na{\"\i}mark.
\newblock Unitary representations of the group of linear transformations of the
  straight line.
\newblock {\em Dokl. Akad. Nauk}, 55:567--570, 1947.

\bibitem{GracVariFigu01a}
J.~M. Gracia-Bond{\'\i}a, J.~C. V{{\'a}}rilly, and H.~Figueroa.
\newblock {\em Elements of noncommutative geometry}.
\newblock Birkh{\"a}user Advanced Texts. Birkh{\"a}user Boston Inc., Boston,
  MA, 2001.

\bibitem{Huyn82a}
T.~V. Huynh.
\newblock Invariant $*$-quantization associated with the affine group.
\newblock {\em J. Math. Phys.}, 23(6):1082--1087, 1982.

\bibitem{IochMassSchu11a}
B.~Iochum, T.~Masson, T.~Sch{\"u}cker, and A.~Sitarz.
\newblock Compact {$\kappa$}-deformation and spectral triples.
\newblock {\em Rep. Math. Phys.}, 68(1):37--64, 2011.

\bibitem{Khal74a}
I.~Khalil.
\newblock Sur l'analyse harmonique du groupe affine de la droite.
\newblock {\em Studia Math.}, 51:139--167, 1974.

\bibitem{KosiMaslLuki98a}
P.~Kosinski, P.~Maslanka, J.~Lukierski, and A.~Sitarz.
\newblock {Towards $\kappa$-deformed D = 4 relativistic field theory}.
\newblock {\em Czech. J. Phys.}, 48:1407--1414, 1998.

\bibitem{LukiNowiRueg92a}
J.~Lukierski, A.~Nowicki, and H.~Ruegg.
\newblock New quantum {P}oincar{\'e} algebra and {$\kappa$}-deformed field
  theory.
\newblock {\em Phys. Lett. B}, 293(3-4):344--352, 1992.

\bibitem{LukiRuegNowi91a}
J.~Lukierski, H.~Ruegg, A.~Nowicki, and V.~N. Tolstoy.
\newblock {$q$}-deformation of {P}oincar{\'e} algebra.
\newblock {\em Phys. Lett. B}, 264(3-4):331--338, 1991.

\bibitem{LukiRuegZakr95a}
J.~Lukierski, H.~Ruegg, and W.~Zakrzewski.
\newblock Classical and quantum mechanics of free $\kappa$-relativistic
  systems.
\newblock {\em Ann. Phys.}, 243(1):90--116, 1995.

\bibitem{Maji88a}
S.~Majid.
\newblock {Hopf algebras for physics at the Planck scale}.
\newblock {\em Class. Quantum Gravity}, 5:1587, 1988.

\bibitem{MajiRueg94a}
S.~Majid and H.~Ruegg.
\newblock {Bicrossproduct structure of $\kappa$-{P}oincar{\'e} group and
  noncommutative geometry}.
\newblock {\em Phys. Lett.}, B334:348--354, 1994.

\bibitem{Moua03a}
Z.~Mouayn.
\newblock Characterization of hyperbolic {L}andau states by coherent state
  transforms.
\newblock {\em J. Phys. A}, 36(29):8071, 2003.

\bibitem{Piec03a}
W.~Piechocki.
\newblock Quantization and spacetime topology.
\newblock {\em Class. Quantum Gravity}, 20(13):2491, 2003.

\bibitem{PodlWoro90a}
P.~Podle{\'s} and S.~L. Woronowicz.
\newblock Quantum deformation of {L}orentz group.
\newblock {\em Commun. Math. Phys.}, 130(2):381--431, 1990.

\bibitem{Rose76a}
J.~Rosenberg.
\newblock The {$C\sp*$}-algebras of some real and {$p$}-adic solvable groups.
\newblock {\em Pacific J. Math.}, 65(1):175--192, 1976.

\bibitem{Russ77a}
B.~Russo.
\newblock On the {H}ausdorff-{Y}oung theorem for integral operators.
\newblock {\em Pacific J. Math.}, 68(1):241--253, 1977.

\bibitem{Schm10a}
K.~Schm{\"u}dgen.
\newblock {Algebras of fractions and strict Positivstellens\"atze for
  $*$-algebras.}
\newblock {\em J. Reine Angew. Math.}, 647:57--86, 2010.

\bibitem{Sita95a}
A.~Sitarz.
\newblock Noncommutative differential calculus on the $\kappa$-{M}inkowski
  space.
\newblock {\em Phys. Lett. B}, 349:42--48, 1995.

\bibitem{Stri94a}
R.~S. Strichartz.
\newblock {\em A guide to distribution theory and {F}ourier transforms}.
\newblock Studies in Advanced Mathematics. CRC Press, Boca Raton, FL, 1994.

\bibitem{Sudo00a}
T.~Sudo.
\newblock Dimension theory of group {$C^\ast$}-algebras of connected {L}ie
  groups of type {I}.
\newblock {\em J. Math. Soc. Japan}, 52(3):583--590, 2000.

\bibitem{Will07a}
D.~Williams.
\newblock {\em Crossed Products of {$C^\ast$}-Algebras}, volume 134 of {\em
  Math. Surveys and Monographs}.
\newblock AMS, 2007.

\bibitem{WoroZakr94a}
S.~L. Woronowicz and S.~Zakrzewski.
\newblock Quantum deformations of the {L}orentz group. {T}he {H}opf
  $\ast$-algebra level.
\newblock {\em Comp. Math}, 90:211--243, 1994.

\bibitem{Zakr94a}
S.~Zakrzewski.
\newblock Quantum {P}oincar{\'e} group related to the $\kappa$-{P}oincar{\'e}
  algebra.
\newblock {\em J. Phys. A}, 27(6):2075, 1994.

\bibitem{Zeit10a}
A.~M. Zeitlin.
\newblock Unitary representations of a loop $ax+b$ group, {W}iener measure and
  {G}amma-function.
\newblock arxiv 1012.4826.

\bibitem{Zep75a}
D.~N. Zep.
\newblock Structure of the group {$C^\ast$}-algebra of the group of affine
  transformations of a straight line.
\newblock {\em Funct. Anal. Appl.}, 9:58--60, 1975.

\end{thebibliography}

\end{document}